\documentclass[12pt,letterpaper]{article}
\usepackage{amsmath}
\usepackage{graphicx}
\usepackage{enumerate}
\usepackage{natbib}
\usepackage{url}
\usepackage{booktabs} 
\usepackage{amssymb,amsfonts,amsthm}
\usepackage{graphicx,mathtools,bm,amsmath}
\usepackage{hhline} 
\usepackage{algpseudocode}
\usepackage{algorithm}

\addtolength{\oddsidemargin}{-.5in}%
\addtolength{\evensidemargin}{-1in}%
\addtolength{\textwidth}{1in}%
\addtolength{\textheight}{1.7in}%
\addtolength{\topmargin}{-1in}%

\newtheorem{proposition}{Proposition}
\newtheorem{lemma}{Lemma}
\theoremstyle{definition}

\DeclareMathOperator{\median}{\mbox{median}}
\DeclareMathOperator{\MD}{\mbox{MD}}
\DeclareMathOperator{\RD}{\mbox{RD}}
\newcommand{\Cov}{\mbox{Cov}}
\newcommand{\obj}{\mbox{Obj}}
\DeclareMathOperator*{\argmin}{\mbox{argmin}}
\DeclareMathOperator*{\argmax}{\mbox{argmax}}
\DeclareMathOperator*{\tr}{trace}
\DeclareMathOperator*{\diag}{diag}
\newcommand{\tEM}{{\mbox{\tiny EM}}}

\newcommand{\penalt}{q} 
\newcommand{\bpenalt}{\boldsymbol{\penalt}}
\newcommand{\eig}{\lambda} 
\newcommand{\lM}{\eig_{\mbox{\tiny max}}}
\newcommand{\lm}{\eig_{\mbox{\tiny min}}}
\newcommand{\dwi}{d^{(\bw_i)}}
\newcommand{\dw}{d^{(\bw)}}
\newcommand{\Dmax}{D}
\newcommand{\g}{g} 
\newcommand{\tg}{\tilde{\g}}
\newcommand{\G}{G} 

\newcommand{\bzero}{\boldsymbol{0}}
\newcommand{\bone}{\boldsymbol{1}}
\newcommand{\bu}{\boldsymbol{u}}
\newcommand{\bv}{\boldsymbol{v}}
\newcommand{\bw}{\boldsymbol{w}}
\newcommand{\bx}{\boldsymbol{x}}
\newcommand{\by}{\boldsymbol{y}}
\newcommand{\bz}{\boldsymbol{z}}

\newcommand{\bA}{\boldsymbol{A}}
\newcommand{\bB}{\boldsymbol{B}}
\newcommand{\bC}{\boldsymbol{C}}
\newcommand{\bI}{\boldsymbol{I}}

\newcommand{\bW}{\boldsymbol{W}}
\newcommand{\bX}{\boldsymbol{X}}

\newcommand{\bmu}{\boldsymbol{\mu}}
\newcommand{\bSigma}{\boldsymbol{\Sigma}}

\newcommand{\hx}{\widehat{x}}
\newcommand{\hmu}{\widehat{\mu}}
\newcommand{\hSigma}{\widehat{\Sigma}}

\newcommand{\tw}{\widetilde{w}}
\newcommand{\tL}{\widetilde{L}}

\newcommand{\bhx}{\boldsymbol{\widehat{x}}}
\newcommand{\bhmu}{\boldsymbol{\widehat{\mu}}}
\newcommand{\bhSigma}{\boldsymbol{\widehat{\Sigma}}}

\newcommand{\Btw}{\boldsymbol{\widetilde{w}}}
\newcommand{\btz}{\boldsymbol{\widetilde{z}}}
\newcommand{\BtW}{\boldsymbol{\widetilde{W}}}
\newcommand{\btX}{\boldsymbol{\widetilde{X}}}

\newcommand{\blind}{0}

\begin{document}

\def\spacingset#1{\renewcommand{\baselinestretch}%
{#1}\small\normalsize} \spacingset{1}


\if0\blind
{
  \title{{\bf The Cellwise Minimum Covariance
	       Determinant Estimator}\thanks{To
          appear, {\it Journal of the American
          Statistical Association}.}}
  \author{Jakob Raymaekers\\
    {\normalsize Department of Quantitative Economics,
	  Maastricht University, The Netherlands}\\
    {\normalsize \phantom{and}} \\
    Peter J. Rousseeuw\thanks{Corresponding author, 
    \texttt{peter@rousseeuw.net}\,.}\\
    {\normalsize Section of Statistics and Data Science, 
	  University of Leuven, Belgium}\\ \\}
	\date{November 15, 2023} 
  \maketitle
} \fi

\if1\blind
{
  \phantom{abc}
  \vspace{3.5cm}
  \begin{center}
    {\LARGE\bf The Cellwise Minimum Covariance}
		
	\vspace{0.5cm}
		
		{\LARGE\bf Determinant Estimator}
	\vskip1.5cm
	{\large \today}	
\end{center}
	\vspace{2cm}
} \fi

\begin{abstract}
The usual Minimum Covariance Determinant 
(MCD) estimator of a covariance matrix is 
robust against casewise outliers. 
These are cases (that is, rows of the data 
matrix) that behave differently from the 
majority of cases, raising suspicion that 
they might belong to a different population.
On the other hand, cellwise outliers are
individual cells in the data matrix.
When a row contains one or more outlying
cells, the other cells in the same row  
still contain useful information that we
wish to preserve.
We propose a cellwise robust version of 
the MCD method, called cellMCD. 
Its main building blocks are observed 
likelihood and a penalty term on the 
number of flagged cellwise outliers.
It possesses good breakdown properties.
We construct a fast algorithm for cellMCD 
based on concentration steps (C-steps)
that always lower the objective.
The method performs well in simulations
with cellwise outliers, and has high 
finite-sample efficiency on clean data.
It is illustrated on real data with
visualizations of the results.
\end{abstract}

\vspace{0.5cm} 

\noindent%
{\it Keywords:} Cellwise outliers,
Covariance matrix,
Likelihood,
Missing values.

\vfill
\newpage
\spacingset{1.45} 

\section{Motivation}
\label{sec:intro}

Any practicing statistician or data scientist
knows that real data sets often contain outliers.
One definition of outliers says that they are
cases that do not obey the fit suggested by the 
majority of the data, which raises suspicion 
that they may have been generated by a different 
mechanism.
Since cases typically correspond to rows of the 
data matrix, they are sometimes called rowwise 
outliers.
They may be the result of gross errors, but
they can also be nuggets of valuable information.
In either case, it is important to find them.
In computer science this is called anomaly
detection, and in some areas it is known as
exception mining. 
In statistics several approaches were tried, 
such as testing for outliers and the 
computation of outlier diagnostics.
In our experience the approach working best is
that of robust statistics, which aims to fit 
the majority of the data first, and then flags 
outliers by their large deviation from that fit. 

In this paper we focus on single-class
multivariate numerical data without a response 
variable (although the results are relevant
for classification and regression too).
The goal is to robustly estimate the central 
location of the point cloud as well as its 
covariance matrix, and at the same time  
flag the outliers that may be present.
The underlying model is that the data come
from a multivariate Gaussian distribution,
after which some data has been replaced by
outliers that can be anywhere.

The Minimum Covariance Determinant (MCD) 
estimator introduced by
\cite{R1984LMS,R1985multivariate}
is highly robust to casewise outliers.
Its definition is quite intuitive.
Take an integer $h$ that is at least half
the sample size $n$.
We then look for the subset containing $h$
cases such that the determinant of its 
usual covariance matrix is as small as
possible. 
The resulting robust location estimate is
then the mean of that subset, and the
robust covariance matrix is its covariance
matrix multiplied by a consistency factor.
One can show that the estimates are not
overly affected when there are fewer than 
$n-h$ outlying cases.
The MCD became computationally feasible
with the algorithm of \cite{fastMCD1999},
followed by even faster algorithms by
\cite{hubert2012DetMCD} and
\cite{DeKetelaere:RT-DetMCD}.
\cite{copt:missingsMCD} computed the
MCD for incomplete data.
The MCD has also been generalized to high
dimensions \citep{MRCD2020}, and to
non-elliptical distributions using
kernels \citep{KMRCD2021}.
For a survey on the MCD and its 
applications see \cite{hubertWIRE-MCD2}.
The MCD is available in the procedure 
ROBUSTREG in SAS, in SAS/IML, in Matlab's
PLS Toolbox, 
and in the R packages {\it robustbase} 
\citep{robustbase} and {\it rrcov} 
\citep{todorov-rrcov} on CRAN. 
In Python one can use \texttt{MinCovDet}
in \texttt{scikit-learn}
\citep{pedregosa2011scikit}.

In recent times a different outlier
paradigm has gained prominence, that of
{\it cellwise outliers}, first published
by \cite{alqallaf2009}. 
It assumes that the data were generated
from a certain distributional model, after
which some individual cells (entries) were 
replaced by other values.
The difference between the casewise and the
cellwise paradigm is illustrated in
Figure~\ref{fig:bycasebycell}.
In the left panel the outlying cases are 
shown as black rows.
In the panel on the right the cellwise
outliers correspond to fewer black squares
in total, but together they contaminate 
over half of the cases, so the existing 
methods for casewise outliers may fail.

\begin{figure}[!ht]
\centering
\vspace{0.3cm}
\includegraphics[width=11cm]{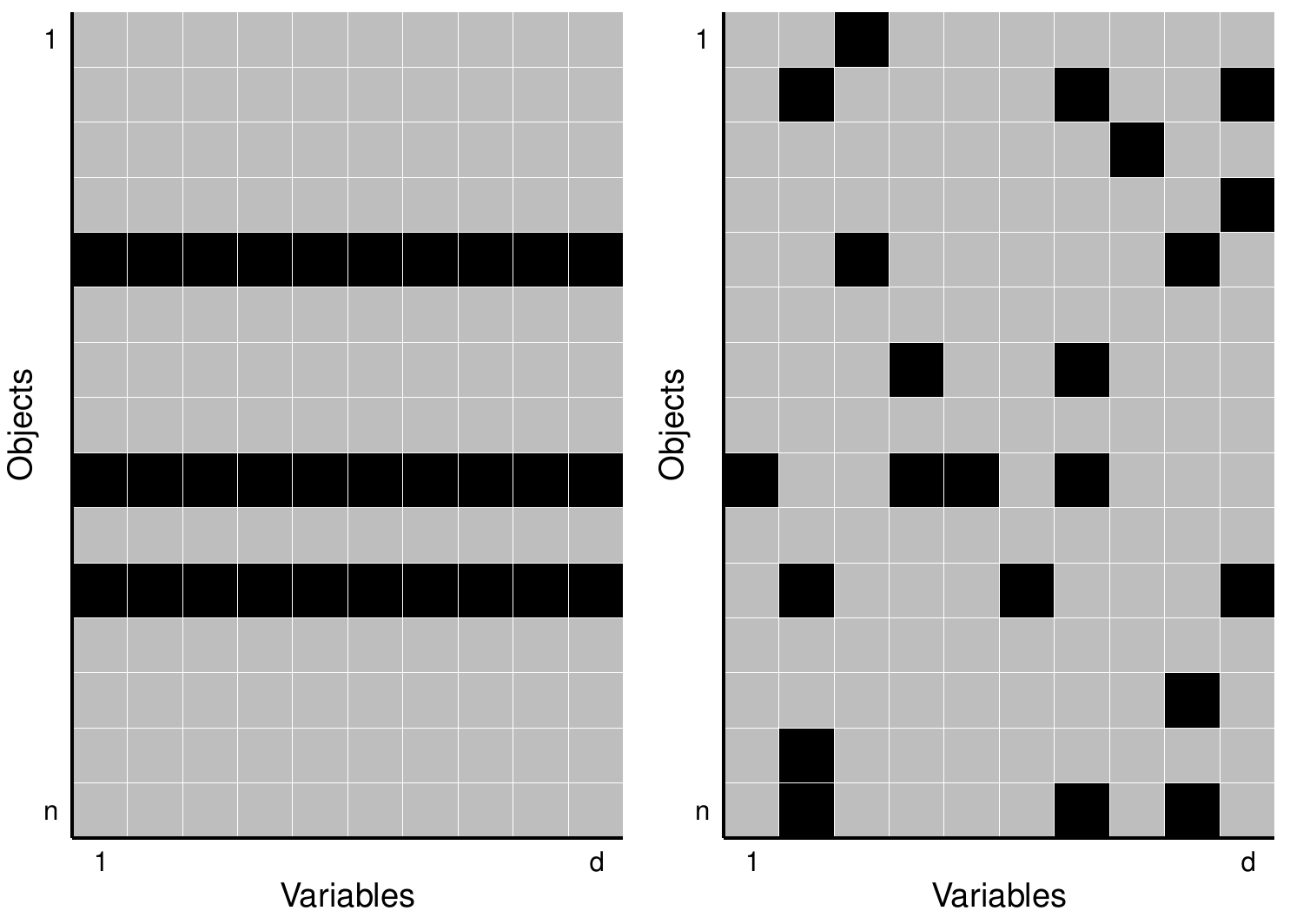}
\caption{The casewise (left) and cellwise 
         (right) outlier paradigms. 
				 (Black means outlying.)}
\label{fig:bycasebycell}
\end{figure}

In reality we do not know in advance 
{\it which} cells in the right panel of
Figure~\ref{fig:bycasebycell} are outlying
(black), unlike the simpler problem of 
incomplete data where we do know which
cells are missing.
When the variables have substantial 
correlations, the cellwise outliers need not 
be marginally outlying, and then it can be 
quite hard to detect them.
\cite{VanAelst2011} proposed one of the first 
detection methods.
\cite{DDC2018} predict the values of all 
cells and flag the cells that differ much 
from their prediction.

There has been some work on 
estimating the underlying covariance matrix 
in the presence of cellwise outliers. 
One approach is to compute robust 
covariances between each pair of variables, 
and to assemble them in a matrix.
To estimate these pairwise covariances, 
\cite{ollerer2015} and \cite{croux2016} use
rank correlations.
\cite{tarr2016} instead use the 
robust pairwise correlation estimator of 
\cite{GK1972} in combination with the 
robust scale estimator $Q_n$ of 
\cite{Qn1993}. 
As the resulting matrix is not necessarily 
positive semidefinite (PSD), they then compute 
the nearest PSD matrix by the method of 
\cite{Higham2002}. 
\cite{raymaekers2021fast} obtain a PSD  
covariance matrix by transforming 
(`wrapping') the original data variables.

Many cellwise robust methods were developed 
for settings such as principal 
components \citep{hubert2019macropca},
discriminant 
analysis \citep{aerts2017cellwise},
clustering \citep{garcia2021cluster},
graphical models \citep{katayama2018robust},
low-rank approximation 
\citep{maronna2008robust}, regression 
\citep{ollerer2016shooting, filzmoser2020cellwise},
and variable selection \citep{su2021robust}.
Also, isolated outliers in functional
data \citep{Hubert:MFOD} can be seen as 
cellwise outliers.

In the next section we introduce the cellwise 
MCD estimator. 
It is the first method with a
single objective that combines detection and
estimation, unlike some existing methods
which do detection and estimation
separately. Because of this cellMCD has 
provable cellwise breakdown properties, 
see section~\ref{sec:theory}.
There we also derive its consistency.
Section~\ref{sec:algo} describes its 
algorithm, and proves that it converges.
It is faster than the earlier methods.
Some illustrations on real data are shown 
in section~\ref{sec:examples}. 
The performance of the method is studied by 
simulation in section~\ref{sec:simul},
indicating that it is very robust 
against adversarial contamination. 
Section~\ref{sec:concl} concludes with a
discussion.

\section{A cellwise MCD}
\label{sec:cellMCD}

We first note that the {\it casewise} MCD can 
be reformulated in terms of likelihood.
The likelihood of a $d$-variate
Gaussian distribution is
\begin{equation} \label{eq:fcase}
f(\bx,\bmu,\bSigma)=
   \frac{1}{(2\pi)^{d/2}|\bSigma|^{1/2}}
	 e^{\displaystyle
	 -\MD^2(\bx,\bmu,\bSigma)/2}
\end{equation}
where $\bmu$ is a column vector, $\bSigma$ 
is a positive definite matrix, and 
the Mahalanobis distance is
$\MD(\bx,\bmu,\bSigma) = \sqrt{
(\bx-\bmu)^\top \bSigma^{-1}(\bx-\bmu)}$.
For a sample $\bx_1,\ldots,\bx_n$ we
put $L(\bx_i,\bmu,\bSigma) :=
-2\ln(f(\bx_i,\bmu,\bSigma))$ so the 
maximum likelihood estimator (MLE) of
$(\bmu, \bSigma)$ minimizes
\begin{equation} \label{eq:MLEcase}
  \sum_{i=1}^{n} L(\bx_i,\bmu,\bSigma) =
  \sum_{i=1}^{n} \big( \ln |\bSigma| +
	 d\ln(2\pi) +	
	\MD^2(\bx_i,\bmu,\bSigma)\,\big)\;.
\end{equation}
Let us now look for a subset 
$H \subset \{1,...,n\}$ with $h$ elements
which minimizes \eqref{eq:MLEcase} where the
sum is only over $i$ in $H$. 
We can also write this with weights $w_i$ that
are 0 or 1 in the objective
$\sum_{i=1}^{n} w_i L(\bx_i,\bmu,\bSigma)$, 
so we minimize
\begin{equation} \label{eq:MCDcase}
\begin{aligned}
   \sum_{i=1}^{n} w_i \big(
	 \ln |\bSigma| + d\ln(2\pi) +
   \MD^2(\bx_i,\bmu,\bSigma)\,\big)\\
	\mbox{under the constraint that }
	\sum_{i=1}^n{w_i} = h\;.
\end{aligned}
\end{equation}
For the minimizing set of weights $w_i$ we
know from maximum likelihood that $\bhmu$ is 
the mean of the $\bx_i$ in $H$, so it is the
weighted mean of all $\bx_i$\,, and similarly
\begin{equation} \label{eq:wcov}
  \bhSigma = \frac{1}{h} \sum_{i=1}^n 
  w_i(\bx_i - \bhmu)(\bx_i - \bhmu)^\top\;.
\end{equation}	
But then the third term of \eqref{eq:MCDcase}
becomes
$$ \sum_{i=1}^{n} {w_i(\bx_i-\bhmu)^\top 
	  \bhSigma^{-1}(\bx_i-\bhmu)} =
  \sum_{i=1}^{n} {\tr\Big(w_i(\bx_i-\bhmu)
	  (\bx_i-\bhmu)^\top \bhSigma^{-1}\Big)} = $$
$$ \tr\Big(\sum_{i=1}^{n} w_i(\bx_i-\bhmu)
    (\bx_i-\bhmu)^\top \bhSigma^{-1}\Big) = 
	 \tr\big(h \bhSigma \bhSigma^{-1}\big) = hd$$
which is constant, and so is the second term. 
Therefore minimizing \eqref{eq:MCDcase} is 
equivalent to minimizing the determinant of
\eqref{eq:wcov}, which is the definition 
of the casewise MCD.

In the context of incomplete data, 
\citet{dempster1977} and others defined the 
{\it observed likelihood}.
Let us denote the missingness pattern of the 
$n\times d$ data matrix $\bX$ by the 
$n \times d$ matrix $\bW$ with entries 
$w_{ij}$ that are 0 for missing $x_{ij}$ 
and 1 otherwise. 
Its rows $\bw_i$ take the place of the 
scalar weights $w_i$ in \eqref{eq:MCDcase}.
For the Gaussian model the observed likelihood of
the $i$th observation \citep{LR2020} is given by:
\begin{equation}
  \displaystyle f(\bx_i^{(\bw_i)},\bmu^{(\bw_i)},
  \bSigma^{(\bw_i)}) :=
	\frac{1}{(2\pi)^{\dwi/2}
	|\bSigma^{(\bw_i)}|^{1/2}} e^{\displaystyle 
	-\MD^2(\bx_i,\bw_i,\bmu,\bSigma)/2}
\end{equation}
in which
\begin{equation}
  \MD(\bx_i,\bw_i,\bmu,\bSigma) :=
  \sqrt{(\bx_i^{(\bw_i)}-\bmu^{(\bw_i)})^\top
	(\bSigma^{(\bw_i)})^{-1}
	(\bx_i^{(\bw_i)}-\bmu^{(\bw_i)})}
\end{equation}
is called the {\it partial Mahalanobis distance} 
by Danilov et al. (2012).
Here $\bx_i^{(\bw_i)}$ is the vector with only 
the entries for which $w_{ij}=1$, and similarly 
for $\bmu^{(\bw_i)}$.
The matrix $\bSigma^{(\bw_i)}$ is the submatrix 
of $\bSigma$ containing only the rows and columns 
of the variables $j$ with $w_{ij}=1$. 
Finally, $\dwi$ is the dimension of 
$\bx_i^{(\bw_i)}$, i.e. the number of 
non-missing entries of $\bx_i$\,. 
By convention, a case $\bx_i$ consisting 
exclusively of NA's has $\dwi=0$, 
$\MD(\bx_i,\bw_i,\bmu,\bSigma)=0$ and 
$|\bSigma^{(\bw_i)}|=1$.
Putting $L(\bx_i,\bw_i,\bmu,\bSigma) :=
-2\ln(f(\bx_i,\bw_i,\bmu,\bSigma))$ we see that
maximizing the observed likelihood of the 
entire data set comes down to minimizing
\begin{equation} \label{eq:cellMLE}
  \sum_{i=1}^{n} L(\bx_i,\bw_i,\bmu,\bSigma) =
  \sum_{i=1}^{n} \big( \ln |\bSigma^{(\bw_i)}|
	+ \dwi\ln(2\pi) +
	\MD^2(\bx_i,\bw_i,\bmu,\bSigma)\,\big)\;\;.
\end{equation}
This maximum likelihood estimate of 
$(\bmu,\bSigma)$ is typically computed 
by the EM algorithm \citep{dempster1977}. 

When constructing a cellwise MCD, the matrix 
$\bW$ now describes which cells are flagged: a flagged 
cell $x_{ij}$ gets $w_{ij}=0$. The notations 
$\bx_i^{(\bw_i)}$, $\dwi$, $\bmu^{(\bw_i)}$, and 
$\bSigma^{(\bw_i)}$ are interpreted analogously. 
The matrix $\bW$ is not given in advance, but will 
be obtained through the estimation procedure.
Now $h$ can no longer apply to the number of 
unflagged {\it cases}. Instead, we apply it to
the number of unflagged {\it cells} per 
column. We could minimize
\begin{equation} \label{eq:tempMCD}
\begin{aligned}
  \sum_{i=1}^{n} \Big( 
	\ln |\bSigma^{(\bw_i)}| + 
	\dwi\ln(2\pi) + 
	\MD^2(\bx_i,\bw_i,\bmu,\bSigma)\, \Big)\\
 \mbox{ under the constraints }
 \eig_d(\bSigma) \geqslant a
 \mbox{ and }
||\bW_{.j}||_0 \geqslant h
 \mbox{ for all } j=1,\ldots,d
\end{aligned}
\end{equation}
over $(\bmu,\bSigma,\bW)$.
The first constraint says that the smallest
eigenvalue of $\bSigma$ is at least as 
large as a number $a > 0$, where the 
eigenvalues of $\bSigma$ are denoted as 
$\eig_1(\bSigma) \geqslant
\ldots \geqslant \eig_d(\bSigma)$. 
This ensures that $\bSigma$ is nonsingular,
which is required to compute Mahalanobis 
distances.
In the second constraint, 
$||\bW_{.j}||_0$ is the number of nonzero 
entries in the $j$-th column of $\bW$.
Note that we should not choose $h$ too low.
Whereas for the casewise MCD we can take $h$
as low as $0.5n$, that would be ill-advised 
here because it could happen that two 
variables $j$ and $k$ do not overlap in 
the sense that $w_{ij}w_{ik}=0$ for all $i$, 
making it impossible to estimate their 
covariance. We will impose that 
$h \geqslant 0.75n$ throughout. 

However, minimizing \eqref{eq:tempMCD} 
typically treats too many cells as outlying.
This is because a value of $h$ that is suitable 
for one variable may be too low for another, 
and we do not know ahead of time which variables 
have many outlying cells and which have few or
none.
To avoid flagging too many cells, we add a 
penalty counting the number of flagged cells 
in each column. The objective function of the 
{\bf cellwise MCD} (cellMCD) then becomes 
\begin{equation} \label{eq:cellMCD}
\begin{aligned}
  \sum_{i=1}^{n} {\Big(
	\ln |\bSigma^{(\bw_i)}| + 
	\dwi\ln(2\pi) +
	\MD^2(\bx_i,\bw_i,\bmu,\bSigma)\,\Big)} + 
	\sum_{j=1}^d \penalt_j
	||\bone_d - \bW_{.j}||_0\\
	\mbox{ under the constraints }
 \eig_d(\bSigma) \geqslant a
 \mbox{ and }
 ||\bW_{.j}||_0 \geqslant h
 \mbox{ for all } j=1,\ldots,d\,.
\end{aligned}
\end{equation}
The notation $||\bone_d - \bW_{.j}||_0$ 
stands for the number of nonzero elements in 
this vector, so the number of zero weights
in column $j$ of $\bW$, i.e. the number of 
flagged cells in column $j$ of $\bX$.
The constants $\penalt_j$ for 
$j=1,\ldots,d$ are computed  
(in Section~\ref{sec:algo}) 
from the desired percentage of flagged 
cells in the absence of contamination.
At the same time we keep the robustness
constraint that $||\bW_{.j}||_0 \geqslant h$.
Combining a penalty term with a $||.||_0$ 
constraint is not new, see the work of 
\cite{she2021progq} on casewise 
robust regression. 
In our context, the constraint 
$||\bW_{.j}||_0 \geqslant h$ will ensure
the robustness of the estimator (through
Proposition~\ref{prop:bdv} below), 
whereas the penalty term $\sum_j
\penalt_j ||\bone_d - \bW_{.j}||_0$
discourages flagging too many cells, 
which improves the estimation accuracy
at clean data as seen in simulations.

The cellMCD method is the first cellwise
robust technique that combines the fitting 
of the parameters and the flagging of 
outlying cells $(\bW)$ in one objective
function.
The constraint $||\bW_{.j}||_0 \geqslant h$
for $j=1,\ldots,d$ says that we require 
at least $h$ unflagged cells in each column.
In order to avoid a singular covariance
matrix, we obviously need $h > d$.
Combining these inequalities we obtain
$n > 4d/3$\,. 
But the curse of dimensionality implies 
that many spurious structures can be found 
in increasing dimensions, so we want a 
more comfortable ratio of cases per 
dimension. 
For the casewise MCD the rule of thumb 
is $n/d \geqslant 5$ \citep{RvZ1990},
and we will require that here too.

The cellMCD method defined by 
\eqref{eq:cellMCD} is equivariant for 
permuting the cases, for shifting the 
data, and for multiplying the variables by 
nonzero constants.
But unlike the casewise MCD it is not
equivariant under general nonsingular linear
transformations, or even orthogonal
transformations.
This is because cells are intimately tied to 
the coordinate system, and an orthogonal
transformation changes the cells. 
This is an important difference between 
the casewise and cellwise approaches. 
For instance, consider the standard 
multivariate Gaussian model in dimension 
$d=4$ with the suspicious point $(10,0,0,0)$.
By an orthogonal transformation of the data,
this point can be moved to 
$(\sqrt{50},\sqrt{50},0,0)$ or to
$(5,5,5,5)$. 
The casewise MCD is equivariant to such
transformations and will still flag the 
same case.
But in the cellwise paradigm $(10,0,0,0)$ 
has one outlying cell, 
$(\sqrt{50},\sqrt{50},0,0)$ has two, and
$(5,5,5,5)$ has four, so cellMCD will 
react differently, as it should.

\section{Theoretical properties}
\label{sec:theory}

\cite{alqallaf2009} define the cellwise 
breakdown value of a location estimator.
Here we will focus on finite-sample breakdown
values in the sense of \cite{donoho1983}
and \cite{lopuhaa1991breakdown}.
The {\it finite-sample cellwise breakdown
value} of an estimator $\bhmu$ at a dataset
$\bX$ is given by the smallest fraction of
cells per column that need to be replaced to 
carry the estimate outside all bounds.
Formally, let $\bX$ be a dataset of size $n$, 
and denote by $\bX^m$ any corrupted sample 
obtained by replacing at most $m$ cells in 
each column of $\bX$ by arbitrary values. 
Then the finite-sample cellwise 
breakdown value of a location estimator 
$\bhmu$ at $\bX$ is given by 
\begin{equation} \label{eq:bdvloc}
  \varepsilon^*_n(\bhmu, \bX)=
  \min \left\{\frac{m}{n}:\;
	\sup_{\bX^m}{\left|\left|\bhmu(\bX^m) - 
	\bhmu(\bX)\right|\right|} = 
	\infty\right\}.
\end{equation}
Analogously to the casewise setting, we can 
also define the {\it cellwise explosion 
breakdown value} of a covariance estimator 
$\bhSigma$ as
\begin{equation} \label{eq:explosion}
  \varepsilon^+_n(\bhSigma, \bX)=
  \min \left\{\frac{m}{n}:\;
	\sup_{\bX^m}\eig_1(\bhSigma) = 
	\infty\right\}.
\end{equation}
Moreover, we define the {\it cellwise 
implosion breakdown value} of $\bhSigma$ as
\begin{equation} \label{eq:implosion}
  \varepsilon^-_n(\bhSigma, \bX)=
	\min \left\{\frac{m}{n}:\;
	\inf_{\bX^m}\eig_d(\bhSigma) = 0\right\}.
\end{equation}

The definitions of the corresponding 
casewise breakdown values are very similar,
the only difference being that the 
corrupted samples, let us call them 
$\btX^m$, are obtained by replacing at 
most $m$ rows of $\bX$ by arbitrary rows.
If we denote the casewise breakdown values
by $\delta^*_n$, $\delta^+_n$ and 
$\delta^-_n$ we can formulate the following 
simple but useful result:
\begin{proposition} \label{prop:cellbdv}
For all estimators $\bhmu$ and $\bhSigma$
at any dataset $\bX$ it holds that\linebreak
 $\varepsilon^*_n(\bhmu, \bX) 
  \leqslant \delta^*_n(\bhmu, \bX)$, 
 $\varepsilon^+_n(\bhSigma, \bX)
  \leqslant \delta^+_n(\bhSigma, \bX)$, 
 and $\varepsilon^-_n(\bhSigma, \bX)
  \leqslant \delta^-_n(\bhSigma, \bX)$.
\end{proposition}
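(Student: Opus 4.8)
The plan is to reduce all three inequalities to a single set-inclusion observation: every casewise corruption of $\bX$ is already a cellwise corruption of $\bX$, so the family of admissible cellwise-corrupted samples contains the family of admissible casewise-corrupted samples. Once this inclusion is in hand, each of the three breakdown inequalities follows from the monotonicity of a supremum (or infimum) over nested families, together with the observation that enlarging the family of corruptions can only enlarge the set of breakdown-inducing fractions $m/n$ and hence can only lower its minimum.

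First I would establish the inclusion. A casewise-corrupted sample $\btX^m$ is obtained by replacing at most $m$ entire rows of $\bX$ by arbitrary rows. Replacing $m$ rows alters at most $m$ cells within each individual column, so $\btX^m$ satisfies the defining constraint of a cellwise-corrupted sample $\bX^m$ (at most $m$ replaced cells per column). Therefore, for every fixed $m$, the set of attainable $\btX^m$ is contained in the set of attainable $\bX^m$. Next I would propagate this to the location value: since the supremum of $||\bhmu(\cdot)-\bhmu(\bX)||$ over the larger cellwise family is at least its supremum over the smaller casewise family, any $m$ for which the casewise supremum equals $\infty$ also makes the cellwise supremum equal $\infty$. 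Hence $\{m/n:\sup_{\btX^m}||\bhmu(\btX^m)-\bhmu(\bX)||=\infty\}$ is a subset of $\{m/n:\sup_{\bX^m}||\bhmu(\bX^m)-\bhmu(\bX)||=\infty\}$, and taking the minimum of each side gives $\varepsilon^*_n(\bhmu,\bX)\leqslant\delta^*_n(\bhmu,\bX)$. The explosion inequality is identical with $\eig_1(\bhSigma)$ replacing the norm, again using $\sup=\infty$, yielding $\varepsilon^+_n(\bhSigma,\bX)\leqslant\delta^+_n(\bhSigma,\bX)$.

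The implosion inequality is the mirror image and is where I would exercise the most care, since it runs through an infimum rather than a supremum. Over the larger cellwise family the infimum of $\eig_d(\bhSigma)$ is no larger than over the casewise family, so $\inf_{\bX^m}\eig_d(\bhSigma)\leqslant\inf_{\btX^m}\eig_d(\bhSigma)$; combined with $\eig_d(\bhSigma)\geqslant 0$ for a covariance estimate, a casewise infimum of $0$ forces the cellwise infimum to be $0$ as well, giving $\varepsilon^-_n(\bhSigma,\bX)\leqslant\delta^-_n(\bhSigma,\bX)$. There is essentially no deep obstacle here; the one point demanding attention is simply keeping the direction of the inequalities straight when passing from nested families to their minima, and in particular verifying that the reversed (infimum) direction in the implosion case still delivers the inequality in the same orientation $\varepsilon\leqslant\delta$.
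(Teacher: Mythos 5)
Your proof is correct and follows exactly the paper's argument: the paper's proof likewise consists of observing that any casewise-contaminated sample $\btX^m$ is a special case of a cellwise-contaminated sample $\bX^m$, so the breakdown-inducing sets of fractions are nested and the minima compare as claimed. You have merely spelled out in full the sup/inf monotonicity steps (including the implosion direction) that the paper leaves implicit.
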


The proof consists of realizing that 
the casewise contaminated samples $\btX^m$
can be seen as cellwise contaminated
samples $\bX^m$. It is thus generally
true that the cellwise breakdown value is 
less than or equal to the casewise
breakdown value. Therefore, all upper
bounds on casewise breakdown values in the 
literature also hold for cellwise
breakdown values.

When proving breakdown values one often
assumes that the original
data set $\bX$ is in {\it general position},
meaning that no more than $d$ points lie in
any $d-1$ dimensional affine subspace.
In particular, no three points lie on a 
line, no 4 points lie on a plane, and so on.
When the data are drawn from a continuous
distribution, it is in general position
with probability 1. 
Real data have a limited precision, so 
they are not always in general position.

The inequalities in 
Proposition~\ref{prop:cellbdv} can be
strict. For instance,
the casewise implosion breakdown value of
the classical covariance matrix {\bf Cov}
at a dataset in general position is very 
high, in fact it is $(n-d)/n$ which goes 
to 1 for increasing sample size $n$.
This is because whenever $d+1$ of the
original data points are kept, {\bf Cov}
remains nonsingular.
In stark contrast, its
{\it cellwise} implosion breakdown
value is quite low:
\begin{equation} \label{eq:cov}
  \varepsilon^-_n(\mbox{\bf Cov}, \bX) =
	\left\lceil \frac{n-d}{d} 
	\right\rceil/n\, \leqslant \, 
	\frac{1}{d}\;.
\end{equation}
To see why, let us pick $d$ points of
$\bX$ which lie on a hyperplane that is
not parallel to any coordinate axis.
In the remaining $n-d$ rows we can then
replace a single cell such that all of
the resulting points lie on the same 
hyperplane, so {\bf Cov} becomes singular.
We can do this by replacing no more than
$\lceil (n-d)/d \rceil$ cells in each 
variable, which is a fraction
$\lceil (n-d)/d \rceil/n$ of its $n$ cells.

\cite{challenges} recently derived 
a similar upper bound for all affine 
equivariant estimators $\bhSigma$.
In order to obtain a higher cellwise
breakdown value we are thus forced to
leave the realm of affine equivariance.
In fact, the constraint
$\eig_d(\bhSigma) \geqslant a > 0$ in the 
definition \eqref{eq:cellMCD} of cellMCD
is not affine invariant, but it keeps
$\bhSigma$ from imploding. Therefore 
the cellwise implosion breakdown value 
of cellMCD is 1. 

We also want to know the breakdown value
of its location estimate $\bhmu$ and the
explosion breakdown value of $\bhSigma$.
These naturally depend on the choice of $h$.

\begin{proposition} \label{prop:bdv}
If the dataset $\bX$ is in general position
and $h \geqslant\lfloor\frac{n}{2}\rfloor+1$, 
the cellMCD estimators $\bhmu$ and 
$\bhSigma$ satisfy the properties
\begin{itemize}
\item[(a)] $\varepsilon^-_n(\bhSigma, \bX) 
            = 1$
\item[(b)] $\varepsilon^+_n(\bhSigma, \bX)
            \geqslant (n-h+1)/n$ 
\item[(c)] $\varepsilon^*_n(\bhmu, \bX) 
            \geqslant (n-h+1)/n$ 
\item[(d)] The lower bound $(n-h+1)/n$ 
           is sharp.
\end{itemize}
\end{proposition}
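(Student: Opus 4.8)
The plan is to establish the three breakdown bounds (a)--(c) separately and then verify sharpness in (d). Part (a) follows immediately from the explicit constraint $\eig_d(\bhSigma)\geqslant a>0$ in \eqref{eq:cellMCD}: no matter how many cells are corrupted, the minimizer must respect this lower bound on the smallest eigenvalue, so $\inf_{\bX^m}\eig_d(\bhSigma)\geqslant a>0$ for every $m$, which means the implosion breakdown value in \eqref{eq:implosion} can never be attained and equals $1$. This is the easy part.

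For the explosion bound (b) and the location bound (c), the idea is to show that if strictly fewer than $n-h+1$ cells per column are corrupted, then in every column there remain at least $h$ clean cells, and moreover there is a large common set of rows that are entirely clean. First I would count: replacing at most $m=n-h$ cells per column leaves at least $h$ uncorrupted cells in each of the $d$ columns, so the robustness constraint $||\bW_{.j}||_0\geqslant h$ can be satisfied by a weight matrix that flags only (a superset of) the corrupted cells. The key quantitative step is to bound the objective \eqref{eq:cellMCD} achievable by such a ``good'' weight configuration, using the clean cells and a $\bSigma$ with bounded eigenvalues, and to compare it against what any configuration with an exploding $\bhSigma$ (or a diverging $\bhmu$) would cost. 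Because the term $\ln|\bSigma^{(\bw_i)}|$ grows without bound as $\eig_1(\bhSigma)\to\infty$ while the penalty and the remaining likelihood terms stay controlled on the clean cells, a configuration that lets $\bhSigma$ explode cannot be optimal; this forces $\eig_1(\bhSigma)$ to stay bounded, giving (b). A parallel argument handles $\bhmu$: if some coordinate of $\bhmu$ diverges, the partial Mahalanobis terms $\MD^2$ over the clean cells in that column blow up faster than the penalty can absorb, contradicting optimality, which yields (c).

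The main obstacle I anticipate is making the comparison argument fully rigorous while respecting the cellwise (rather than casewise) corruption model. Unlike the casewise setting, here the adversary can spread corruption so that no single row is entirely clean even when each column retains $h$ good cells; so I cannot simply restrict to a clean subsample and invoke a classical MLE bound. The careful step is to track, for the candidate optimal solution, how the objective decomposes across columns and to exhibit a fixed feasible competitor (for example, flagging exactly the corrupted cells and using a well-conditioned $\bSigma$) whose objective value is finite and independent of the contamination. One then argues that any sequence of contaminated datasets driving $\eig_1(\bhSigma)\to\infty$ or $||\bhmu||\to\infty$ must eventually exceed this fixed bound, using general position to guarantee that the clean cells in each column are not degenerate. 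The bookkeeping that the penalty terms $\penalt_j||\bone_d-\bW_{.j}||_0$ remain bounded (since at most $m$ cells per column get flagged) is routine but must be stated to close the estimate.

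For sharpness (d), the plan is to construct, at the boundary $m=n-h+1$, an explicit contamination that achieves breakdown, mirroring the singularity construction already used for \eqref{eq:cov}. Placing $n-h+1$ corrupted cells in a column so that the clean count drops to $h-1$ violates the constraint and forces the estimator to include contaminated cells among its $h$ retained weights; by sending those corrupted values to infinity along a suitable direction one drives either $\eig_1(\bhSigma)\to\infty$ or $||\bhmu||\to\infty$. This shows the lower bound $(n-h+1)/n$ cannot be improved, completing the proof.
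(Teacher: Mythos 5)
Your parts (a)--(c) follow essentially the same route as the paper. Part (a) is immediate from the constraint $\eig_d(\bhSigma)\geqslant a$. For (b)--(c) the paper likewise exhibits, for each of the finitely many contamination patterns $\bW^*\in\mathcal{W}_h$ with $m=n-h$, a feasible candidate (the EM solution with missingness pattern $\bW^*$) whose objective is finite and independent of the contaminating values, and then shows the objective diverges when $\eig_1(\bhSigma)\to\infty$ or $||\bhmu||_2\to\infty$, using the eigenvalue bounds $\eig_1(\bhSigma)\leqslant d\max_{j}\bhSigma_{jj}$ and $\eig_1(\bhSigma^{(\bw)})\geqslant\max_j\bhSigma^{(\bw)}_{jj}$, with all remaining log-eigenvalues bounded below by $\ln a$. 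Two details in your sketch need fixing. First, your passing claim that there is ``a large common set of rows that are entirely clean'' is false in the cellwise model; you retract it yourself, correctly. Second, your argument for (c) invokes ``the clean cells in that column,'' but the optimizer's $\bW$ is free to flag clean cells and retain contaminated ones; you need the pigeonhole $h+(n-m)>n$, i.e.\ $2h>n$, to guarantee that each column contains at least one cell that is simultaneously clean and unflagged. This is exactly where the hypothesis $h\geqslant\lfloor n/2\rfloor+1$ enters, and your proposal never invokes it.

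The genuine gap is in part (d). You correctly observe that with $n-h+1$ corrupted cells in one column every feasible $\bW$ must leave at least one corrupted cell unflagged, but you then assert that sending those values to infinity ``drives either $\eig_1(\bhSigma)\to\infty$ or $||\bhmu||\to\infty$'' without supplying a mechanism. This cannot follow from objective comparisons: unlike in (b)--(c), there is now no feasible competitor with bounded objective (every candidate pays for a retained outlier), so a bounded $(\bhmu,\bhSigma)$ with a large objective value cannot be ruled out by exhibiting a cheaper solution. The paper closes this step with a stationarity argument: at any optimum the EM first-order condition gives $\bhmu=\frac{1}{n}\sum_{i=1}^{n}\by_i$ with $\by_i$ the imputed observations, so $\bhmu_1$ contains a term $\frac{c}{n}\,\#(\{i\,|\,\bw_{i1}=1\}\cap\mathcal{I})\geqslant c/n$ from the unflagged contaminated cells; if both $\bhmu$ and $\bhSigma$ stayed bounded, every other term in that identity would stay bounded (using $\eig_1(\bhSigma_{o_i,o_i}^{-1})\leqslant 1/a$), while $c/n\to\infty$ along the sequence $c_k\to\infty$, a contradiction, so location or covariance (or both) must diverge. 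Some such fixed-point or first-order argument, rather than ``the objective gets big,'' is the one idea your proposal is missing.
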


Proposition~\ref{prop:bdv} shows that cellMCD
is highly robust. Its proof is in Section A.1 
of the Supplementary Material. By
Proposition~\ref{prop:cellbdv}, it follows 
that these lower bounds also hold for the
casewise breakdown values. This also implies 
that the method works on a mix of cellwise
and casewise outliers as well.
We do not actually recommend to choose $h$ as
low as the proposition allows: as explained 
before this could lead to some poorly defined 
covariances and numerical instability. 
We stick with our earlier recommendation of 
$h \geqslant 0.75n$, and in fact $h = 0.75n$ 
is the default in our implementation.

Let us now turn to the asymptotic behavior 
of cellMCD. At the 
uncontaminated model distribution and for
large $n$ only a small fraction of cells 
is actually discarded, due to our choice of the 
constants $q_j$ in the penalty term. 
In that situation the large-sample behavior of
cellMCD is therefore the same as without the 
columnwise constraint on $\bW$.
The cellMCD objective can then be written as
\begin{equation}\label{eq:obj1}
\G(\mu, \Sigma, F) \coloneqq \int \g_{\mu, \Sigma}(x) F(\mathrm{d}x)    
\end{equation}
where
\begin{equation}\label{eq:obj2}
    \g_{\mu,\Sigma}(x) \coloneqq \min_{w\in \{0,1\}^d}{\left\{\ln \left|\Sigma^{(w)}\right|+\dw\ln(2\pi) + \text{MD}^2(x,w,\mu, \Sigma)+\bpenalt\,(\bone - w)^{\top} \right\}}
\end{equation}
in which $\bpenalt = (q_1,\ldots,q_d)$ and $w = (w_1,\ldots,w_d)$.
The cellMCD estimate is then 
$$\argmin_{\left(\mu, \Sigma\right) \in \Theta} \G(\mu, \Sigma, F_n)$$
with $F_n$ the empirical distribution and $\Theta$ the parameter space
of $(\mu, \Sigma)$, which incorporates the condition
$\eig_d(\Sigma) \geqslant a$.
Denote the set of minimizers as $\Theta^*$.
In section A.2 of the Supplementary Material the following
Wald-type consistency result is shown, using work 
of~\cite{VdVaart2000}:
\begin{proposition}\label{prop:consistency}
Let $(\hat{\mu}_n, \hat{\Sigma}_n)$ be a sequence of 
estimators which nearly minimize $\G(\cdot, \cdot, F_n)$
in the sense that 
$\G(\hat{\mu}_n, \hat{\Sigma}_n, F_n) \leqslant 
\G(\mu^*, \Sigma^*, F_n)+o_P(1)$ for some 
$(\mu^*, \Sigma^*) \in \Theta^*$. Then it holds for all 
$\varepsilon > 0$ that
$$P(\,\Dmax((\hat{\mu}_n, \hat{\Sigma}_n),\Theta^*) 
    \geqslant \varepsilon \,)\rightarrow 0\,, $$
where $\Dmax((\hat{\mu}_n,\hat{\Sigma}_n),
(\mu^*,\Sigma^*)) := \max(||\hat{\mu}_n-\mu^*||_2, 
 ||\hat{\Sigma}_n - \hSigma^*||_F)$
combines the Euclidean and Frobenius norms.
\end{proposition}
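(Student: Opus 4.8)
The plan is to apply the standard argmin-consistency theorem for $M$-estimators (\cite{VdVaart2000}, Theorem~5.7, in its minimization form and extended to a set of minimizers) to the criterion $\G(\mu,\Sigma,F_n)$. Since the surrounding text already reduces the large-sample problem to the penalized objective $\G$ of \eqref{eq:obj1}--\eqref{eq:obj2} without the columnwise $\ell_0$ constraint, it suffices to establish three things: a uniform law of large numbers $\sup_{(\mu,\Sigma)\in K}|\G(\mu,\Sigma,F_n)-\G(\mu,\Sigma,F)|\to 0$ in probability on a suitable compact $K\subset\Theta$; the well-separation of $\Theta^*$, i.e.\ $\inf\{\G(\mu,\Sigma,F):\Dmax((\mu,\Sigma),\Theta^*)\geqslant\varepsilon\}>\inf_\Theta\G(\cdot,\cdot,F)=:\G^*$ for every $\varepsilon>0$; and a tightness argument ruling out escape of the near-minimizers from a compact region of $\Theta$. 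Together these yield $\Dmax((\hat{\mu}_n,\hat{\Sigma}_n),\Theta^*)\to 0$ in probability.

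First I would record the analytic properties of $\g_{\mu,\Sigma}(x)$. For fixed $x$ it is the minimum of the $2^d$ maps $w\mapsto \ln|\Sigma^{(w)}|+\dw\ln(2\pi)+\MD^2(x,w,\mu,\Sigma)+\bpenalt(\bone-w)^\top$, each continuous in $(\mu,\Sigma)$ on $\Theta$ and jointly measurable, so $\g$ is continuous in $(\mu,\Sigma)$ and measurable in $x$. Taking $w=\bzero$ shows $\g_{\mu,\Sigma}(x)\leqslant\sum_{j=1}^d q_j$ for every $(\mu,\Sigma)$ and every $x$, while the constraint $\eig_d(\Sigma)\geqslant a$ together with Cauchy interlacing gives $\eig_{\min}(\Sigma^{(w)})\geqslant a$, hence $\ln|\Sigma^{(w)}|\geqslant\dw\ln a$; consequently each of the $2^d$ terms, and therefore $\g$, is bounded below by a constant depending only on $a$ and $d$. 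Thus on any compact $K\subset\Theta$ the class $\{\g_{\mu,\Sigma}:(\mu,\Sigma)\in K\}$ has a bounded, hence $F$-integrable, envelope; combined with continuity in the finite-dimensional parameter this makes it Glivenko--Cantelli, delivering the uniform LLN on $K$, and dominated convergence makes $(\mu,\Sigma)\mapsto\G(\mu,\Sigma,F)$ continuous.

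The hard part will be the non-compactness of $\Theta$: precisely because of the $w=\bzero$ option, $\G$ is bounded above by $\sum_j q_j$ and is therefore \emph{not} coercive, so the estimator can always ``flag away'' a diverging coordinate instead of paying an exploding Mahalanobis or log-determinant term. I would therefore rule out escape along each of the three routes out of a compact set separately. Implosion $\eig_d(\Sigma)\downarrow a$ is harmless since $a$ is an admissible value and the constraint set is closed. For explosion $\eig_1(\Sigma)\to\infty$ and for location divergence $\|\mu\|\to\infty$, I would argue that driving a coordinate to the boundary forces, for $F$-almost every $x$, the pointwise minimizer over $w$ to flag that coordinate, so that $\G$ converges to the value of a strictly smaller model in which that coordinate is permanently discarded plus its penalty $q_j$; the calibration of the $q_j$ (chosen so that clean Gaussian data leaves each column essentially unflagged) guarantees that this limiting value strictly exceeds $\G^*$ by a fixed margin. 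This simultaneously supplies the well-separation of $\Theta^*$ at the Gaussian $F$ and a compact $K$ with $\inf_{\Theta\setminus K}\G(\cdot,\cdot,F)>\G^*+\delta$.

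Finally, combining the uniform LLN on $K$ with well-separation and the no-escape bound, the argmin theorem of \cite{VdVaart2000} shows that every sequence of near-minimizers satisfying $\G(\hat{\mu}_n,\hat{\Sigma}_n,F_n)\leqslant\G(\mu^*,\Sigma^*,F_n)+o_P(1)$ obeys $\Dmax((\hat{\mu}_n,\hat{\Sigma}_n),\Theta^*)\to 0$ in probability. I expect the genuinely delicate step to be the non-compactness/identifiability argument of the preceding paragraph---turning the calibration of the penalties $q_j$ into a quantitative gap between the global minimum $\G^*$ and the ``flag-a-whole-coordinate'' limit---whereas the measurability, envelope, and Glivenko--Cantelli steps are routine.
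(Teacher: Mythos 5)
Your proposal follows essentially the same two-stage architecture as the paper's proof: consistency restricted to compact subsets of $\Theta$ via a Van der Vaart argmin theorem (resting on the continuity of $(\mu,\Sigma)\mapsto\g_{\mu,\Sigma}(x)$ and the uniform bounds $d\ln a\leqslant\g_{\mu,\Sigma}(x)\leqslant\sum_{j=1}^d\penalt_j$, which are exactly the paper's Lemmas on continuity and domination), combined with a separate no-escape result showing the minimizers eventually lie in a fixed ball, proved by comparing a diverging sequence against a competitor in which the diverging coordinates are replaced and exploiting the calibration gap $\sum_{j\leqslant d^*}\penalt_j - C > 0$. The paper's Proposition A.2 does precisely this, with the explicit competitor $m_n^* = [\bzero_{d^*}, \tilde m_n]$ and $S_n^*$ block-diagonal with identity in the replaced block. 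The only cosmetic difference is in the compact step: you invoke Theorem 5.7 of \cite{VdVaart2000}, which obliges you to build a uniform law of large numbers (Glivenko--Cantelli via the bounded envelope) and establish well-separation, whereas the paper applies the Wald-type Theorem 5.14, whose hypotheses are just pointwise continuity in the parameter and an integrable lower bound on infima over small balls, so no ULLN is needed. Your extra machinery is harmless --- the envelope is indeed uniformly bounded over all of $\mathbb{R}^d\times\mathcal{P}(d)$, not merely on compacts, and well-separation on a compact follows automatically from continuity of $\G(\cdot,\cdot,F)$ --- but it buys nothing over the paper's leaner route.

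One step of your sketch would fail as literally written. For diverging location $\|\mu_n\|\to\infty$ (with scales already shown bounded), the claim that the pointwise minimizer over $w$ is forced to flag the diverging coordinates holds for each fixed $x$ but is \emph{not} uniform in $x$: at every $n$ there remain $x$ far out for which flagging is not yet optimal. Since the near-minimization hypothesis is with respect to $F_n$, not $F$, pointwise a.e.\ convergence plus dominated convergence only controls the population objective; to transfer the calibration gap to the empirical criterion one must restrict to a ball $B(R)$, on which the flagging inequality does hold uniformly (this uses the already-established boundedness of the scales), and then control the tail contribution using the uniform bound on $|\tg_{m_n,S_n}-\g_{m_n,S_n}|$ together with a choice of $R$ making the tail mass small relative to the gap $\varepsilon^*\int_{B(R)}F(\diff x)$. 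This truncation device is exactly how the paper's proof of Proposition A.2 handles the location case; the scale-explosion case, by contrast, does admit a uniform-in-$x$ statement, which is why the paper treats the two escapes asymmetrically. You correctly flagged this as the delicate step, but the $B(R)$ truncation is the concrete missing ingredient needed to complete your argument.
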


The population minimizer for $\Sigma$ is not quite the 
underlying parameter, since a small fraction of cells is always
given weight zero due to the penalty term in the objective. 
But for the location $\mu$ we can prove that the unique
minimizer is indeed the underlying parameter vector, so
the cellMCD functional for location is Fisher consistent:
\begin{proposition}\label{prop:Fisherconsistency}
Let $F$ be a strictly unimodal elliptical distribution  
with center $\mu$ and a density function.  
For any $\Sigma$, we then have the unique 
$\argmin_{m \in \mathbb{R}^d} \,
  \G(m, \Sigma, F) = \mu\;.$
\end{proposition}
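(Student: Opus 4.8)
The plan is to exploit the reflection symmetry of the elliptical law $F$ about its center $\mu$. Write each candidate term inside the minimum in \eqref{eq:obj2} as $c_w(\Sigma)+\MD^2(x,w,m,\Sigma)$, where $c_w(\Sigma):=\ln|\Sigma^{(w)}|+\dw\ln(2\pi)+\bpenalt(\bone-w)^\top$ depends on neither $x$ nor $m$, and observe that the partial Mahalanobis distance sees $(x,m)$ only through $x^{(w)}-m^{(w)}$. Hence the joint reflection $x\mapsto 2\mu-x$, $m\mapsto 2\mu-m$ flips the sign of that difference and leaves every term, and therefore $\g_{m,\Sigma}(x)$, unchanged. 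Since $F$ is invariant under $x\mapsto 2\mu-x$, this yields $\G(2\mu-m,\Sigma,F)=\G(m,\Sigma,F)$, so the objective is symmetric about $m=\mu$. I would then center the problem: put $m=\mu+\delta$ and let $\tilde F$ be the (centered elliptical) law of $x-\mu$, so it suffices to show that $I(\delta):=\int\Psi(v-\delta)\,\tilde F(\diff v)$ is uniquely minimized at $\delta=0$, where $\Psi(v):=\min_{w\in\{0,1\}^d}\{c_w(\Sigma)+v^{(w)\top}(\Sigma^{(w)})^{-1}v^{(w)}\}$.

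Two structural facts about $\Psi$ drive the argument. First, $\Psi$ is even. Second, being a minimum of convex even quadratics, $\Psi$ is nondecreasing along every ray from the origin, so each sublevel set $S_\tau:=\{v:\Psi(v)\le\tau\}$ is symmetric and star-shaped about $0$; explicitly $S_\tau=\bigcup_w C_w$ is a union of convex symmetric ``cylinders'' $C_w:=\{v:v^{(w)\top}(\Sigma^{(w)})^{-1}v^{(w)}\le\tau-c_w(\Sigma)\}$, each constraining only the coordinates in $w$. Using symmetry of $\tilde F$ I would write $I(\delta)=\tfrac12\int[\Psi(v-\delta)+\Psi(v+\delta)]\,\tilde F(\diff v)$ and compare with $I(0)$. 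Crucially, $\Psi$ is a capped quadratic and hence not convex, so the inequality $I(\delta)\ge I(0)$ cannot be obtained pointwise; instead I would pass to a layer-cake decomposition of $\Psi$, together with the representation of the symmetric unimodal density of $\tilde F$ as a mixture of uniform densities on its elliptical level sets. This reduces the claim to a translation inequality: for every sublevel set $S_\tau$ and every centered ellipsoid $E$, the translate $S_\tau+\delta$ captures no more Lebesgue mass of $E$ than $S_\tau$ does.

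The main obstacle is exactly this last inequality, because $S_\tau$ is a union of convex symmetric sets and so is not itself convex: Anderson's classical translation theorem applies to each convex piece but not to the union, and expanding by inclusion–exclusion fails since the alternating signs destroy monotonicity in $\delta$ even though each convex intersection individually obeys Anderson. I expect the resolution to rest on the special geometry of the $C_w$ — each a convex symmetric cylinder containing the origin — and I would prove the translation inequality for such star-shaped symmetric unions directly, e.g.\ by a monotonicity-in-$\delta$ argument along the shift ray or by reducing to the convex pieces while controlling overlaps through the nested structure of the optimality regions of $w$. Finally, strict unimodality of $F$ is what upgrades this to a strict inequality for $\delta\neq0$: any nonzero shift strictly lowers the captured mass of the level sets on a set of positive measure, so $I(\delta)>I(0)$, giving the unique minimizer $\argmin_m\G(m,\Sigma,F)=\mu$.
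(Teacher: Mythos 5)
Your reduction is sound up to, but not including, the decisive step. The symmetry argument giving $\G(2\mu-m,\Sigma,F)=\G(m,\Sigma,F)$, the centering, and the structural facts about $\Psi$ (evenness, monotonicity along rays from the origin, sublevel sets $S_\tau=\bigcup_w C_w$ equal to symmetric star-shaped unions of convex cylinders) are all correct, and they match properties the paper itself establishes for $\g_{\mu,\Sigma}$ (Lemma~\ref{lemma:gfunction}). The layer-cake decomposition of $\Psi$ combined with the mixture-of-uniforms representation of the unimodal elliptical density is also a legitimate reduction. But after that reduction the entire proof rests on the translation inequality ``$S_\tau+\delta$ captures no more mass of each centered ellipsoid than $S_\tau$,'' and this is precisely what you do not prove: you correctly note that Anderson's theorem applies only to the convex pieces, that the union is not convex, and that inclusion--exclusion fails because of the alternating signs, and you then merely conjecture that the special cylinder geometry will rescue the inequality, offering two undeveloped suggestions (``monotonicity-in-$\delta$ along the shift ray'' or ``controlling overlaps through the nested optimality regions''). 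That conjecture is the whole difficulty of the proposition, not a routine verification: Anderson-type translation inequalities are known to fail for general symmetric star-shaped sets, so any valid argument must genuinely exploit the coordinate-subspace cylinder structure against non-product elliptical level sets, and nothing in the proposal shows this can be done. The strictness needed for uniqueness of the minimizer is likewise only asserted.

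It is instructive to see how the paper sidesteps exactly this obstruction: it never poses a $d$-dimensional set inequality. Fixing the function and shifting the density, it passes to hyperspherical coordinates and, for each direction $v\in S^{d-1}$, studies the line integral $\int_{t\in\mathbb{R}} h(t)\,f(tv+\mu)\,\diff t$ with $h(t)=\g_0(tv)\,|t|^{d-1}$. Along a line through the center of $\g$, your monotonicity-along-rays property makes $h$ symmetric and monotone in $|t|$; in one dimension a symmetric star-shaped sublevel set is an interval, hence convex, so the non-convexity you fought simply disappears. Ellipticity and strict unimodality of $F$ make $t\mapsto f(tv+\mu)$ symmetric about $t_0=-\mu^\top v$ and strictly unimodal, and an elementary univariate rearrangement computation (Lemma~\ref{lemma:univariate}, pairing $x$ with $-x$) shows the inner integral is uniquely minimized at $t_0=0$, i.e.\ $\mu^\top v=0$; uniqueness follows since only $\mu=0$ is orthogonal to every direction. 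If you wish to salvage your proposal, the repair is essentially this: run your layer-cake argument along each line through the center of $\Psi$ rather than in $\mathbb{R}^d$, which turns your problematic translation inequality into the trivial one-dimensional (convex) case.
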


\section{Algorithm}
\label{sec:algo}

In the algorithm we will need the following
result about decomposing the Mahalanobis
distance and the likelihood.

\begin{proposition}\label{prop:split}
Let us split the $d$-variate case $\bx$
into two nonempty blocks, and split $\bmu$ and 
the $d \times d$ positive definite matrix
$\bSigma$ accordingly, like 
\begin{equation*}
\begin{array}{lll}
\bx=\begin{bmatrix}
\bx_1\\
\bx_2
\end{bmatrix} \;\;\;\; &
\bmu = \begin{bmatrix}
\bmu_1\\
\bmu_2
\end{bmatrix} \;\;\;\; &
\bSigma = \begin{bmatrix}
\bSigma_{11} & \bSigma_{12}\\
\bSigma_{21} & \bSigma_{22}
\end{bmatrix} \;.
\end{array}
\end{equation*}
Then $\MD^2(\bx,\bmu,\bSigma) =
(\bx-\bmu)^\top \bSigma^{-1}(\bx-\bmu)$
and $L(\bx,\bmu,\bSigma) =
-2\ln(f(\bx,\bmu,\bSigma))$ satisfy
\begin{equation}\label{eq:addmd}
  \MD^2(\bx,\bmu,\bSigma) =
	\MD^2(\bx_1,\bhx_1,\bC_1) +
	\MD^2(\bx_2,\bmu_2,\bSigma_{22})
\end{equation}
\begin{equation}\label{eq:addlikelihood}
  L(\bx,\bmu,\bSigma) =
	L(\bx_1,\bhx_1,\bC_1) + 
	L(\bx_2,\bmu_2,\bSigma_{22})
\end{equation}
for\; $\bhx_1= \bmu_1 + 
  \bSigma_{12}\bSigma_{22}^{-1}
  (\bx_2 - \bmu_2)$\; 
and\; $\bC_1=\bSigma_{11}-\bSigma_{12}
 \bSigma_{22}^{-1}\bSigma_{21}$\,.
\end{proposition}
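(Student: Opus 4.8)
The plan is to reduce everything to the block LDL (Schur complement) factorization of the positive definite matrix $\bSigma$. Writing $d_1$ and $d_2$ for the sizes of the two blocks (so $d=d_1+d_2$), the first step is to verify the factorization
\[
\bSigma = \boldsymbol{L}\,\boldsymbol{D}\,\boldsymbol{L}^\top,
\quad
\boldsymbol{L} = \begin{bmatrix} \bI & \bSigma_{12}\bSigma_{22}^{-1}\\ \bzero & \bI\end{bmatrix},
\quad
\boldsymbol{D} = \begin{bmatrix} \bC_1 & \bzero\\ \bzero & \bSigma_{22}\end{bmatrix}.
\]
This is checked by multiplying out the right-hand side and using the identity $\bSigma_{11} = \bC_1 + \bSigma_{12}\bSigma_{22}^{-1}\bSigma_{21}$, which is just the definition of $\bC_1$, together with $(\bSigma_{12}\bSigma_{22}^{-1})^\top = \bSigma_{22}^{-1}\bSigma_{21}$, valid because $\bSigma_{22}$ is symmetric and $\bSigma_{21}=\bSigma_{12}^\top$. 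The factorization makes sense because $\bSigma_{22}$, being a principal submatrix of a positive definite matrix, is invertible.

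Two consequences follow at once. Since $\boldsymbol{L}$ is unit upper triangular, $|\boldsymbol{L}|=1$, hence $|\bSigma|=|\boldsymbol{D}|=|\bC_1|\,|\bSigma_{22}|$; taking logs gives the split of the determinant term. For the Mahalanobis distance I would set $\by := \boldsymbol{L}^{-1}(\bx-\bmu)$ and use $\bSigma^{-1}=\boldsymbol{L}^{-\top}\boldsymbol{D}^{-1}\boldsymbol{L}^{-1}$ to rewrite $\MD^2(\bx,\bmu,\bSigma)=\by^\top\boldsymbol{D}^{-1}\by$. Because $\boldsymbol{L}^{-1}$ is obtained from $\boldsymbol{L}$ by flipping the sign of its off-diagonal block, the two blocks of $\by$ come out as $(\bx_1-\bmu_1)-\bSigma_{12}\bSigma_{22}^{-1}(\bx_2-\bmu_2)=\bx_1-\bhx_1$ and $\bx_2-\bmu_2$. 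As $\boldsymbol{D}$ is block diagonal, $\by^\top\boldsymbol{D}^{-1}\by$ separates into $(\bx_1-\bhx_1)^\top\bC_1^{-1}(\bx_1-\bhx_1)+(\bx_2-\bmu_2)^\top\bSigma_{22}^{-1}(\bx_2-\bmu_2)$, which is exactly \eqref{eq:addmd}.

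Equation \eqref{eq:addlikelihood} then assembles from three additive splits: the log-determinant split $\ln|\bSigma|=\ln|\bC_1|+\ln|\bSigma_{22}|$, the trivial split $d\ln(2\pi)=d_1\ln(2\pi)+d_2\ln(2\pi)$, and the Mahalanobis split just obtained. Regrouping the six resulting terms according to the definition $L=\ln|\cdot|+(\text{dim})\ln(2\pi)+\MD^2$ yields $L(\bx_1,\bhx_1,\bC_1)+L(\bx_2,\bmu_2,\bSigma_{22})$.

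None of these steps is deep: the entire content is the Schur-complement algebra, and the factorization itself is standard. The only place that needs care is recognizing that the first block of $\boldsymbol{L}^{-1}(\bx-\bmu)$ is precisely $\bx_1-\bhx_1$, with $\bhx_1=\bmu_1+\bSigma_{12}\bSigma_{22}^{-1}(\bx_2-\bmu_2)$ the Gaussian conditional mean of $\bx_1$ given $\bx_2$ and $\bC_1$ the corresponding conditional covariance; keeping the roles of the two blocks consistent throughout is the only real bookkeeping hazard.
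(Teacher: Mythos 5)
Your proof is correct and takes essentially the same route as the paper: the paper's factorization $\bSigma^{-1}=\bA\bB\bA^\top$ with $\bA=\begin{bmatrix}\bI & \bzero\\ -\bSigma_{22}^{-1}\bSigma_{21} & \bI\end{bmatrix}$ and $\bB=\begin{bmatrix}\bC_1^{-1} & \bzero\\ \bzero & \bSigma_{22}^{-1}\end{bmatrix}$ is exactly the inverse of your Schur-complement factorization $\bSigma=\boldsymbol{L}\boldsymbol{D}\boldsymbol{L}^\top$ (indeed $\bA=\boldsymbol{L}^{-\top}$), so your vector $\boldsymbol{L}^{-1}(\bx-\bmu)$ coincides with the paper's $\bA^\top\bx$ after centering. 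The determinant split $|\bSigma|=|\bC_1|\,|\bSigma_{22}|$ and the assembly of \eqref{eq:addlikelihood} from the three additive pieces match the paper's computation step for step.
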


The proof can be found in section A.3 in the
Supplementary Material. 
The proposition can be interpreted as follows. 
Take a case $\bx_i$ with some but not all
cells missing, and for simplicity assume 
that its missing components come first. 
Then put $\bx_1=\bx_i^{(\bone-\bw_i)}$ and 
$\bx_2$ the remainder.
If $(\bmu,\bSigma)$ are the true underlying 
parameters, $\bhx_1$ is the conditional 
expectation $E[\bX_1|\bX_2=\bx_2]$ and $\bC_1$ 
is the conditional covariance matrix 
$\Cov[\bX_1|\bX_2=\bx_2]$.
The additivity in \eqref{eq:addmd} and 
\eqref{eq:addlikelihood} justifies the use of the 
partial Mahalanobis distances and the observed 
likelihood in our setting. Moreover, the fact 
that the difference of two `nested' $\MD^2$ is 
again an $\MD^2$ and hence non-negative implies 
that the $\MD^2$ is monotone for nested sets of 
variables. In particular, if $\bx$ is observed 
fully we can write 
\begin{equation}\label{eq:summd}
\begin{array}{lll} 
	& \MD^2(\bx,\bmu,\bSigma)\\
	&= \displaystyle 
	\frac{r^2(x_1|x_2,\ldots,x_d)}
	{s^2(X_1|x_2,\ldots,x_d)}+
	\frac{r^2(x_2|x_3,\ldots,x_d)}
	{s^2(X_2|x_3,\ldots,x_d)}+\cdots
  \displaystyle + \frac{r^2(x_{d-1}|x_d)}
	{s^2(X_{d-1}|x_d)} + 
	\frac{(x_d-\mu_d)^2}{\Sigma_{dd}}
\end{array}
\end{equation}
where each time $s^2$ is the matrix $\bC_1$ 
(which is a scalar here) and the residuals 
are\linebreak
$r(x_1|x_2,\ldots,x_d)=
x_1-\hx_1(x_2,\ldots,x_d)$ and so on.
Note that \eqref{eq:summd} holds for any order 
of the $d$ variables. However, in each order the 
relative contribution of variable $j$ to 
the total $\MD^2(\bx,\bmu,\bSigma)$ may be 
different.
For the likelihood we obtain similarly
\begin{equation} \label{eq:sumlikelihood}
\begin{array}{lll}
  L(\bx,\bmu,\bSigma) &=&
	L(x_{1},\mu_1,C_{1|2,\ldots,d})+
	L(x_{2},\mu_2,C_{2|3,\ldots,d})+
	\cdots \\&&
  +\, L(x_{d-1},\mu_{d-1},C_{d-1|d})+
	L(x_{d},\mu_{d},\Sigma_{dd})
\end{array}
\end{equation}
in which the terms do not need to be positive.

If we set $\penalt_j = 0$ in the 
objective function~\eqref{eq:cellMCD} of  
cellMCD and use casewise weights, i.e. 
casewise constant $w_{ij}$\,, we recover 
the objective function~\eqref{eq:MCDcase} of 
the original casewise MCD. The latter is not 
convex in $\bmu$ and $\bSigma$, so neither
is~\eqref{eq:cellMCD}.
The crucial ingredient in the algorithm 
for the casewise MCD is the concentration 
step (C-step) of \cite{fastMCD1999}.
After each C-step the new objective value is 
less than or equal to the old objective value, 
so iterating C-steps always converges
to a stationary point.
We will now construct a C-step for cellMCD
with the same properties. 
Let us denote the current solution of 
cellMCD by $\bhmu^{(k)}$, $\bhSigma^{(k)}$, 
and $\bW^{(k)}$.
Then the new C-step proceeds as follows.

\noindent {\bf Part (a) of the C-step.} 
In this part we update the matrix $\bW$ in
\eqref{eq:cellMCD} while keeping $\bhmu^{(k)}$ 
and $\bhSigma^{(k)}$ unchanged.
We start the new pattern $\BtW$ 
as $\BtW = \bW^{(k)}$, and then we
modify $\BtW$ column by column, by
cycling over the variables $j=1,\ldots,d$.
The fact that this job can be done by column
is advantageous for maintaining the constraint.
Assume we are working on column $j$ of
$\BtW$, possibly after having modified
other columns of $\BtW$ already.
The current pattern of variable $j$ is
$\BtW_{\cdot j}$ and we want to obtain
a new pattern for column $j$ to reduce the 
objective while leaving the other columns
of $\BtW$ unchanged. 
Note that we can write the objective 
\eqref{eq:cellMCD} as
$\sum_{i=1}^n{\tL(\bx_i,\bw_i,\bmu,
\bSigma,\bpenalt)}$ where
\begin{equation*}
  \tL(\bx_i,\bw_i,\bmu,\bSigma,\bpenalt)
	= \ln |\bSigma^{(\bw_i)}| + 
	\dwi\ln(2\pi) +
	\MD^2(\bx_i,\bw_i,\bmu,\bSigma)\, + 
	\sum_{j=1}^d \penalt_j |1-w_{ij}|
\end{equation*}
with $\bpenalt = (\penalt_1,\ldots,\penalt_d)$.
For each $i=1,\ldots,n$ we compute the
difference in the total objective 
\eqref{eq:cellMCD} between putting
$\tw_{ij} = 1$ and putting $\tw_{ij} = 0$,
which is
\begin{align}\label{eq:deltaij}
  \Delta_{ij} &= \tL(\bx_i,\tw_{ij}=1,
	   \bhmu^{(k)},\bhSigma^{(k)},\bpenalt) -
	   \tL(\bx_i,\tw_{ij}=0,\bhmu^{(k)},
	   \bhSigma^{(k)},\bpenalt) \nonumber \\ 
	&= \ln |\bSigma^{(\tw_{ij}=1)}| -
	   \ln |\bSigma^{(\tw_{ij}=0)}| +
	   \ln(2\pi) +
     \MD^2(x_{ij},\hx_{ij},C_{ij})
		 -\penalt_j \nonumber \\		
	&= \ln(C_{ij}) + \ln(2\pi) +
	   (x_{ij} - \hx_{ij})^2/C_{ij}
		 - \penalt_j
\end{align}
where the second and third equalities use
Proposition \ref{prop:split} in which 
$\hx_{ij}$ and $C_{ij}$ are now scalars.
Note that $\hx_{ij} = \hmu_j^{(k)} + 
 \bhSigma_{j,o}^{(k)} 
 (\bhSigma_{o,o}^{(k)})^{-1}
 (\bhx_{i,o}- \bhmu_{o}^{(k)})$ is the 
conditional expectation of the cell 
$X_{ij}$ conditional on the observed 
(subscript `o') cells in row $i$, i.e. 
those with $\tw_{i\cdot}=1$, taking into 
account any earlier modifications to $\BtW$.
Analogously, $C_{ij} = \bhSigma_{j,j}^{(k)}- 
\bhSigma_{j,o}^{(k)}
(\bhSigma_{o,o}^{(k)})^{-1} 
\bhSigma_{o,j}^{(k)}$ is the conditional
variance of $X_{ij}$\,.  
We now need to minimize 
$\sum_{i=1}^n \tL(\bx_i,\tw_{ij},
\bhmu^{(k)},\bhSigma^{(k)},\bpenalt)$
subject to the constraint
$\sum_{i=1}^n \tw_{ij} \geqslant h$.
If $\Delta_{ij} \leqslant 0$ holds for $h$ 
or more $i$, then the minimum is attained
by setting those $\tw_{ij}$ to 1 and the 
others to 0. If not, it is attained by
setting $\tw_{ij}$ to 1 for the $i$
with the $h$ smallest $\Delta_{ij}$  
and to 0 otherwise.
After cycling through all columns of $\BtW$
we set $\bW^{(k+1)} = \BtW$.

\noindent {\bf Part (b) of the C-step.} Keeping 
the new pattern $\bW^{(k+1)}$ fixed we now want
to update $\bhmu$ and $\bhSigma$. 
As $\bW^{(k+1)}$ is fixed the penalty term 
in \eqref{eq:cellMCD} does
not enter the minimization, so we are in the
situation of the objective \eqref{eq:cellMLE}
for incomplete data, where the EM algorithm can
be used. We first carry out one E-step 
which computes conditional means and products 
for the data entries with $\bW^{(k+1)}_{ij}=0$,
for all rows. 
Next, we carry out an M-step, followed by
imposing the constraint $\eig_d \geqslant a$
by truncating the eigenvalues of $\bhSigma$
from below at $a$.
The C-step ends by reporting $\bW^{(k+1)}$, 
$\bhmu^{(k+1)}$ and $\bhSigma^{(k+1)}$.

\begin{proposition}\label{prop:Cstep}
(i) Each C-step turns a triplet
$(\bhmu^{(k)}, \bhSigma^{(k)}, 
\bW^{(k)})$ satisfying the constraints
in \eqref{eq:cellMCD} into a new triplet
$(\bhmu^{(k+1)}, \bhSigma^{(k+1)}, 
\bW^{(k+1)})$ which satisfies the same
constraints and whose 
objective~\eqref{eq:cellMCD} is less than
or equal to before. (ii) Iterating C-steps
always converges.
\end{proposition}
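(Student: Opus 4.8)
The plan is to prove (i) by treating the two parts of the C-step separately, in each case checking that the constraints in \eqref{eq:cellMCD} are preserved and that the objective does not increase; then (ii) follows because a non-increasing sequence that is bounded below converges. It is useful to read Part (a) as a block coordinate descent over the columns of $\bW$, and Part (b) as one constrained majorization step for $(\bmu,\bSigma)$.

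For Part (a), the matrix $\bhSigma^{(k)}$ is left unchanged, so the constraint $\eig_d(\bSigma)\geqslant a$ still holds. The column constraint is preserved by construction: in either branch of the update rule at least the $h$ indices $i$ with the smallest $\Delta_{ij}$ receive $\tw_{ij}=1$, so $||\BtW_{\cdot j}||_0\geqslant h$. For monotonicity, fix column $j$ and freeze all other columns of $\BtW$. Flipping $\tw_{ij}$ affects only row $i$, and by the additive split of Proposition~\ref{prop:split} the change in that row's contribution when moving from $\tw_{ij}=0$ to $\tw_{ij}=1$ is exactly the quantity $\Delta_{ij}$ of \eqref{eq:deltaij}. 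Hence the part of the objective depending on column $j$ equals a constant (depending only on the frozen columns) plus $\sum_{i=1}^n \tw_{ij}\,\Delta_{ij}$, and minimizing this over $\{0,1\}$-vectors subject to $\sum_i \tw_{ij}\geqslant h$ is solved precisely by the stated rule. Each column update therefore lowers or preserves the objective, and cycling over $j=1,\ldots,d$ keeps it non-increasing.

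For Part (b), the pattern $\bW^{(k+1)}$ is held fixed, so the column constraint is inherited from Part (a) and the penalty term in \eqref{eq:cellMCD} is a constant; minimizing over $(\bmu,\bSigma)$ thus reduces to minimizing the observed-likelihood objective \eqref{eq:cellMLE}. The main obstacle is to show that the E-step, M-step, and eigenvalue truncation \emph{together} decrease this objective while enforcing $\eig_d\geqslant a$. I would argue this through the majorization view of EM: the E-step produces a surrogate $Q(\bmu,\bSigma\mid\bhmu^{(k)},\bhSigma^{(k)})$ that, up to an additive constant, majorizes \eqref{eq:cellMLE} with equality at the current iterate, and as a function of $\bSigma$ it has the Gaussian form $\ln|\bSigma|+\tr(\bSigma^{-1}S)$ for the completed scatter matrix $S$ assembled in the E-step. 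Its unconstrained minimizer is $\bSigma=S$, and under $\eig_d(\bSigma)\geqslant a$ the constrained minimizer is obtained by diagonalizing $S$ and raising every eigenvalue below $a$ up to $a$ --- exactly the truncation in the algorithm. Because the previous iterate $\bhSigma^{(k)}$ is itself feasible, the constrained minimizer of $Q$ cannot exceed $Q$ at the previous iterate, and majorization then gives \eqref{eq:cellMLE} at $(\bhmu^{(k+1)},\bhSigma^{(k+1)})$ no larger than at $(\bhmu^{(k)},\bhSigma^{(k)})$. This settles (i).

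For (ii), it remains to bound the objective below. Each $\MD^2$ term, each $\dwi\ln(2\pi)$ term, and the penalty are nonnegative, so only the $\ln|\bSigma^{(\bw_i)}|$ terms could drive it down. By Cauchy interlacing the smallest eigenvalue of any principal submatrix $\bSigma^{(\bw_i)}$ is at least $\eig_d(\bSigma)\geqslant a$, whence $\ln|\bSigma^{(\bw_i)}|\geqslant \dwi\ln a$. The whole objective is therefore bounded below by a finite constant depending only on $n$, $d$, and $a$. Combined with the monotone decrease from (i), the sequence of objective values is non-increasing and bounded below, hence convergent. One may note in addition that $\bW$ ranges over a finite set, which can be exploited to refine the statement toward convergence of the iterates themselves.
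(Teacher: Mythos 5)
Your proposal is correct and takes essentially the same route as the paper's proof: part (a) is handled by recognizing the column update as the exact constrained minimizer of $\sum_i \tw_{ij}\Delta_{ij}$, part (b) by the EM descent property combined with eigenvalue truncation, and (ii) by monotonicity plus the lower bound $\ln|\bSigma^{(\bw_i)}| \geqslant \dwi \ln a$ from the fact that eigenvalues of a principal submatrix are at least $\eig_d(\bSigma) \geqslant a$. The only cosmetic differences are that you sketch the constrained-M-step descent yourself via majorization and von Neumann alignment, where the paper defers to Little and Rubin (section 11.3) and Aubry et al.\ (2021), and that you state the lower bound uniformly in $\bW$ rather than as a minimum over the finitely many feasible patterns.
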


For the proof see section A.3 in the 
Supplementary Material, which also 
contains the pseudocode of the algorithm.
Many variations of the \mbox{C-step} are 
possible, such as cycling through the 
columns of $\BtW$ in a different order. 
We could also cycle through the columns of 
$\BtW$ more than once in part (a), and/or 
run more than one EM-step in part (b).
But experiments in section A.6
of the Supplementary Material show that 
these changes have a negligible and 
non-systematic effect on estimation 
accuracy, so we stay with the
current version which is the fastest.

Note that cellMCD can still be used when the 
data contains missing cells, indicated by 
$u_{ij}$ which are 0 for missing cells and 1
elsewhere. 
In that situation we first have to remove 
variables with more than $n-h$ missing
values. 
In the C-step it then suffices to force 
$w_{ij}=0$ whenever $u_{ij}=0$.

In order to start our C-steps we need an
initial estimator.
In our experiments we found that the DDCW
estimator of \cite{cellHandler} gives good
results and is very fast. It is a 
combination of the DetectDeviatingCells 
(DDC) method of \cite{DDC2018} and the
fast correlation method in 
\citep{raymaekers2021fast}.
DDCW is described in section A.4 of the 
Supplementary Material. Instead of 
starting from a single initial estimate, 
one could also start from several initial 
estimates. Iterating C-steps from each (with 
the same $\penalt_j$ and $a>0$) until 
convergence, one can then keep the solution 
with the lowest objective~\eqref{eq:cellMCD}.

The only remaining question is how to select
the constants $\penalt_j$ but this is quite 
simple, we do not need cross-validation or 
an information criterion.
In \eqref{eq:deltaij} the term
$(x_{ij} - \hx_{ij})^2/C_{ij}$ is the
square of the residual $x_{ij} - \hx_{ij}$
standardized robustly. 
For inlying cells this should be below a
cutoff, for which we take the chi-squared 
quantile $\chi^2_{1,p}$ with one degree of 
freedom and probability $p$. 
The term $\ln(C_{ij})$ is approximated by
using the conditional variance of variable 
$j$ in the initial estimate $\bhSigma_0$\,, 
given by $C_j := 1/(\bhSigma_0^{-1})_{jj}$\,.
So we set each $\penalt_j$ equal to
\begin{equation} \label{eq:lambdaj}
  \penalt_j = \chi^2_{1,p} + \ln(2\pi) +
  \ln(C_j)\;.
\end{equation}

The effect of this choice is that a cell 
$x_{ij}$ is flagged iff it lies outside a robust 
tolerance interval around its predicted value
$\hx_{ij}$ with coverage probability $p$.
Therefore we only have to choose a single 
cutoff probability $p$ to generate all 
$\penalt_j$ automatically. 
From simulations and examples we found
that $p=0.99$ was a good choice overall,
so it is set as the default. 
Section A.5 provides more information 
on the $\penalt_j$ and the choice of $p$. 

The algorithm has been implemented as the 
R function \texttt{cellMCD()}. 
It starts by checking the data
for non-numerical variables, cases with
too many NA's and so on.
Next, it robustly standardizes the variables,
and then computes the initial estimator
followed by C-steps until convergence.
The constraint $\eig_d(\bhSigma) \geqslant a$
is applied to the standardized data, with 
default $a = 10^{-4}$.
The function also reports the number 
of flagged cells in each variable.
All the plots in the next section were made 
by the companion function 
\texttt{plot\_cellMCD()}.
Both functions have been included in the 
R package {\it cellWise} on CRAN.

\section{Illustration on real data}
\label{sec:examples}

We will illustrate cellMCD on the cars data  
obtained from the Top Gear website by 
\mbox{\cite{Alfons:robustHD}}, focusing on the 11
numerical variables \texttt{price}, 
\texttt{displacement}, \texttt{horsepower}, 
\texttt{torque}, \texttt{acceleration time}, 
\texttt{top speed}, \texttt{miles per gallon}, 
\texttt{weight}, \texttt{length}, 
\texttt{width}, and \texttt{height}.
This dataset is popular because both the 
variables and the cases (the cars) can 
easily be interpreted.
After removing two cars with mostly NA's
we have $n=295$. We also replaced the highly 
right-skewed variables \texttt{price}, 
\texttt{displacement}, \texttt{horsepower}, 
\texttt{torque}, and \texttt{top speed} by 
their logarithms.
On these data we ran cellMCD in its default
version.

To visualize the results, we first look by
variable. Consider variable $j$, say
\texttt{horsepower}.
Its $i$-th cell has observed value $x_{ij}$ 
as well as its prediction $\hx_{ij}$ 
obtained from the {\it unflagged} cells in 
the same row $i$, as in \eqref{eq:deltaij}. 
In \eqref{eq:deltaij} we also see the 
conditional variance $C_{ij}$ of this cell.
It is then natural to plot the 
{\it standardized cellwise residual}
\begin{equation} \label{eq:stdres}
  \mbox{stdres}_{ij} = 
	\frac{x_{ij} - \hx_{ij}}{\sqrt{C_{ij}}}
\end{equation}
which is NA when $x_{ij}$ is missing.
The left panel of 
Figure~\ref{fig:index_Zres/X} shows the
standardized residuals of the variable
\texttt{horsepower} versus the index
(case number) $i$.
This plot was made by the function 
\texttt{plot.cellMCD()}, which 
also draws a horizontal tolerance band 
given by $\pm\, c$ where
$c = \sqrt{\chi^2_{1,0.99}} \approx 2.57$\,.
Here, some residuals stick out below the
tolerance band. 
The Renault Twizy and Citroen DS3 are
energy savers, whereas the Caterham is 
a super lightweight fun car.
The most extreme outlier is the 
Chevrolet Volt with a standardized 
residual below $-8$. 
Top Gear lists this car's power
as 86 hp, which cellMCD says
is very low compared to what would
be expected from the other 10
characteristics of this car.
Looking it up revealed that the Volt
actually has 149 hp.
As far as we know this data error was 
not detected before.

\begin{figure}[!ht]
\centering
\vskip0.2cm
\includegraphics[width=1.0\textwidth]
  {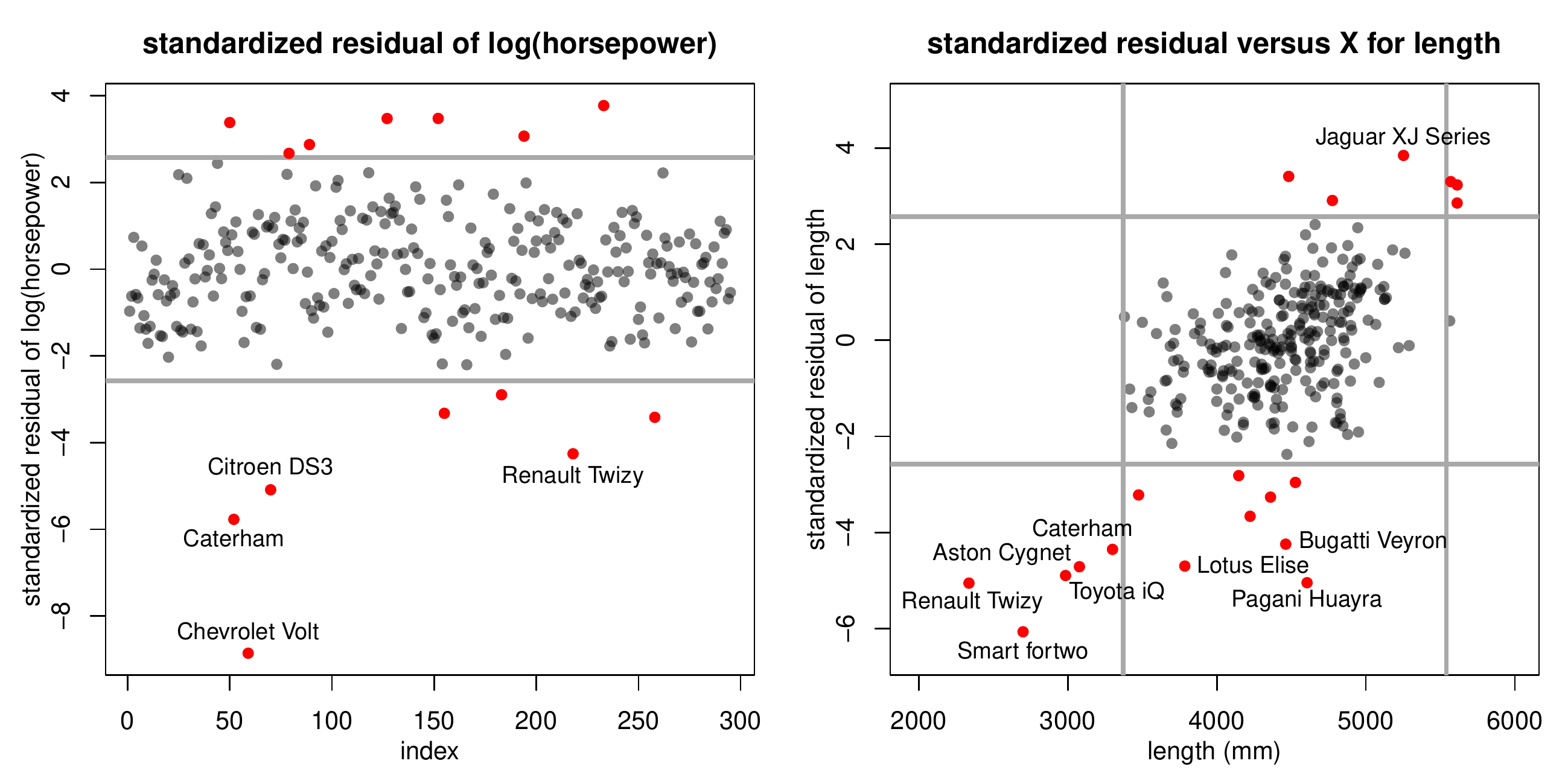}
\vskip-0.6cm
\caption{Top Gear data: (left) index plot of the
   standardized residual of 
   \texttt{log(horsepower)}; 
   (right) standardized residual of 
	 \texttt{length} versus observed 
	 \texttt{length}.}
\label{fig:index_Zres/X}
\end{figure}

The right panel of 
Figure~\ref{fig:index_Zres/X} plots the
standardized residuals of the variable
\texttt{length} versus the observed
\texttt{length} itself.
The vertical lines are at $T\pm\,cS$
where $T$ and $S$ are robust 
univariate location and scale estimates 
of \texttt{length}, obtained
from the function \texttt{estLocScale()}
in the R package {\it cellWise}.
The points to the left and right of 
such a vertical tolerance band are 
marginally outlying, i.e. their
\texttt{length} stands out by itself 
without regard to the other variables.
In the bottom left region of the plot
we see five cars that are marginal 
outliers to the left and at the same 
time have outlying negative residuals,
so they are short in absolute 
terms, as well as relative to what 
would be expected from their other
characteristics. 
The Smart fortwo, Renault Twizy, Toyota 
IQ and Aston Martin Cygnet are indeed tiny. 

However, not all cellwise outliers
are marginal outliers. 
In the middle bottom part of the plot
we marked three cars whose length is not
unusual by itself, but that are short
relative to what would be expected 
based on their other 10 variables.
They are sports cars, often built
small to achieve high speeds. 
Note that there could also be points that
lie inside the horizontal band but
(slightly) outside the vertical band.
They would correspond to cells that look a 
bit unusual in the variable $j$, but whose
observed value $x_{ij}$ is not that far
from the predicted $\hx_{ij}$ based
on its other variables.

The left panel of 
Figure~\ref{fig:Zres/pred_X/pred} plots
the standardized residual of each car's
\texttt{weight} versus its prediction.
Since all the points lie within the 
vertical tolerance band, no predictions 
are outlying.
But we do see some outlying residuals,
most of which can easily be explained.
The Bentley is a heavy luxury car, and
the Mercedes-Benz G an all-terrain 
vehicle. 
Below the horizontal tolerance band we 
see four lightweight sports cars.
What remains is the Peugeot 107 which
is small but not sporty at all.
Top Gear reports its weight as 210 kg,
which seems much too light for a car.
Based on its other characteristics,
cellMCD predicts its weight as 757 kg 
with a standard error of 89.5 kg. 
Looking up this car, its actual weight 
turns out to be 800 kg, so the value in 
the Top Gear dataset was mistaken. 

\begin{figure}[!ht]
\centering
\vspace{0.2cm}
\includegraphics[width=1.0\textwidth]
  {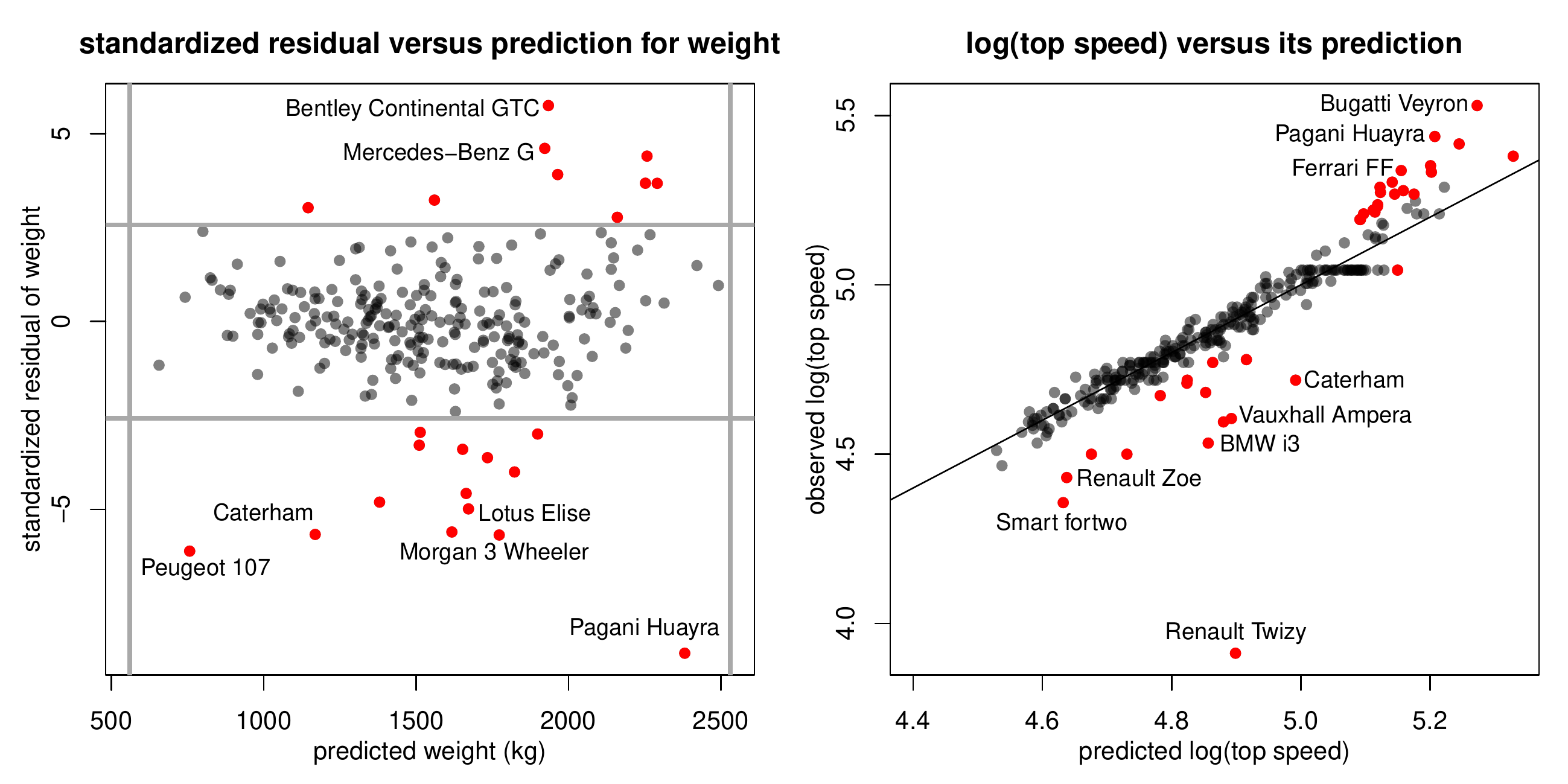}
\vskip-0.6cm  
\caption{Top Gear data: (left) standardized
  residual of \texttt{weight} versus its
  prediction; (right) observed 
  \texttt{log(top speed)} versus its
  prediction.}
\label{fig:Zres/pred_X/pred}
\end{figure}

The right panel of
Figure~\ref{fig:Zres/pred_X/pred} shows
the observed value of \texttt{top speed}
versus its prediction.
Below the superimposed $y=x$ line we find
some electric cars (BMW i3, Vauxhall
Ampera) and some small cars (Smart fortwo
and Renault Zoe). 
The one standing out most is the Renault 
Twizy, a tiny electric one-seater vehicle.
Above the line we see some extremely
fast sports cars.
Also note that some points appear to lie 
on a horizontal line.
Top Gear reports their top speed as 
155 mph, corresponding to 250 km/hour.
Many of these cars were produced by Audi, 
BMW and Mercedes with a built-in 250 
km/hour speed limiter. 

The four plot types in 
Figures~\ref{fig:index_Zres/X} 
and~\ref{fig:Zres/pred_X/pred}
all focus on a single variable.
It can also be instructive to look at a
pair of variables, say $j$ and $k$.
Figure~\ref{fig:bivariate} shows 
the variables \texttt{width} versus
\texttt{acceleration}.
The points for which $w_{ij}=0$ or
$w_{ik}=0$ or both are automatically
plotted in red.
The figure also contains an ellipse,
given by
\begin{equation} \label{eq:ellipse}
  \begin{bmatrix} x - \hmu_j & 
	y - \hmu_k \end{bmatrix}
  \begin{bmatrix}
  \hSigma_{jj} & \hSigma_{jk}\\
  \hSigma_{kj} & \hSigma_{kk}
  \end{bmatrix}^{-1}
  \begin{bmatrix} x - \hmu_j \\
	y - \hmu_k \end{bmatrix}
  = q
\end{equation}
where $q$ is the 0.99 quantile of the
$\chi_2^2$ distribution with two degrees 
of freedom.
Note that outlyingness in this type of
plot differs from cellwise outlyingness,
since the former refers to two variables
only, whereas the latter uses all 11
variables. So it is not unusual to see
some red points inside the ellipse, and
some black points outside it.

The width of the Land Rover is flagged as
this is a wide all terrain vehicle. 
The red
vertical line connects the observed point
$(x_{ij},x_{ik})$ to its predicted point
$(\hx_{ij},\hx_{ik})$ plotted in blue.
That the line is vertical means that the
\texttt{width} cell was flagged whereas 
the \texttt{acceleration} cell was not,
that is, $w_{ik} = 0$ and
$w_{ij} = 1$\,.
The acceleration of the \mbox{Ssangyong} 
Rodius and Lotus Elise 
is outlying on the left.
In fact, Top Gear lists their acceleration
time as 0 which is physically impossible:
presumably the true value was missing and
encoded as 0 instead of NA.
The same happens for the Renault Twizy.
Note that also the \texttt{width} cell of 
the Twizy is flagged, so the red line to 
its predicted point is slanted instead of
horizontal.
The Caterham also has both cells flagged,
as seen from its slanted line. 

\begin{figure}[!ht]
\centering
\includegraphics[width=0.5\textwidth]
  {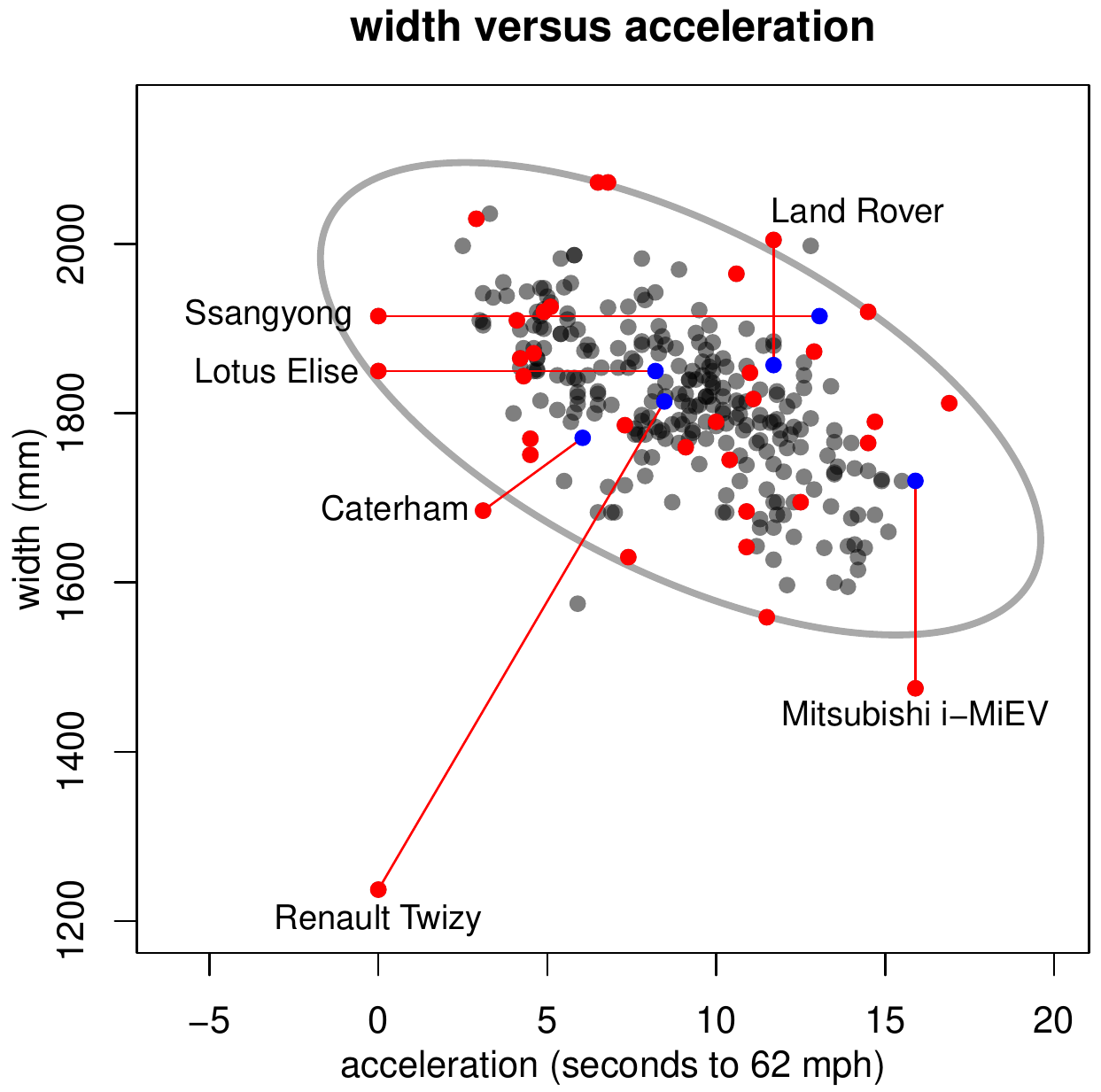}
\vskip-0.6cm  
\caption{Top Gear data: bivariate plot of
	\texttt{width} versus 
	\texttt{acceleration}.
	The 99\% tolerance ellipse is given
	by the cellMCD estimates $\bhmu$ and 
	$\bhSigma$ restricted to the variables
	in the bivariate plot, and the red lines 
	go to the predicted points shown in blue.}
\label{fig:bivariate}
\end{figure}

\section{Simulation results}
\label{sec:simul}

In this section we evaluate the performance
of cellMCD by a simulation study. 
The clean data is generated as $n$ points 
from a $d$-variate Gaussian distribution with 
mean $\bmu = \bzero$.
Since there is no affine equivariance, letting
$\bSigma$ be the identity matrix is not
sufficient. Instead we use the types ``A09'' 
and ``ALYZ''. 
The entries of the A09 correlation matrix are 
given by $\bSigma_{ij} = 0.9^{|i-j|}$, 
yielding both small and large correlations. 
The ALYZ type are randomly generated 
correlation matrices following the procedure 
of \cite{Agostinelli2015} and typically have 
mostly small absolute correlations. 
We consider three combinations of sample size 
and dimension $(n,d)$: $(100,10)$, 
$(400,20)$, and $(800,40)$.

In these clean data, we then replace a fraction 
$\varepsilon$ in $\{0.1,0.2\}$ of cells by 
contaminated cells. 
These are generated as follows. 
First, for each column in the data matrix we 
randomly sample $n\varepsilon$ indices of 
cells to be contaminated. 
In each row, say $(z_1, \ldots, z_d)$, we 
then collect the indices of the cells to be 
contaminated. 
Denote this set of size $k$ by 
$K = \{j_1, \ldots, j_k\}$. 
We next replace the cells 
$(z_{j_1}, \ldots, z_{j_k})$ by the 
$k$-dimensional vector 
$\gamma \sqrt{k}\,\bv_K/ \mbox{MD}(\bv_K, 
\bmu_K, \bSigma_K)$ where $\bmu_K$ and 
$\bSigma_K$ are $\bmu$ and $\bSigma$ 
restricted to the indices in $K$. 
The scalar $\gamma >0$ quantifies the distance 
of the outlying cells to the center of the 
distribution, and we vary $\gamma$ over 
$1, \ldots, 10$. 
The vector $\bv_K$ is the normed eigenvector 
of $\bSigma_K$ with the smallest eigenvalue. 
In each row, the outlying cells are thus 
structurally outlying in the subspace 
generated by the variables in $K$. 
Therefore, these cells will often not be 
marginally outlying, especially when $|K|$ 
is large and $\gamma$ is relatively small, 
which makes them hard to detect. 
The \texttt{R}-package \texttt{cellWise} 
\citep{cellWise} contains the function 
\texttt{generateData} which generates the 
contaminated data according to this 
procedure.

We compare the proposed method cellMCD to the 
following alternative estimators:
\begin{itemize}
\item \textbf{Grank}, \textbf{Spearman}: the 
      Gaussian and Spearman rank-based 
      estimators used in \cite{ollerer2015} and 
			\cite{croux2016}; 
\item \textbf{GKnpd}: the Gnanadesikan-Kettenring 
      estimator used in \cite{tarr2016};
\item \textbf{2SGS}: the two-step generalized 
      S-estimator of \cite{Agostinelli2015};
\item \textbf{DI}: the detection-imputation 
      algorithm of \cite{cellHandler}.
\end{itemize}

In order to evaluate the performance of the 
different estimators, we compute the 
Kullback-Leibler discrepancy between the 
estimated $\widehat{\bSigma}$ and the 
true $\bSigma$ given by
\begin{equation*}
  \mbox{KL}(\widehat{\bSigma}, 
  \bSigma) = \mbox{tr}(
	\widehat{\bSigma}\bSigma^{-1})- d - 
	\log(\det(\widehat{\bSigma} 
	\bSigma^{-1}))\;.	
\end{equation*}
For each setting of the simulation parameters 
we generate 100 random datasets, and average 
the Kullback-Leibler discrepancy over these 100 
replications. (For the variability around 
these averages see subsection A.6.1.)

\begin{figure}[!ht]
\centering
\includegraphics[width=0.93\textwidth]{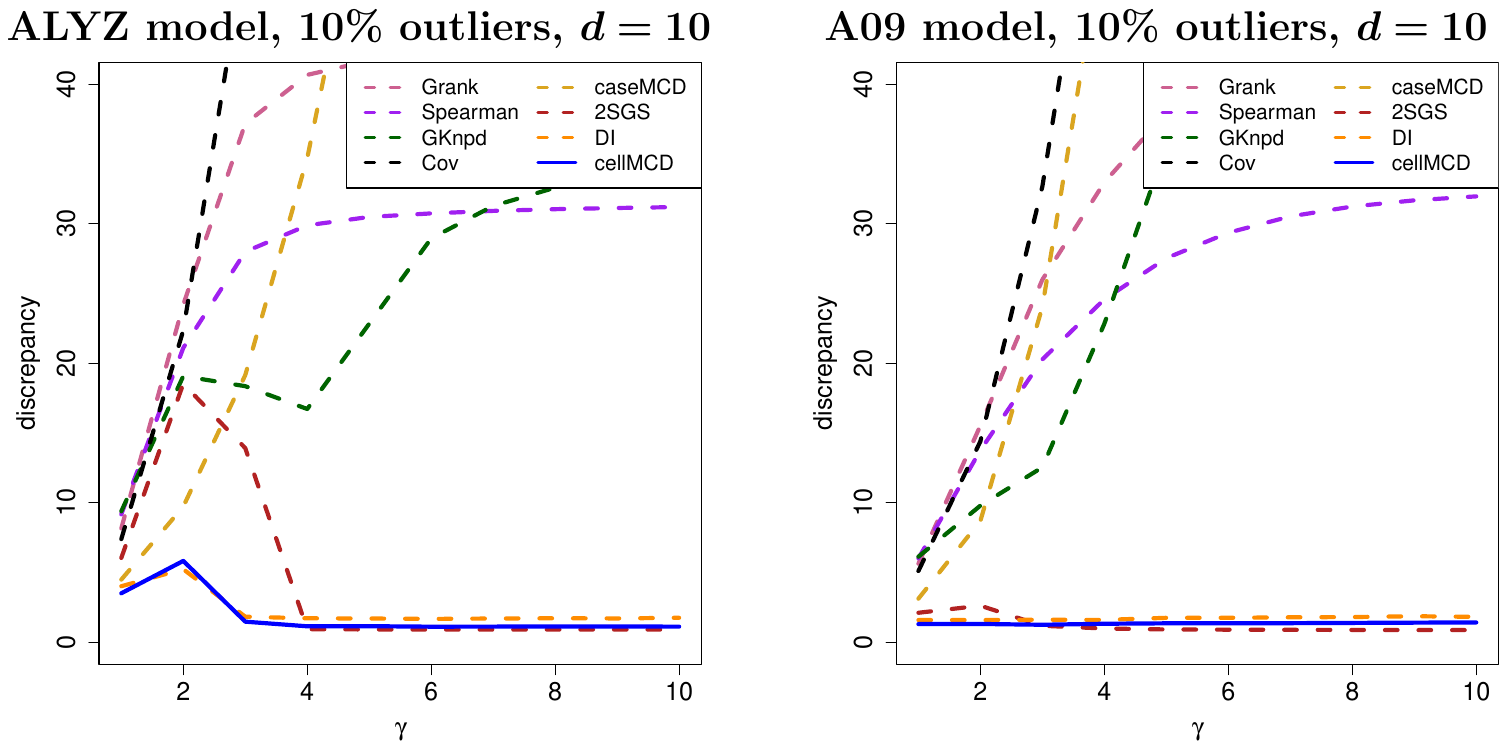}
\vspace{-0.8cm}
\caption{Discrepancy of estimated covariance
  matrices for $d = 10$ and $n = 100$.}
\label{fig:KLdiv_lowdim}
\end{figure}

Figure~\ref{fig:KLdiv_lowdim} presents the 
results for $d = 10$, $n = 100$ and 
$\varepsilon = 0.1$. (The results for 
$\varepsilon = 0.2$ were similar.)
Both cellMCD and DI perform well, as does 
2SGS provided $\gamma \geqslant 4$. 
As expected, the classical covariance
matrix (Cov) and the casewise MCD
(labeled caseMCD) were not robust to
these adversarial cellwise outliers.
Note that the performances of Grank, 
Spearman and GKnpd do not improve as $\gamma$ 
increases. While these estimators bound the 
influence that a single cell can have on the 
estimation, the effect remains substantial
as the cell becomes more outlying. 
This is in contrast to 2SGS, DI and cellMCD
in which far outliers get a zero weight. 

\begin{figure}[!ht]
\centering
\includegraphics[width=0.95\textwidth]{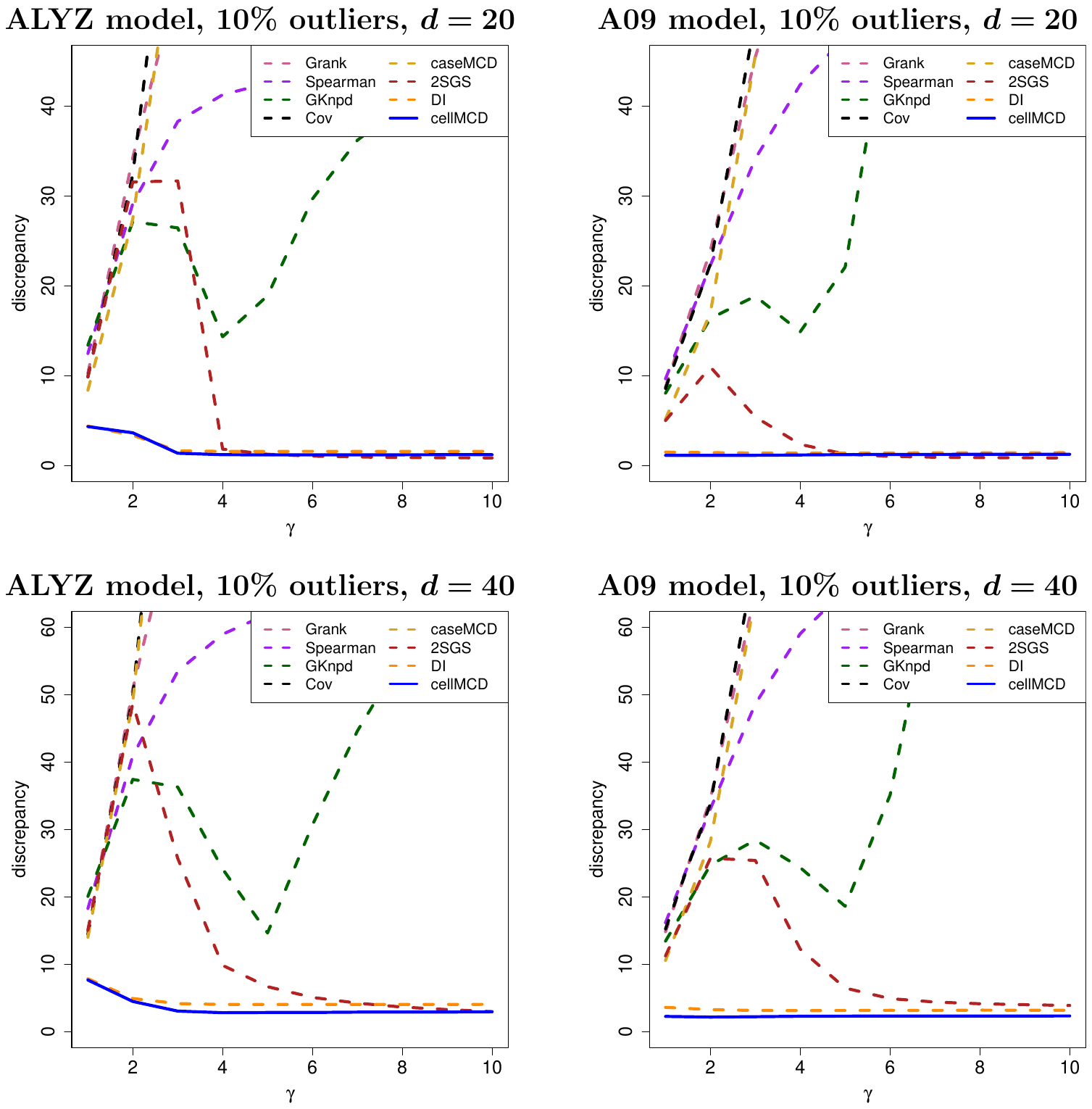}
\vspace{-0.8cm}
\caption{Discrepancy of estimated covariance
  matrices for $d = 20$ and $n = 400$ (top panels)
	and for $d = 40$ and $n = 800$ (bottom panels).}
\label{fig:KLdiv_highdim}
\end{figure}

The top panels of 
Figure~\ref{fig:KLdiv_highdim} show the 
results for $n = 400$ and $d = 20$.
The relative performances are 
similar to Figure~\ref{fig:KLdiv_lowdim}.
The 2SGS method still does well when 
$\gamma>4$, but now suffers more for low 
$\gamma$. 
The performances of DI and cellMCD are again 
very close, with cellMCD often doing slightly 
better. 

The lower panels with $n = 800$ and $d = 40$
are similar, with cellMCD performing best
for all values of $\gamma$ while DI is quite
close, and 2SGS only doing well for higher 
$\gamma$. 

Table~\ref{tab:comptimes} lists the 
computation times of the methods in the 
simulation, in seconds.
The first five methods are fast but they
performed poorly. 
The bottom three methods did better.
In dimensions 20 and 40 the cellMCD method was
the fastest among them.

\begin{table}[ht]
\centering
\caption{Computation times of the 
methods in the simulation.}
\label{tab:comptimes}
\small
\vspace{0.3cm}
\begin{tabular}{lccc}
\hline
 & d=10 & d=20 & d=40 \\ 
\hline
Cov & 0.00 & 0.00 & 0.00 \\ 
Grank & 0.00 & 0.01 & 0.05 \\ 
Spearman & 0.01 & 0.02 & 0.06 \\ 
GKnpd & 0.90 & 1.31 & 4.29 \\ 
caseMCD & 0.04 & 0.53 & 2.37 \\ 
DDCW & 0.01 & 0.03 & 0.18 \\ 
2SGS & 0.67 & 6.91 & 66.88 \\ 
DI & 0.28 & 4.72 & 41.41 \\ 
cellMCD & 0.28 & 1.83 & 22.47 \\ 
\hline
\end{tabular}
\end{table}
\normalsize

We are also interested in the performance of
these methods on data without outliers.
For this we repeated the simulation with
$\varepsilon = 0$, again with 100 
replications.
The variability of each entry of the 
covariance matrix was measured taking the 
Fisher information of that entry into account.
These results were then averaged over the
upper triangular matrix entries 
including the diagonal.
Next we divided the MSE of the classical
MLE estimator by that of each robust
method, yielding the finite-sample
efficiencies in Table~\ref{tab:effi}.

\begin{table}[!ht] 
\begin{center}
\caption{Finite-sample efficiencies of 
robust covariance estimators}
\label{tab:effi}
\small
\vspace{0.3cm}
\begin{tabular}{lcccccccc}
\hline
	& \phantom{abc} 
	& \multicolumn{3}{c}{ALYZ configuration}
	& \phantom{abc}
	& \multicolumn{3}{c}{A09 configuration} \\
\cmidrule{3-5} \cmidrule{7-9}
method  &  & $d=10$ & $d=20$ & d=40 
        &  & $d=10$ & $d=20$ & d=40\\
\hline
cellMCD  & & 0.90  &  0.90  &  0.89 
         & & 0.89 & 0.93 & 0.96\\
2SGS     & & 0.87 & 0.94 & 0.98 
         & & 0.83 & 0.91 & 0.95\\				
DI       & & 0.68 & 0.61 & 0.49 
         & & 0.87 & 0.90 & 0.90\\	
GKnpd    & & 0.74 & 0.80 & 0.81 
         & & 0.78 & 0.77 & 0.79\\
Grank    & & 0.90 & 0.96 & 0.98 
         & & 0.88 & 0.89 & 0.94\\				
Spearman & & 0.84 & 0.88 & 0.90 
         & & 0.83 & 0.82 & 0.85\\			
\hline
\end{tabular}
\end{center}
\end{table}
\normalsize

We see that the efficiency of cellMCD
averages over 90\%, which is excellent 
for a highly robust covariance estimator.
This is similar to 2SGS, and outperforms DI. 
As expected Grank has a high efficiency, 
but we just saw that it performed poorly 
under contamination, as did GKnpd and 
Spearman.
The finite-sample efficiency of cellMCD 
is much higher than that of the casewise 
MCD with the same coverage parameter 
$h=0.75n$, which is under 0.70 for this 
range of dimensions $d$.
This is due to the penalty term  
in~\eqref{eq:cellMCD}, which made the 
number of actually discarded cells much 
smaller than $0.25\,n$.

We conclude that cellMCD is about equally
robust as DI but with better efficiency,
and is about as efficient as 2SGS but with 
better robustness at contaminated data.
Moreover, it does substantially better at 
contaminated data than the remaining
methods.

\section{Discussion}
\label{sec:concl}

The cellMCD method proposed here has an
elegant formulation based on a single
objective function, making it easier to
understand than the earlier 2SGS and
DI methods. 
We proved its good breakdown properties
and consistency, 
and like the casewise MCD it can be computed 
by an algorithm based on C-steps that always 
lower the objective function and is
guaranteed to converge. 
We have illustrated cellMCD on a real data set
where the accompanying graphical displays
revealed interesting aspects of the data
that aided interpretation.
Simulations indicate that cellMCD outperforms 
earlier cellwise methods, while being 
conceptually simple and rather fast to compute.

CellMCD is cellwise robust and incorporates
a kind of sparsity penalty (on $\bone-\bW$). 
This naturally brings to mind the work 
of \cite{candes2011}.
The goals are clearly related, but there are 
also some differences. 
The first is that their work assumes that 
the cellwise outlier pattern $\bW$ is drawn
uniformly at random, whereas we adopt the
robustness paradigm that the outliers may be
placed adversarially.
Secondly, the method of \cite{candes2011} is
equivariant for transposing the data matrix, so
it treats cases and variables in the same way,
whereas in our setting they have to be treated
differently.
We do allow for some rows being flagged entirely,
whereas we cannot allow flagging an entire 
column as this would make $\bmu$ and $\bSigma$ 
not identifiable, which motivates our
constraint $||\bW_{.j}||_0 \geqslant h$
for $j=1,\ldots,d$\,.

The fact that implosion breakdown can
happen easily in the cellwise setting,
see~\eqref{eq:cov}, was not mentioned 
in the literature before. 
We feel that, apart from cellMCD,
also other cellwise robust covariance 
estimators could benefit from a 
constraint such as
$\eig_d(\bhSigma) \geqslant a$, or
similarly from a formulation in 
which $\bhSigma$ is a convex 
combination of two matrices, one of 
which is a small multiple of the 
identity matrix.

The casewise MCD is typically followed by a 
reweighting step. This works as follows.
First, the estimated covariance matrix 
$\bhSigma$ is multiplied by a correction 
factor $c_{n,d,h}$ such that 
$c_{n,d,h} \bhSigma$\, is roughly unbiased
when the original data are generated from
a Gaussian distribution.
Next, one computes the squared robust 
distances of the data points, given by 
$\RD^2_i=(\bx_i - \bhmu)^\top (c_{n,d,h}
\bhSigma)^{-1} (\bx_i - \bhmu)$.
Each case $\bx_i$ then gets a weight
$w_i$ depending on its $\RD^2_i$\,.
Typically, the weight is set to 1 when 
$\RD_i^2$ is below some quantile of the
$\chi_d^2$ distribution with $d$ degrees
of freedom, and to 0 otherwise.
The final estimates are then the 
weighted mean and the weighted covariance
matrix~\eqref{eq:wcov}.
This reweighting step increases the 
finite-sample efficiency of the estimator.

For cellMCD, the analogous
reweighting step would compute the
standardized residual~\eqref{eq:stdres}
of every cell $x_{ij}$ and compare its
square to a quantile of the $\chi_1^2$ 
distribution with 1 degree of freedom,
yielding zero-one weights $w_{ij}$. 
With these $w_{ij}$ one would then run
the EM algorithm on the original data.
But in fact, the result is not very 
different from the cellMCD result.
This is because all the ingredients are
already used in cellMCD, which contains 
the squared standardized residual 
in~\eqref{eq:deltaij}, the $\chi_1^2$ 
quantile in~\eqref{eq:lambdaj},
and the partial likelihood on which EM
is based in~\eqref{eq:cellMCD}.
So in some sense the components of a
reweighting step are already built into
cellMCD itself. This explains its rather
high finite-sample efficiency in 
Table~\ref{tab:effi}. 

\vspace{0.3cm}

\noindent{\bf Software availability:} 
The cellMCD method is implemented as the
function \texttt{cellMCD()}, and the plots
in Section~\ref{sec:examples} were drawn
by the function \texttt{plot\_cellMCD()}.
Both functions are available 
in the R package {\it cellWise} on CRAN. 
Its vignette \texttt{cellMCD\_examples} 
reproduces all results and figures in 
Section~\ref{sec:examples}.

\vspace{0.3cm}
\noindent{\bf Acknowledgment:}
We are grateful for the constructive comments
made by the Editor, Associate Editor, 
and five reviewers.

\vspace{0.3cm}
\noindent{\bf Disclosure statement:} The authors report 
there are no competing interests to declare.


\clearpage
\pagenumbering{arabic}
%
\appendix
\begin{center}
\large{Supplementary Material to:\\ 
  The Cellwise Minimum Covariance Determinant 
	Estimator\\ 
  Jakob Raymaekers and Peter J. Rousseeuw}
\end{center}
\numberwithin{equation}{section} 
\renewcommand{\theequation}
   {A.\arabic{equation}}

\spacingset{1.45} 

\section*{\large A.1\;\; Proof of breakdown results}
\label{A:proofs}

\begin{proof}[Proof of Proposition~\ref{prop:bdv}.]
The proof consists of four parts.

\vspace{0.3cm}
\noindent{Part (a): this follows immediately
from the constraint 
$\eig_d(\bhSigma) \geqslant a$ for $a>0$.}

\noindent{Part (b): Explosion breakdown of 
$\bhSigma$\,.}

Denote by $\mathcal{X}_m$ the set of all 
corrupted samples $\bX^m$ obtained by 
replacing at most $m$ cells in each column 
of $\bX$ by arbitrary values, for
$m = n-h$.
Also denote
\begin{equation*}
  \mathcal{W}_h = \{\bW\in \{0,1\}^{n\times d}
  \;\;|\;\; ||\bW_{.j}||_0 \geqslant h
  \mbox{ for all } j=1,\ldots,d\}\,.
\end{equation*}
Then we can write 
$$\mathcal{X}_m = 
  \bigcup_{\bW^* \in \mathcal{W}_h}
	{\{\bX^m \in \mathcal{X}_m \,|\, 
	\bW_{ij}^* = 1 \Rightarrow 
	\bX_{ij}^m = \bX_{ij}\}}\,.$$
In other words, we can write the set of all 
corrupted samples $\mathcal{X}_m$ as a finite 
union over subsets of corrupted samples with 
the same contaminating configuration $\bW^*$.

We start by showing the existence of a 
solution with finite objective function.
Consider any such contaminating 
configuration $\bW^* \in \mathcal{W}_h$\,. 
Then take the solution 
$(\bhmu_\tEM, \bhSigma_\tEM, \bW^*)$ where the 
location and scatter are the result of the 
EM-algorithm with fixed missingness pattern 
given by $\bW^*$. Then
\begin{align*}
  &\forall \bX^m \in \{\bX^m \in 
	\mathcal{X}_m \,|\, \bW_{ij}^* = 1 
	\Rightarrow \bX_{ij}^m = \bX_{ij}\}:\\
  &\obj(\bhmu_\tEM(\bX^m), 
	\bhSigma_\tEM(\bX^m), \bW^*) = 
	\obj(\bhmu_{\tEM}(\bX), \bhSigma_{\tEM}(\bX), 
	\bW^*) = M_{\bW^*} < \infty
\end{align*}
in which $\obj(\bmu, \bSigma, W)$ denotes 
the objective function \eqref{eq:cellMCD} 
of cellMCD.
In other words, for all $\bX^m$ with the same 
contaminating configuration, we have a candidate 
solution with a finite objective function. 
Since there are finitely many such contaminating 
distributions, we can always find a candidate 
solution with a value of the objective function 
smaller than $M = 
\displaystyle \max \{M_{\bW^*}\,:\, \bW^* \in 
\mathcal{W}_h\} < \infty$.\\

We now show that $\bhSigma$ does not explode.
By construction, 
$\eig_d(\bhSigma) \geqslant a$ for 
some constant $a >0$. Then we have that 
\begin{align*}
  \ln|\bhSigma^{(\bw_i)}| 
	&= \sum_{j=1}^{\dwi}{\ln\eig_j(
	   |\bhSigma^{(\bw_i)}|)}  \\
  &= \ln\eig_1(|\bhSigma^{(\bw_i)}|) 
	   + \sum_{j=2}^{\dwi}{\ln\eig_j(
		|\bhSigma^{(\bw_i)}|)}\\
  &\geqslant \ln\eig_1(|\bhSigma^{(\bw_i)}|)
	   + \sum_{j=2}^{\dwi}{\ln \eig_{\dwi}
		   (|\bhSigma^{(\bw_i)}|)}\\
  &\geqslant \ln\max_{j} \bhSigma_{jj}^{(\bw_i)} 
	  + \sum_{j=2}^{\dwi} {\ln(\eig_d(\bhSigma))}\\
  &\geqslant \ln\max_{j} 
	 \bhSigma_{jj}^{(\bw_i)} + (d-1)\ln a
\end{align*}
where we have used that  
$\eig_1(\bhSigma^{(\bw)}) \geqslant 
\max_{j} \bhSigma_{jj}^{(\bw)}$ for any $\bw$. 
That is, the largest eigenvalue of any positive 
semi-definite (sub)matrix is at least as large 
as its largest diagonal element.

Now we can bound the first term of the 
objective from below by an increasing function 
of the largest eigenvalue. 
First note that we have at least one 
row $i^*$ for which the $j^*$-th element of 
$\bw_{i^*}$ is 1, where 
$j^* = \argmax_{j}\bhSigma_{jj}$\,. Therefore
\begin{align*}
  \sum_{i=1}^{n}{\ln |\bhSigma^{(\bw_i)}|} 
	&= \ln |\bhSigma^{(\bw_{i^*})}| 
	  + \sum_{i\neq i^*}{\ln 
		|\bhSigma^{(\bw_i)}|}\\
  &\geqslant \ln\max_{j} 
	 \bhSigma_{jj}^{\bw_{i^*}} + 
	 (d-1) \ln a + (n-1) d \ln a\\
  &= \ln\max_{j} \bhSigma_{jj} 
	  + (nd-1)\ln a\\
  &= \ln\max_{jk} |\bhSigma_{jk}| 
	  + (nd-1)\ln a\\
  &\geqslant \ln \frac{\eig_1(\bhSigma)}{d} 
	  + (nd-1)\ln a
\end{align*}
where we have used that
$\eig_1(\bhSigma) \leqslant d \max_{jk}
{|\bhSigma_{jk}|}$, i.e. the largest eigenvalue 
of a $d \times d$ positive definite matrix is 
at most $d$ times its largest absolute entry. 
Also, we have used that 
$\max_{jk}{|\bhSigma_{jk}|} = 
 \max_{j}{|\bhSigma_{jj}|}$ since $\bhSigma$ 
is a covariance matrix, so its maximum 
occurs on the diagonal.

As all other terms of the objective function 
are bounded from below by zero, we obtain:
\begin{align*}
\obj(\bhmu, \bhSigma, \bW)
  &= \sum_{i=1}^{n} {\big(
	   \ln |\bSigma^{(\bw_i)}| + 
	   \dwi\ln(2\pi) +
	   \MD^2(\bx_i^m,\bw_i,\bhmu,\bhSigma)\,
		 \big)} + \sum_{j=1}^d \penalt_j
	   ||\bone_d - \bW_{.j}||_0\\
	&\geqslant \sum_{i=1}^{n}
	 {\ln |\bhSigma^{(\bw_i)}|}
	 \geqslant \ln \frac{\eig_1(\bhSigma)}{d} 
	           + (nd-1)\ln a\;.
\end{align*}	
We thus find that the objective function 
explodes when $\eig_1(\bhSigma) \to \infty$. 
Given that for any possible contaminated 
dataset there is a candidate solution 
with objective function less than or
equal to $M < \infty$, we conclude that 
the solution cannot have an exploding 
eigenvalue.\\

\noindent{Part (c): Breakdown of $\bhmu$\;.}

Note that for all $\bW \in \mathcal{W}_h$
and each variable $j$ there is at least
one $\bW_{ij} =1$, so a cell $\bX_{ij}$
that was not replaced. Denote 
$M_2 := \max_{ij}{|\bX_{ij}|} < \infty$\,.
Then we have
\begin{align*}
  \obj(\bhmu, \bhSigma, \bW)
  &= \sum_{i=1}^{n} {\big(
	   \ln |\bSigma^{(\bw_i)}| + 
	   \dwi\ln(2\pi) +
	   \MD^2(\bx_i^m,\bw_i,\bhmu,\bhSigma)
		 \,\big)} + \sum_{j=1}^d \penalt_j
	   ||\bone_d - \bW_{.j}||_0\\
	&\geqslant n d\ln a  + \sum_{i=1}^{n}
	  {\MD^2(\bx_i^m,\bw_i,\bhmu,\bhSigma)}\\
	&= n d\ln a  + \sum_{i=1}^{n}{\left|\left|
	  \left(\bhSigma^{(\bw_{i})}\right)^{-1/2} 
		(\bx_{i, o_i}^m-\bhmu_{o_i})
		\right|\right|_2^2}\\
	&\geqslant n d\ln a +\sum_{i=1}^{n}
	  {\lm^2\left(\left(\bhSigma^{(\bw_{i})}
		\right)^{-1/2}\right)\left|\left|
		\bx_{i, o_i}^m-\bhmu_{o_i}
		\right|\right|_2^2}\\
	&= n d\ln a  +\sum_{i=1}^{n}
	  {\frac{1}{\lM(\bhSigma^{(\bw_{i})})}
		\left|\left|\bx_{i, o_i}^m-
		\bhmu_{o_i}\right|\right|_2^2}\\
	&\geqslant n d\ln a +
	   \frac{1}{\lM(\bhSigma)} \sum_{i=1}^{n}
		 {\left|\left|\bx_{i, o_i}^m-
		\bhmu_{o_i}\right|\right|_2^2}\\
	& \geqslant n d\ln a  +
	  \frac{1}{\lM(\bhSigma)} 
		\left(||\bhmu||_2^2 - dM_2^2\right).
\end{align*}
In the last line we have used that 
there is at least one uncontaminated cell 
in each variable for which $\bW_{ij}=1$, 
together with the fact that this cell is 
bounded in absolute value by $M_2$. 
From part (b) we don't have explosion of 
the covariance matrix, so 
$\lM(\bhSigma) < \infty$. 
Should $||\bhmu||_2 \to \infty$ our 
objective function would explode, but we 
know it does not.\\

\noindent{Part (d): The bound $(n-h+1)/n$ 
is sharp.} 
So far we know that 
$\varepsilon^*_n(\bhmu, \bX) \geqslant 
(n-h+1)/n$ and $\varepsilon^+_n(\bhSigma,\bX) 
\geqslant (n-h+1)/n$. 
We now show that this common lower bound
cannot be improved, by constructing an 
example which causes breakdown. For this we 
take a contaminating configuration obtained 
by replacing $n-h+1$ cells in the first 
column of the data $\bX$ by some value $c$ 
and leaving all other columns untouched. 
Unlike before, there is no way to cover all 
these cells with any $W \in \mathcal{W}_h$. 
Put $M_2 = \max_{ij}{|\bX_{ij}|}$ as before.

Consider any solution $(\bhmu,\bhSigma,\bW)$ 
with $\bW \in \mathcal{W}_h$\,. 
Denote by $\mathcal{I}$ the set of indices 
of the rows which have a contaminated cell 
equal to $c$ in their first variable. 
Denote by subscript $o_i$ the set of
variables $j$ for which $\bw_{ij} = 1$. 
By the first order conditions of the EM 
algorithm, upon convergence of the
algorithm we must have 
$\bhmu = \frac{1}{n}\sum_{i=1}^{n} \by_i$ 
where $\by_i$ are the imputed observations. 
For the first entry of $\bhmu$ we have:
\begin{align*}
\bhmu_1 
  &= \frac{1}{n}\sum_{i=1}^{n} \by_{i1}\\
  &= \frac{1}{n}\sum_{i|\bw_{i1} = 1} 
	   \bX^m_{i1} + \frac{1}{n}
		 \sum_{i|\bw_{i1} = 0} 
		 E[\bX_{i1}|\bhmu, \bhSigma, \bW]\\
  &= \frac{1}{n}\sum_{i|\bw_{i1} = 1}
	   \bX^m_{i1} + \frac{1}{n}
		 \sum_{i|\bw_{i1} = 0} {\left(
		 \bhmu_1 + \bhSigma_{1, o_i}
		 \bhSigma_{o_i, o_i}^{-1}
		 (\bX^m_{i,o_i} - \bhmu_{o_i})\right)}\\
  &= \frac{1}{n}\sum_{\{i|\bw_{i1} = 1\} 
	   \cap \mathcal{I}} \bX^m_{i1} + 
		 \frac{1}{n}\sum_{\{i|\bw_{i1} = 1\} 
		 \cap \mathcal{I}^C} \bX^m_{i1} + 
		 \frac{1}{n}\sum_{\{i|\bw_{i1} = 0\}}
		 {\left(\bhmu_1 +\bhSigma_{1, o_i}
		 \bhSigma_{o_i,o_i}^{-1}
		 (\bX^m_{i,o_i} - \bhmu_{o_i})\right)}\\
  &= \frac{c}{n}\#(\{i|\bw_{i1} = 1\} 
	   \cap \mathcal{I}) + 
		 \frac{1}{n}\sum_{\{i|\bw_{i1} = 1\}
		 \cap \mathcal{I}^C} \bX^m_{i1} + 
		 \frac{1}{n}\sum_{\{i|\bw_{i1} = 0\}}
		 {\left(\bhmu_1 +\bhSigma_{1, o_i}
		 \bhSigma_{o_i,o_i}^{-1}
		 (\bX^m_{i,o_i} - \bhmu_{o_i})\right)}\\
  &= \frac{c}{n}\#(\{i|\bw_{i1} = 1\}
	   \cap \mathcal{I}) + 
		 \frac{1}{n}\sum_{\{i|\bw_{i1} = 1\} 
		 \cap \mathcal{I}^C} \bX_{i1} + 
		 \frac{1}{n}\sum_{\{i|\bw_{i1} = 0\}}
		 {\left(\bhmu_1 +\bhSigma_{1, o_i}
		 \bhSigma_{o_i,o_i}^{-1}
		 (\bX_{i,o_i} - \bhmu_{o_i})\right)}\;.
\end{align*}
Note that we have replaced $\bX^m$ by $\bX$ 
in the last line, since all those cells
are uncontaminated.
By construction of our contaminated data, 
we have 
$\#(\{i|\bw_{i1} = 1\} \cap  \mathcal{I}) 
\geqslant 1$. 
Now take a sequence $c_k$ which diverges, 
i.e. $c_k \to \infty$ as $k \to \infty$. 
Suppose that our estimates $\bhmu$ and 
$\bhSigma$ would not break down as 
$k \to \infty$. 
Then the $\bhmu_1$ on the left hand side 
of the above equality would be bounded.
The second term on the right hand side
is just an average of uncontaminated data
so it is bounded too. 
The last term on the right hand side would 
be bounded as well, since it consists of 
the estimated $\bhmu$, $\bhSigma$, and 
the uncontaminated data. 
(Note that $\bhSigma_{o_i,o_i}^{-1}$ is
bounded since 
$\eig_1(\bhSigma_{o_i,o_i}^{-1})
\leqslant 1/a < \infty$.)
However, the first term on the right hand 
side would diverge. This is a contradiction. 
We conclude that either the location or the 
covariance matrix (or both) must diverge
as $k \to \infty$. 
\end{proof}

\section*{A.2\;\; Asymptotic properties of cellMCD}
\label{A:asy}

\subsection*{A.2.1\;\; Introduction}
In this section we study the asymptotic properties of cellMCD for well-behaved distributions $F$ (to be specified later). In order to do so, we consider the cellMCD objective without the columnwise restriction on $W$. This is justified, since from an asymptotic perspective, the columnwise restriction on $W$ only plays a role when it is encountered asymptotically. By our choice of $q_j$, we know that this is not the case for the normal distribution, and even for much more heavy tailed distributions, such as the multivariate Cauchy with independent components, we won't flag 25 \% of the values in the marginal distributions asymptotically.\\
Now, without the columnwise restriction on $W$, there are different ways of writing the cellMCD objective, of which the version~\eqref{eq:obj1} lends itself somewhat better to asymptotic analysis:
\begin{equation*}
\G(\mu, \Sigma, F) \coloneqq \int \g_{\mu, \Sigma}(x) F(\mathrm{d}x)    
\end{equation*}
where we use~\eqref{eq:obj2}:
\begin{equation*}
    \g_{\mu,\Sigma}(x) \coloneqq \min_{w\in \{0,1\}^d}{\left\{\ln \left|\Sigma^{(w)}\right|+\dw\ln(2\pi) + \text{MD}^2(x,w,\mu, \Sigma)+\bpenalt\,(\bone - w)^{\top} \right\}}
\end{equation*}
in which $\bpenalt = (q_1,\ldots,q_d)$ and $w = (w_1,\ldots,w_d)$.
Our estimate is then 
$$\argmin_{\left(\mu, \Sigma\right) \in \Theta} \G(\mu, \Sigma, F_n),$$
with $F_n$ the empirical distribution and $\Theta$ the parameter space. 

Suppose, for now, that $\Theta$ is a subset of $\mathbb{R}^d \times \mathcal{P}(d)$ where $\mathcal{P}(d)$ is the (closed) cone of symmetric positive definite $d \times d$ matrices with smallest eigenvalue $\geqslant a$. We endow the product space with the metric $\Dmax$ given by $\Dmax((\mu_1,\Sigma_1),
(\mu_2,\Sigma_2)) := \max(||\mu_1-\mu_2||_2, ||\Sigma_1 - \Sigma_2||_F)$, which 
combines the Euclidean and Frobenius norms.
Other norms on the space of matrices are possible (and sometimes more natural), but in our context the Frobenius norm suffices. 

\subsection*{A.2.2\;\;Some properties of the objective function}

\begin{lemma}\label{lemma:continuitym}
For all $x$, the function $\Theta \mapsto \mathbb{R}: (\mu, \Sigma) \to  \g_{\mu,\Sigma}(x)$ is continuous. 
\end{lemma}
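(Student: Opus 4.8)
The plan is to exploit that $\g_{\mu,\Sigma}(x)$ is a pointwise minimum over the \emph{finite} index set $\{0,1\}^d$, together with the elementary fact that a minimum of finitely many continuous functions is continuous. It therefore suffices to fix an arbitrary $w \in \{0,1\}^d$ and show that the single summand
$$h_w(\mu,\Sigma) := \ln|\Sigma^{(w)}| + d^{(w)}\ln(2\pi) + \MD^2(x,w,\mu,\Sigma) + \bpenalt\,(\bone - w)^\top$$
is continuous in $(\mu,\Sigma)$ on $\Theta$. Here the penalty $\bpenalt\,(\bone - w)^\top$ and the term $d^{(w)}\ln(2\pi)$ depend only on $w$, so they are constants that may be ignored.

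First I would note that the selection maps $\mu \mapsto \mu^{(w)}$ and $\Sigma \mapsto \Sigma^{(w)}$ are linear (they merely extract the coordinates, respectively the rows and columns, indexed by $\{j : w_j = 1\}$) and hence continuous. The decisive step is to control $\Sigma^{(w)}$ uniformly on $\Theta$. Because every $\Sigma \in \Theta$ satisfies $\eig_d(\Sigma) \geqslant a > 0$, Cauchy's eigenvalue interlacing inequality for principal submatrices gives $\lm(\Sigma^{(w)}) \geqslant \eig_d(\Sigma) \geqslant a > 0$. Consequently $\Sigma^{(w)}$ is positive definite with $|\Sigma^{(w)}| \geqslant a^{d^{(w)}} > 0$, so the determinant stays bounded away from $0$ and $\ln|\Sigma^{(w)}|$ is continuous in the entries of $\Sigma^{(w)}$, hence in $\Sigma$. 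On the set of invertible matrices the inversion map is continuous, so $\Sigma^{(w)} \mapsto (\Sigma^{(w)})^{-1}$ is continuous on $\Theta$; combined with the continuity of $\mu^{(w)}$, the quadratic form
$$\MD^2(x,w,\mu,\Sigma) = (x^{(w)}-\mu^{(w)})^\top (\Sigma^{(w)})^{-1} (x^{(w)}-\mu^{(w)})$$
is continuous as well. Summing the continuous and constant pieces shows that $h_w$ is continuous.

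The degenerate index $w = \bzero$ requires a separate line: by the stated conventions $d^{(w)}=0$, $\MD^2=0$ and $|\Sigma^{(w)}|=1$, so $h_{\bzero}(\mu,\Sigma) = \sum_{j=1}^d q_j$ is constant and trivially continuous. Taking the minimum of the finitely many continuous functions $h_w$ over $w \in \{0,1\}^d$ then gives continuity of $(\mu,\Sigma) \mapsto \g_{\mu,\Sigma}(x)$ on $\Theta$, as claimed.

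I expect the only genuine obstacle to be the uniform invertibility of the principal submatrix $\Sigma^{(w)}$: without the constraint $\eig_d(\Sigma) \geqslant a$ the determinant could approach $0$, and both $\ln|\Sigma^{(w)}|$ and $(\Sigma^{(w)})^{-1}$ would blow up, destroying continuity. The eigenvalue-interlacing bound is precisely what rules this out and makes every term well behaved; the remaining arguments are routine continuity-of-composition facts.
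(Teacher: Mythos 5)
Your proof is correct and follows essentially the same route as the paper's: both write $\g_{\mu,\Sigma}(x)$ as a minimum over the finite set $\{0,1\}^d$ of the fixed-$w$ functions and invoke the fact that a minimum of finitely many continuous functions is continuous. The only difference is that the paper simply asserts the continuity of each fixed-$w$ term, whereas you verify it in detail (eigenvalue interlacing giving $\lm(\Sigma^{(w)}) \geqslant a$, continuity of matrix inversion, and the degenerate case $w=\bzero$), which is a faithful filling-in rather than a different argument.
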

\begin{proof}
Note that for each fixed $w \in \{0,1\}^d$, the function $\Theta \mapsto \mathbb{R}: (\mu, \Sigma) \to \ln \left|\Sigma^{(w)}\right|+\dw\ln(2\pi) + \text{MD}^2(x,w,\mu, \Sigma)+\bpenalt\,(\bone - w)^{\top}$ is continuous. For any fixed $x$, $\g_{\mu,\Sigma}(x)$ is thus a minimum of a finite number of continuous functions, which is continuous. 
\end{proof}

\begin{lemma}\label{lemma:dominating}
For $\penalt_j > \max\{0, \ln(a)\}$ and all $x$, the function $\g_{\mu,\Sigma}(x)$ is uniformly bounded over all possible $\mu, \Sigma$ in $\mathbb{R}^d \times \mathcal{P}(d)$.
\end{lemma}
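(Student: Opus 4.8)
The plan is to sandwich $\g_{\mu,\Sigma}(x)$ between two finite constants that do not depend on $(\mu,\Sigma)$ (and, as it turns out, not on $x$ either). Since $\g_{\mu,\Sigma}(x)$ is a minimum over the $2^d$ patterns $w\in\{0,1\}^d$, an upper bound is obtained by evaluating the bracketed expression at one convenient $w$, whereas a lower bound must hold for \emph{every} $w$ simultaneously. I would treat these two directions separately.

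For the upper bound I would simply take $w=\bzero$, the pattern that flags every cell. By the conventions recorded just after the definition of the partial Mahalanobis distance, this choice yields $|\Sigma^{(\bzero)}|=1$, $\dw=0$, and $\mathrm{MD}^2(x,\bzero,\mu,\Sigma)=0$, so the bracket collapses to the penalty term $\bpenalt\,\bone^\top=\sum_{j=1}^d \penalt_j$. Hence $\g_{\mu,\Sigma}(x)\leqslant \sum_{j=1}^d \penalt_j<\infty$, uniformly in $(\mu,\Sigma)$ and $x$; this step uses nothing about $(\mu,\Sigma)$ beyond the stated conventions.

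For the lower bound I would fix an arbitrary $w$ and discard the two manifestly nonnegative terms $\dw\ln(2\pi)$ and $\mathrm{MD}^2(x,w,\mu,\Sigma)$. What remains is to control $\ln|\Sigma^{(w)}|$ from below, and this is the crux of the argument. Because $\Sigma^{(w)}$ is a principal submatrix of $\Sigma$, Cauchy's eigenvalue interlacing inequality gives $\eig_{\min}(\Sigma^{(w)})\geqslant \eig_d(\Sigma)\geqslant a$, so every eigenvalue of the $\dw\times\dw$ matrix $\Sigma^{(w)}$ is at least $a$ and therefore $\ln|\Sigma^{(w)}|\geqslant \dw\ln a$. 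Combining this with the penalty, each bracket is at least $\dw\ln a+\sum_{j:\,w_j=0}\penalt_j=\sum_{j=1}^d\big(w_j\ln a+(1-w_j)\penalt_j\big)$, whose minimum over $w$ equals $\sum_{j=1}^d \min(\ln a,\penalt_j)$. Here the hypothesis $\penalt_j>\max\{0,\ln a\}$ forces $\min(\ln a,\penalt_j)=\ln a$ coordinatewise, so the bound collapses to the clean constant $d\ln a$, giving $\g_{\mu,\Sigma}(x)\geqslant d\ln a>-\infty$ for all $(\mu,\Sigma)$.

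Putting the two directions together yields $d\ln a\leqslant \g_{\mu,\Sigma}(x)\leqslant \sum_{j=1}^d \penalt_j$, a finite interval independent of $(\mu,\Sigma)$, which is exactly the claimed uniform bound. The only nonroutine ingredient is the interlacing inequality, which is what lets me bound $\ln|\Sigma^{(w)}|$ from below \emph{uniformly} over all principal submatrices $\Sigma^{(w)}$ and all admissible $\Sigma$; without it the determinant could in principle drift toward $0$ and the lower bound would fail. Everything else is bookkeeping, and I expect no further obstacle. I would note in passing that the role of the condition $\penalt_j>\max\{0,\ln a\}$ is only to make the lower-bound minimization resolve to the tidy value $d\ln a$: without it one still obtains the finite, if less clean, bound $\sum_{j=1}^d\min(\ln a,\penalt_j)$.
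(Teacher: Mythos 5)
Your proof is correct and takes essentially the same route as the paper: the upper bound comes from evaluating at the all-flagged pattern $w=\bzero$ (giving $\sum_{j=1}^d \penalt_j$), and the lower bound $d\ln a$ comes from discarding the nonnegative terms and bounding $\ln|\Sigma^{(w)}|$ below via the constraint $\eig_d(\Sigma)\geqslant a$. If anything your version is slightly more careful than the paper's: you make the Cauchy interlacing step explicit and retain the penalty terms in the minimization so that the bound holds even when $\ln a>0$, whereas the paper's one-line lower bound tacitly uses $\ln a<0$ (true for its default $a=10^{-4}$).
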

\begin{proof}
We have that $d\ln(a) \leqslant \g_{\mu,\Sigma}(x) \leqslant \sum_{j=1}^d{\penalt_j}$\;.\\
The first inequality stems from minimizing each term of $\ln \left|\Sigma^{(w)}\right|+\dw\ln(2\pi) + \text{MD}^2(x,w,\mu, \Sigma)+\bpenalt\,(\bone - w)^{\top}$ individually. In particular, the last three terms are always positive, so we set them to zero. The first term is bounded from below by $d\ln(a)$, which is negative by our choice of $a$.\\
The second inequality stems from the instance $w=(0,\ldots,0)$, in which case $\g_{\mu,\Sigma}(x) = \bpenalt\,(\bone - w)^{\top}=\sum_{j=1}^d{\penalt_j}$.
\end{proof}

\begin{lemma}\label{lemma:continuityM}
The function $\Theta \mapsto \mathbb{R}: (\mu, \Sigma) \to \G(\mu, \Sigma, F)$ is continuous. 
\end{lemma}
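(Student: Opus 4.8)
The plan is to deduce continuity of $\G(\cdot,\cdot,F)$ from the two preceding lemmas by an application of the dominated convergence theorem. Since $\Theta \subset \mathbb{R}^d \times \mathcal{P}(d)$ is equipped with the metric $\Dmax$, it is a metric space, so continuity of a function on $\Theta$ is equivalent to sequential continuity. I would therefore fix an arbitrary point $(\mu,\Sigma) \in \Theta$ together with a sequence $(\mu_k,\Sigma_k)_{k\geqslant 1}$ in $\Theta$ satisfying $\Dmax((\mu_k,\Sigma_k),(\mu,\Sigma)) \to 0$, and aim to show that $\G(\mu_k,\Sigma_k,F) \to \G(\mu,\Sigma,F)$.

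The first ingredient is pointwise convergence of the integrands. By Lemma~\ref{lemma:continuitym}, for every fixed $x$ the map $(\mu,\Sigma) \to \g_{\mu,\Sigma}(x)$ is continuous on $\Theta$, hence
\begin{equation*}
  \g_{\mu_k,\Sigma_k}(x) \longrightarrow \g_{\mu,\Sigma}(x)
  \qquad \text{for every } x,
\end{equation*}
as $k \to \infty$. The second ingredient is an integrable envelope. Under the standing assumption $\penalt_j > \max\{0,\ln(a)\}$ (which holds for the $\penalt_j$ of \eqref{eq:lambdaj}, since $a$ is small so $\ln(a)<0$ while each $\penalt_j>0$), Lemma~\ref{lemma:dominating} gives the two-sided bound $d\ln(a) \leqslant \g_{\mu,\Sigma}(x) \leqslant \sum_{j=1}^d \penalt_j$ uniformly over all $(\mu,\Sigma)\in\Theta$ and all $x$. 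Consequently $|\g_{\mu_k,\Sigma_k}(x)| \leqslant B$ for the finite constant $B := \max\{\,|d\ln(a)|,\ \sum_{j=1}^d \penalt_j\,\}$, and the constant function $B$ is trivially $F$-integrable because $F$ is a probability measure, so $\int B\,F(\mathrm{d}x) = B < \infty$.

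With pointwise convergence and a uniform integrable dominating function in hand, the dominated convergence theorem yields
\begin{equation*}
  \G(\mu_k,\Sigma_k,F) = \int \g_{\mu_k,\Sigma_k}(x)\,F(\mathrm{d}x)
  \longrightarrow \int \g_{\mu,\Sigma}(x)\,F(\mathrm{d}x)
  = \G(\mu,\Sigma,F),
\end{equation*}
which establishes sequential continuity and hence continuity of $\G(\cdot,\cdot,F)$. There is no serious obstacle here: the argument is entirely driven by the uniform bound from Lemma~\ref{lemma:dominating}, and the only point requiring a word of care is that this bound is a \emph{constant} (not merely $(\mu,\Sigma)$-wise finite), so that it serves as a legitimate dominating function against the probability measure $F$ regardless of the tail behavior of $F$. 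Note in particular that no moment or integrability assumption on $F$ is needed, precisely because the penalty term caps $\g_{\mu,\Sigma}(x)$ from above via the choice $w=(0,\ldots,0)$.
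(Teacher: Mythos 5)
Your proposal is correct and follows essentially the same route as the paper: a convergent sequence in $\Theta$, pointwise convergence of the integrands from Lemma~1, and the dominated convergence theorem justified by the uniform bound of Lemma~2. Your extra details (the explicit constant $B = \max\{|d\ln(a)|, \sum_{j=1}^d \penalt_j\}$, the observation that a constant is $F$-integrable since $F$ is a probability measure, and the reduction of continuity to sequential continuity on the metric space $(\Theta,\Dmax)$) merely spell out what the paper leaves implicit.
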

\begin{proof}
Denote $(\mu_n, \Sigma_n)_{n\in \mathbb{N}}$ a sequence which converges to $(\mu, \Sigma)$ for $n \to \infty$. We have
\begin{align*}
    \lim_{n\to \infty}{\G(\mu_n, \Sigma_n, F)} 
    =&\lim_{n\to \infty}{\int \g_{\mu_n, \Sigma_n}(x) F(\mathrm{d}x)}\\
    =&\int \lim_{n\to \infty}{\g_{\mu_n, \Sigma_n}(x)} F(\mathrm{d}x)\\
    = &\int \g_{\mu, \Sigma}(x) F(\mathrm{d}x)\\
    = &\; \G(\mu, \Sigma, F)
\end{align*}
where we have used the dominated convergence theorem in the second equality, which is possible thanks to Lemma \ref{lemma:dominating} and the (pointwise) continuity of $\g_{\mu,\Sigma}$ of Lemma \ref{lemma:continuitym} in the third equality.
\end{proof}

\subsection*{A.2.3\;\; Wald-type consistency proof}
We can set up a Wald-type consistency proof.
Assume that $\Theta$ is a compact subset of $\mathbb{R}^d \times \mathcal{P}(d)$ where $\mathcal{P}(d)$ is the (closed) cone of symmetric positive definite $d \times d$ matrices with smallest eigenvalue $\geqslant a$\,.
Denote by $\Theta^* = \{(\mu^*, \Sigma^*) \in \Theta: \G(\mu^*, \Sigma^*, F) = \inf_{(\mu, \Sigma) \in \Theta}{\G(\mu, \Sigma, F)}\}$ the set of minima of $\G$. It is nonempty due to compactness of $\Theta$, and the continuity of $G$ from Lemma~\ref{lemma:continuityM}.

\begin{proposition}\label{prop:consistency_Wald}
    Let $(\hat{\mu}_n, \hat{\Sigma}_n)$ be a sequence of estimators which nearly minimize $\G(\cdot, \cdot, F_n)$, i.e. for which $$\G(\hat{\mu}_n, \hat{\Sigma}_n, F_n) \leqslant \G(\mu^*, \Sigma^*, F_n)+o_P(1)$$
    for some $(\mu^*, \Sigma^*) \in \Theta^*$. Then it holds for all $\varepsilon > 0$ and for any compact set $K \subset \Theta$ that
    $$P(\,\Dmax((\hat{\mu}_n, \hat{\Sigma}_n), \Theta^*) \geqslant \varepsilon \; \mbox{ and } \; (\hat{\mu}_n, \hat{\Sigma}_n) \in K\,)\rightarrow 0\,. $$
\end{proposition}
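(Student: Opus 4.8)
The plan is to run the classical Wald argmin argument, whose two analytic inputs are a uniform law of large numbers (ULLN) for the objective over $\Theta$ and a well-separated-minimum property of the population objective; the rest is soft. Throughout write $\theta = (\mu, \Sigma)$ and $\hat\theta_n = (\hat\mu_n, \hat\Sigma_n)$, set $m(\theta) := \G(\mu, \Sigma, F)$ and $m_n(\theta) := \G(\mu, \Sigma, F_n)$, and let $m^* := \inf_{\Theta} m = m(\theta^*)$ for the fixed $\theta^* = (\mu^*, \Sigma^*) \in \Theta^*$ appearing in the near-minimization hypothesis, so that this hypothesis reads $m_n(\hat\theta_n) - m_n(\theta^*) \leqslant o_P(1)$. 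Extend $\Dmax$ to a point-to-set distance by $\Dmax(\theta, \Theta^*) := \inf_{\vartheta \in \Theta^*}\Dmax(\theta, \vartheta)$.

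First I would establish the ULLN
$$U_n := \sup_{\theta \in \Theta} |m_n(\theta) - m(\theta)| \xrightarrow{P} 0\,.$$
The ingredients are already available: by Lemma~\ref{lemma:dominating} the integrands $\g_{\mu,\Sigma}$ are uniformly bounded (between $d\ln a$ and $\sum_{j}\penalt_j$), and by Lemma~\ref{lemma:continuitym} the map $\theta \mapsto \g_{\mu,\Sigma}(x)$ is continuous for each $x$. I would exploit these through a bracketing argument on the compact set $\Theta$: for a ball $B$ around $\theta$ put $\underline{\g}_B(x) := \inf_{\vartheta \in B} \g_{\vartheta}(x)$ and $\overline{\g}_B(x) := \sup_{\vartheta \in B} \g_{\vartheta}(x)$. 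Pointwise continuity forces these to converge to $\g_{\theta}(x)$ as the radius shrinks, and the uniform bound of Lemma~\ref{lemma:dominating} supplies a constant dominating function, so dominated convergence pushes this through the integral and makes the bracket width $\int(\overline{\g}_B - \underline{\g}_B)\,\mathrm{d}F$ arbitrarily small. Covering the compact $\Theta$ by finitely many such balls and applying the ordinary LLN to each of the finitely many bracket endpoints then yields $U_n \to 0$ in probability.

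Next I would record the well-separation of the minimum and combine. For fixed $\varepsilon > 0$ the set $\Theta_\varepsilon := \{\theta \in \Theta : \Dmax(\theta, \Theta^*) \geqslant \varepsilon\}$ is a closed subset of the compact $\Theta$, hence compact; if it is empty the stated probability is $0$, so assume otherwise. By Lemma~\ref{lemma:continuityM} the continuous $m$ attains its infimum on $\Theta_\varepsilon$ at some $\theta_\varepsilon \notin \Theta^*$, so $\eta_\varepsilon := \inf_{\Theta_\varepsilon} m - m^* = m(\theta_\varepsilon) - m^* > 0$ since $\Theta^*$ is exactly the set of global minimizers. Now on the event $A_n := \{\Dmax(\hat\theta_n, \Theta^*) \geqslant \varepsilon,\ \hat\theta_n \in K\}$ we have $\hat\theta_n \in \Theta_\varepsilon$ (because $K \subseteq \Theta$), so $m(\hat\theta_n) - m^* \geqslant \eta_\varepsilon$. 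On the other hand the decomposition
$$m(\hat\theta_n) - m^* = \big[m(\hat\theta_n) - m_n(\hat\theta_n)\big] + \big[m_n(\hat\theta_n) - m_n(\theta^*)\big] + \big[m_n(\theta^*) - m(\theta^*)\big]$$
bounds the first and third brackets by $U_n$ and the middle one by the near-minimization $o_P(1)$, giving $m(\hat\theta_n) - m^* \leqslant 2U_n + o_P(1)$ on the whole sample space. Hence $A_n \subseteq \{2U_n + o_P(1) \geqslant \eta_\varepsilon\}$, and since $2U_n + o_P(1) \xrightarrow{P} 0$ we conclude $P(A_n) \to 0$.

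The main obstacle is the ULLN of the second paragraph; once it is in hand, well-separation and the combination are routine compactness, continuity, and algebra. The delicate points inside the ULLN are the measurability of the suprema and infima $\overline{\g}_B, \underline{\g}_B$ (resolved by separability of $\Theta$, using continuity in $\theta$ so that the sup and inf over $B$ agree with those over a countable dense subset) and the verification that the bracket widths shrink, for which the uniform boundedness of Lemma~\ref{lemma:dominating} provides exactly the integrable dominating function needed for dominated convergence.
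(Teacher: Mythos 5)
Your proof is correct, but it takes a genuinely different route from the paper. The paper's proof is a one-line appeal to Theorem 5.14 of van der Vaart (2000) (Wald's consistency theorem in its minimization form), verifying only its two hypotheses: pointwise continuity of $(\mu,\Sigma)\mapsto \g_{\mu,\Sigma}(x)$ (Lemma~2) and, for small balls $U$, measurability of $x\mapsto\inf_{(\mu,\Sigma)\in U}\g_{\mu,\Sigma}(x)$ together with $\int\inf_{(\mu,\Sigma)\in U}\g_{\mu,\Sigma}(x)\,F(\diff x)\geqslant d\ln(a)\cdot 1>-\infty$ from the lower bound in Lemma~3. You instead reconstruct Wald's argument from scratch: a bracketing uniform law of large numbers over the compact parameter set, the well-separated-minimum property via compactness of $\Theta_\varepsilon$ and continuity of $\G(\cdot,\cdot,F)$ (your use of Lemma~4, which the paper's proof of this proposition does not need), and the standard three-term decomposition. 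Both arguments are sound; yours buys self-containedness at the cost of invoking strictly stronger inputs, since the two-sided ULLN exploits the upper bound $\sum_j\penalt_j$ of Lemma~3 as well, whereas van der Vaart's theorem needs only a one-sided (lower) integrable envelope and lower semicontinuity. One small structural remark: you run the bracketing over ``the compact $\Theta$,'' which is legitimate here because the paper assumes $\Theta$ compact in this subsection, but it makes the localization ``$(\hat{\mu}_n,\hat{\Sigma}_n)\in K$'' in the statement vestigial. The statement is phrased in van der Vaart's generality precisely so that compactness of $\Theta$ is not required (non-compactness being handled separately by the tightness result, Proposition~6); to prove it in that generality you would simply bracket over $K$ instead of $\Theta$, take the well-separation infimum over $\Theta_\varepsilon\cap K$, and control the third bracket $\G(\mu^*,\Sigma^*,F_n)-\G(\mu^*,\Sigma^*,F)$ by the ordinary law of large numbers at the single point $(\mu^*,\Sigma^*)$ rather than by the supremum --- a trivial modification of what you wrote.
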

\begin{proof}
    This follows from a direct application of Theorem 5.14 in \cite{VdVaart2000}. Its conditions are satisfied because
    \begin{itemize}
        \item For all $x$, the function $\Theta \mapsto \mathbb{R}: (\mu, \Sigma) \to  \g_{\mu,\Sigma}(x)$ is continuous by Lemma \ref{lemma:continuitym}.
        \item For any sufficiently small ball $U \subset \Theta$ the function $x \to \inf_{(\mu, \Sigma) \in U}{\g_{\mu,\Sigma}(x)}$ is measurable and satisfies $\int \inf_{(\mu, \Sigma) \in U}{\g_{\mu,\Sigma}(x)} F(\mathrm{d}x) > -\infty$. This holds because, by Lemma \ref{lemma:dominating}, we have $\int \inf_{(\mu, \Sigma) \in U}{\g_{\mu,\Sigma}(x)} F(\mathrm{d}x) \geq  d\ln(a)\int F(\mathrm{d}x) > -\infty\;.$
    \end{itemize}
\end{proof}

\subsection*{A.2.4\;\; Compactness}
Proposition \ref{prop:consistency_Wald} gives consistency of the cellMCD estimator to the set of true minimizers of the asymptotic objective. The consistency holds on any compact set  $K \subset \Theta$ of $\mathbb{R}^d \times \mathcal{P}(d)$.
Ideally, we would like the consistency of cellMCD on the whole parameter space, but that in itself is not compact. Fortunately, we can indeed obtain the desired consistency result by virtue of the following proposition.

\begin{proposition}\label{prop:compact}
    Let $\bm x_1, \ldots, \bm x_n$ be a random sample from $F$ with empirical cdf $F_n$. Let $(m_n, S_n)$ be an optimal set of parameters for the sample. Then there exists an $M >0$ so that\linebreak $\lim_{n\to \infty} P((m_n, S_n) \in B(M)) = 1$, where $B(M)$ is the closed ball of radius $M$ around $0$.
\end{proposition}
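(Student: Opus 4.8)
The plan is to show that $\G(\cdot,\cdot,F)$ is \emph{coercive modulo its behaviour at infinity}: along every sequence of parameters that leaves all compact subsets of $\Theta$, the $\liminf$ of $\G$ strictly exceeds the minimal value $c^*:=\inf_{(\mu,\Sigma)\in\Theta}\G(\mu,\Sigma,F)$, and then to transfer this to the empirical objective $\G(\cdot,\cdot,F_n)$. The first ingredient is an upper bound on the optimum. By Lemma~\ref{lemma:dominating}, $\g_{\mu,\Sigma}(x)\le\sum_{j=1}^d q_j$ for all $x$ and all $(\mu,\Sigma)$, so $c^*\le\sum_j q_j$. Fixing a reference $(\mu_0,\Sigma_0)\in\Theta$ (say a center of $F$ and a scaled identity with $\lambda_d\ge a$), the choice $w=\bone$ in \eqref{eq:obj2} shows that on a set of positive $F$-measure (the points near $\mu_0$) one has $\g_{\mu_0,\Sigma_0}<\sum_j q_j$, so $c^*\le\G(\mu_0,\Sigma_0,F)=:c_0<\sum_j q_j$. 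Since $(m_n,S_n)$ minimises $\G(\cdot,\cdot,F_n)$ and the law of large numbers gives $\G(\mu_0,\Sigma_0,F_n)\to c_0$, the optimal value is below $c_0+\eta$ with probability tending to $1$, for a small $\eta$ to be fixed later.

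The heart of the argument is the behaviour at infinity. Because the implosion constraint $\lambda_d(\Sigma)\ge a$ is built into $\Theta$, a parameter leaves every compact set only if $\|\mu\|\to\infty$ or $\lambda_1(\Sigma)\to\infty$. The subtlety, and the reason $\G$ is not coercive in the usual sense, is that the penalty caps each summand, so $\G$ does not diverge but approaches \emph{plateau} values. I would argue that along any divergent sequence $(\mu_k,\Sigma_k)$ there is a nonempty set $D\subseteq\{1,\dots,d\}$ of coordinates that the inner minimisation in \eqref{eq:obj2} is eventually forced to switch off, since keeping a coordinate $j$ becomes arbitrarily expensive either through its conditional variance term $\ln C_j$ or, when that variance remains bounded because of strong correlations, through the conditional residual term in $\MD^2$; here the chain decomposition of Proposition~\ref{prop:split} makes the per-coordinate accounting explicit. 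Passing to a subsequence with a fixed limiting drop-set $D$ then yields
\[
 \liminf_k \G(\mu_k,\Sigma_k,F)\;\ge\;\sum_{j\in D} q_j+\G^{(D^c)}_*(F),
\]
where $\G^{(D^c)}_*(F)$ is the infimum of the cellMCD objective built from the surviving coordinates $D^c$ only. It then remains to prove the strict inequality $\sum_{j\in D} q_j+\G^{(D^c)}_*(F)>c^*$ for every nonempty $D$. The inequality ``$\ge$'' is automatic because dropping the coordinates in $D$ at total cost $\sum_{j\in D}q_j$ is one admissible inner choice; strictness uses the calibration \eqref{eq:lambdaj} of the $q_j$, as for points near the mode of the unimodal, density-admitting $F$ the conditional fitting cost $\ln C_j+\ln(2\pi)+r_j^2/C_j$ of a coordinate $j\in D$ lies strictly below $q_j$, so reinstating that coordinate lowers the objective by a definite amount. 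Because there are only finitely many subsets $D$, this produces a uniform margin $\delta>0$ and an $M>0$ with $\G(\mu,\Sigma,F)>c^*+\delta$ for all $(\mu,\Sigma)\notin B(M)$.

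Finally I would transfer both bounds to $F_n$. Lemma~\ref{lemma:dominating} supplies the bounded (hence integrable) envelope $d\ln a\le\g_{\mu,\Sigma}\le\sum_j q_j$ and Lemma~\ref{lemma:continuitym} gives continuity in $(\mu,\Sigma)$, so a uniform law of large numbers applies to the class $\{\g_{\mu,\Sigma}\}$ and $\sup_{(\mu,\Sigma)}|\G(\mu,\Sigma,F_n)-\G(\mu,\Sigma,F)|\to 0$ in probability. Taking $\eta<\delta/2$, the high-probability events $\{\G(m_n,S_n,F_n)\le c_0+\eta\}$ and $\{\,\G(\cdot,\cdot,F_n)>c^*+\delta-\eta\ \text{outside}\ B(M)\,\}$ together force $(m_n,S_n)\in B(M)$, which is the claim.

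I expect the main obstacle to be the plateau analysis: along an \emph{arbitrary} divergent sequence of $\Sigma_k$ a large eigenvalue may lie in a non-coordinate direction, and the inner optimiser is a combinatorial minimum over $2^d$ patterns, so one must argue carefully (using Proposition~\ref{prop:split} together with eigenvalue interlacing, which guarantees $\lambda_{\min}(\Sigma^{(w)})\ge a$) that at least one coordinate is genuinely forced off and that the restricted objective converges. Securing the strict margin $\delta$ uniformly over the finitely many drop-sets, and verifying the uniform law of large numbers so that this margin survives the passage from $F$ to $F_n$, are the remaining delicate points.
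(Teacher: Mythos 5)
Your population-level analysis runs along essentially the same lines as the paper's: divergence of $\Sigma_n$ forces a diagonal entry to diverge (since $\eig_1 \leqslant d\max_j \Sigma_{jj}$), and the bound $\ln|\Sigma^{(w)}| \geqslant \ln \eig_1(\Sigma^{(w)}) + (d^{(w)}-1)\ln a$ makes any pattern that keeps such a coordinate expensive uniformly in $x$, whereas a diverging mean component is expensive only uniformly over a ball $B(R)$ — the paper treats the scale and location cases in exactly this asymmetric way, and your strict margin $\sum_{j\in D}\penalt_j - C_D > 0$, with $C_D$ the expected cost of re-fitting the dropped block with standard parameters, is precisely the paper's constant $C < \sum_{j=1}^{d^*}\penalt_j$. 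So the coercivity-modulo-plateaus picture and the calibration argument are sound and faithful to the actual proof.

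The genuine gap is in your transfer to $F_n$. You assert that boundedness of the envelope (Lemma~\ref{lemma:dominating}) plus pointwise continuity (Lemma~\ref{lemma:continuitym}) let ``a uniform law of large numbers'' apply to $\{\g_{\mu,\Sigma}\}$; these two facts yield a Glivenko--Cantelli property only over \emph{compact} parameter sets (via the standard Wald bracketing argument), not over the non-compact exterior of $B(M)$, which is exactly where you need it. Worse, your exterior lower bound features the deterministic quantity $\G^{(D^c)}_*(F)$, but what you actually face at stage $n$ is $\int \tg_{m_n,S_n}\,\diff F_n$ evaluated at the \emph{random} restricted parameters $(\tilde m_n,\tilde S_n)$; replacing this by a population infimum requires precisely the uniformity you have not established (or an induction on dimension, which you do not set up). The paper's proof is engineered to sidestep this: it constructs the explicit competitor $m_n^* = [\bzero_{d^*}, \tilde m_n]$ and $S_n^*$ block-diagonal with an identity block and the \emph{same} random $\tilde S_n$, so that the uncontrollable term $\int \tg_{m_n,S_n}\,\diff F_n$ appears on both sides of the comparison and cancels; the only law of large numbers then needed is for the single fixed function $x \mapsto \min_{\tilde w}\big(d^{(\tilde w)}\ln(2\pi) + ||x^{(\tilde w)}||_2^2 + \sum_{j\in D}\penalt_j(1-\tilde w_j)\big)$, giving $\G(m_n,S_n,F_n) > \G(m_n^*,S_n^*,F_n) + o_P(1)$ with nonvanishing probability and contradicting minimality of $(m_n,S_n)$. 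Until you either prove the uniform law over the exterior or adopt such a cancellation device, your final step — combining the two ``high-probability events'' to force $(m_n,S_n)\in B(M)$ — does not go through.
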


\begin{proof}

Consider a sequence of solutions $(m_n, S_n)$ minimizing $\G(m_n, S_n, F_n)$.
Denote the diagonal elements of $S_n$ by $\diag(S_n) = (s_{1, n}, s_{2,n}, \ldots, s_{d,n})$ and the components of $m_n$ by $(m_{1,n}, \ldots, m_{d,n})$. We want to show that the diagonal elements of $S_n$ and the components of $m_n$ are bounded eventually: 
$$\exists M_1>0: \lim_{n\to \infty} P((m_n, \diag(S_n)) \in B(M_1)) = 1\;.$$
This suffices because the off-diagonal entries of $S_n$ are bounded by the diagonal entries due to $((S_n)_{ij})^2 \leqslant s_{i,n} s_{j,n}$ since $S_n$ is PSD.

We will first show that the $s_{1, n}, s_{2,n}, \ldots, s_{d,n}$ cannot diverge.
Suppose the opposite, so that w.l.o.g. the first $d^*$ diagonal elements of $S_n$ are not bounded in this way. Therefore $\forall r > 0, \forall j \in \{1, \ldots, d^*\}: \lim_{n\to \infty} P(s_{j,n} \notin B(r)) > 0$.

If $d^* < d$ we denote by $\tilde{m_n}$ and $\tilde{S_n}$ the location and scatter estimates restricted to all but the first $d^*$ coordinates. Denote $$\tg_{m_n, S_n} (x)\coloneqq\min_{w \in \{0, 1\}^{d-d^*}}{\left( \ln \left|\tilde{S}_n^{(w)}\right|+\dw\ln(2\pi) +  \text{MD}^2(x,w,\tilde{m}_n, \tilde{S}_n) + \bpenalt\,(\bone - w)^{\top}\right) }\;.$$
If $d^* = d$ we set the term $\tg_{m_n, S_n} (x)$ equal to zero.
Note that now, for every $\varepsilon >0$, there is a $P_{\varepsilon} > 0$ so that $$\lim_{n\to \infty} P\left(\forall x : \g_{m_n, S_n}(x) >  \left(\sum_{j=1}^{d^*}\penalt_j - \varepsilon\right) + \tg_{m_n, S_n}(x)\right) = P_{\varepsilon}\;.$$
Intuitively this means that, no matter the value of $x$, by inflating $s_1, \ldots, s_{d^*}$ we can get arbitrarily close to the value of the objective function obtained by dropping the first $d^*$ coordinates. The contribution of these first $d^*$ coordinates to the objective function becomes arbitrarily close to $\sum_{j=1}^{d^*}{\penalt_j}$.

Now consider a new sequence of estimates given by $(m_n^*, S_n^*)$ where $m_n* = [\bm{0}_{d^*}, \tilde{m}_n]$ and $S_n^* = \begin{bmatrix}\bm{I}_{d^*} &\bm{0}\\ \bm{0} & \tilde{S}_n\end{bmatrix}$. Note that we must have $\G(m_n, S_n, F_n) \leq  \G(m_n^*, S_n^*, F_n)$ for all $n$, since $(m_n, S_n)$ is a sequence of minimizers of the objective.
For this new sequence of estimates we have
\begin{align*}
    & \g_{m_n^*, S_n^*}(x)\\
    =&\min_{w\in \{0,1\}^d}{\ln \left|S_n^{*(w)}\right|+\dw\ln(2\pi) +  \text{MD}^2(x,w,m_n^*, S_n^*) + \bpenalt'(\bm{1} - w)}\\
     =&   \min_{\underset{\widetilde{}}{w}\in \{0,1\}^{d^*}}{ \ln\left|\underset{\widetilde{}}{S_n}^{*(\underset{\widetilde{}}{w})}\right| + d^{(\underset{\widetilde{}}{w})} \ln(2\pi) + \text{MD}^2(\underset{\widetilde{}}{x},\underset{\widetilde{}}{w},\underset{\widetilde{}}{m_n}^*, \underset{\widetilde{}}{S_n}^*) +\underset{\widetilde{}}{\bpenalt}'(1-\underset{\widetilde{}}{w}) }\\
    +&\min_{\tilde{w}\in \{0,1\}^{d-d^{*}}}{ \ln \left|\tilde{S}_n^{(\tilde{w})}\right|+ d^{(\tilde{w})}\ln(2\pi) +\text{MD}^2(\tilde{x},\tilde{w},\tilde{m}_n, \tilde{S}_n) + \bm{\tilde{\penalt}}'(\bm{1} - \tilde{w})}\\
    =&  \min_{\underset{\widetilde{}}{w}\in \{0,1\}^{d^*}}{\Big( \ln\left|\underset{\widetilde{}}{S_n}^{*(\underset{\widetilde{}}{w})}\right| + d^{(\underset{\widetilde{}}{w})} \ln(2\pi) + \text{MD}^2(\underset{\widetilde{}}{x},\underset{\widetilde{}}{w},\underset{\widetilde{}}{m_n}^*, \underset{\widetilde{}}{S_n}^*) +\underset{\widetilde{}}{\bpenalt}'(1-\underset{\widetilde{}}{w})\Big)}
    +\tg_{m_n, S_n}(x)\\
      =&  \min_{\underset{\widetilde{}}{w}\in \{0,1\}^{d^*}}{ \Big( d^{(\underset{\widetilde{}}{w})} \ln(2\pi) + ||\underset{\widetilde{}}{x}^{(\underset{\widetilde{}}{w})}||_2^2 +\underset{\widetilde{}}{\bpenalt}'(1-\underset{\widetilde{}}{w}) \Big) }+\tg_{m_n, S_n}(x)
\end{align*}
where the tildes below $m_n, S_n, x, w, \penalt$ indicate the restriction of the quantity to the first $d^*$ coordinates, so $\ln|\underset{\widetilde{}}{S_n}^{*(\underset{\widetilde{}}{w})}| = 0$ and $\text{MD}^2(\underset{\widetilde{}}{x},\underset{\widetilde{}}{w},\underset{\widetilde{}}{m_n}^*, \underset{\widetilde{}}{S_n}^*) = ||\underset{\widetilde{}}{x}^{(\underset{\widetilde{}}{w})}||_2^2$\,.

Now denote $C \coloneqq \int  \min_{\underset{\widetilde{}}{w}\in \{0,1\}^{d^*}}{ (d^{(\underset{\widetilde{}}{w})} \ln(2\pi) + ||\underset{\widetilde{}}{x}^{(\underset{\widetilde{}}{w})}||_2^2 +\underset{\widetilde{}}\bpenalt\,(1-\underset{\widetilde{}}{w})^{\top}) }F(\mathrm{d}x)$ and note that $C < \sum_{j=1}^{d^*}\penalt_j$ by the assumptions on $F$. Take $\varepsilon^* < \sum_{j=1}^{d^*}\penalt_j - C$. Then we can find a $P_{\varepsilon^*} > 0$ so that
\begin{align*}
    &\lim_{n\to \infty}P\left(\g_{m_n, S_n}(x) > \left(\sum_{j=1}^{d^*}\penalt_j - \varepsilon^*\right) +  \tg_{m_n, S_n}(x)\right) = P_{\varepsilon^*}\\
    \Rightarrow&\lim_{n\to \infty}P\left(\g_{m_n, S_n}(x) >  C +  \tg_{m_n, S_n}(x)\right) = P_{\varepsilon^*}\\
    \Rightarrow &\lim_{n\to \infty}P\left(\int \g_{m_n, S_n}(x)F_n(\mathrm{d}x) >  C +  \int \tg_{m_n, S_n}(x)F_n(\mathrm{d}x)\right) = P_{\varepsilon^*}\\
        \Rightarrow&\lim_{n\to \infty}P\left(\G(m_n, S_n, F_n) >  \G(m_n^*, S_n^*, F_n) + o_P(1)\right) = P_{\varepsilon^*}> 0
\end{align*}
where the $o_P(1)$ appears because $C = \int  \min_{\underset{\widetilde{}}{w}\in \{0,1\}^{d^*}} (d^{(\underset{\widetilde{}}{w})} \ln(2\pi) + ||\underset{\widetilde{}}{x}^{(\underset{\widetilde{}}{w})}||_2^2 \;+$ \linebreak $\underset{\widetilde{}}\bpenalt\,(1-\underset{\widetilde{}}{w})^{\top}) F_n(\mathrm{d}x) + o_P(1)$ by the law of large numbers. 
We thus obtain a contradiction, since $(m_n, S_n)$ is supposed to be a sequence of minimizers and we find that our newly constructed sequence attains a lower value of the objective function with non-zero probability for $n$ large enough.

Now we know that $s_{1,n}, \ldots, s_{d,n}$ cannot diverge. It remains to show that the same is true for $m_{1,n}, \ldots, m_{d,n}$. Suppose w.l.o.g. that the first $d^*$ elements of $m_n$ are not appropriately bounded, i.e. that  $\forall R > 0: \forall j \in \{1, \ldots, d^*\}: \liminf_{n\to \infty} P(m_{j,n} \notin B(R)) > 0$.
Note that now, for every $\varepsilon >0$, there exists a $ P_{\varepsilon} > 0$ so that  $$ \lim_{n\to \infty} P\left(\g_{m_n, S_n}(x) >  \left(\sum_{j=1}^{d^*}\penalt_j - \varepsilon\right) + \tg_{m_n, S_n}(x)\right) =  P_{\varepsilon}$$
Intuitively this means that, for any fixed value of $x$, by increasing $m_1, \ldots, m_{d^*}$ we can get arbitrarily close to the value of the objective function obtained by dropping the first $d^*$ coordinates. The contribution of these first $d^*$ coordinates to the objective function becomes arbitrarily close to $\sum_{j=1}^{d^*}{\penalt_j}$. It is worth noting the subtle difference from the scale case, where we had the same identity uniformly for all $x$. In this case, we cannot make the exact same statement. However, as long as we bound $||x||$, we can get uniformity. More specifically, we can strengthen the previous statement as follows. For every $\varepsilon >0$ and $R > 0$, we have that there exists a $ P_{\varepsilon, R} > 0$ so that $$ \lim_{n\to \infty} P\left(\forall x \in B(R): \g_{m_n, S_n}(x) >  \left(\sum_{j=1}^{d^*}\penalt_j - \varepsilon\right) + \tg_{m_n, S_n}(x)\right) = P_{\varepsilon, R} > 0.$$ Note that we used here that all  $s_1, \ldots, s_n$ remain bounded (in probability).

We now consider a new sequence of estimates, like before, given by $(m_n^*, S_n^*)$ where $m_n* = [\bm{0}_{d^*}, \tilde{m}_n]$ and $S_n^* = \begin{bmatrix}\bm{I}_{d^*} &\bm{0}\\ \bm{0} & \tilde{S}_n\end{bmatrix}$. Additionally, take $\varepsilon^* < \frac{1}{2}\left(\sum_{j=1}^{d^*}\penalt_j - C\right)$. Then take $R>0$ such that $C + \left(\sum_{j=1}^{d^*}\penalt_j-|d^*\ln(a)|\right) \int_{B(R)^C} F(\mathrm{d}x) < \varepsilon^*\int_{B(R)}F(\mathrm{d}x)$. Then:
\small
\begin{align*}
    &\lim_{n\to \infty}P\left(\forall x \in B(R):\g_{m_n, S_n}(x) >  \left(\sum_{j=1}^{d^*}\penalt_j - \varepsilon^*\right) +  \tg_{m_n, S_n}(x)\right) = P_{\varepsilon^*, R}\\
    \Rightarrow&\lim_{n\to \infty}P\left(\forall x \in B(R):\g_{m_n, S_n}(x) >  (C+\varepsilon*) +  \tg_{m_n, S_n}(x)\right) = P_{\varepsilon^*, R}\\
    \Rightarrow &\lim_{n\to \infty}P\left(\int_{B(R)} \g_{m_n, S_n}(x)F_n(\mathrm{d}x) >  (C+\varepsilon*) \int_{B(R)} F_n(\mathrm{d}x) + \int_{B(R)} \tg_{m_n, S_n}(x)F_n(\mathrm{d}x)\right) = P_{\varepsilon^*, R}\\
    \Rightarrow &\lim_{n\to \infty}P\left(\int \g_{m_n, S_n}(x)F_n(\mathrm{d}x) - \int_{B(R)^C} \g_{m_n, S_n}(x)F_n(\mathrm{d}x) \; F(\mathrm{d}x)>\right. \\
    &\left.C + \varepsilon^* \int_{B(R)} F_n(\mathrm{d}x) +  \int \tg_{m_n, S_n}(x)F_n(\mathrm{d}x) - \int_{B(R)^{C}} C + \tg_{m_n, S_n}(x)F_n(\mathrm{d}x)\right) = P_{\varepsilon^*, R}\\
    \Rightarrow &\lim_{n\to \infty}P\left( \G(m_n, S_n, F_n) - \int_{B(R)^C} \g_{m_n, S_n}(x)F_n(\mathrm{d}x) \; F(\mathrm{d}x)>\right. \\
    &\left.\G(m_n^*, S_n^*, F_n) + o_P(1) + \varepsilon^*\int_{B(R)}F_n(\mathrm{d}x) - \int_{B(R)^{C}} C + \tg_{m_n, S_n}(x)F_n(\mathrm{d}x)\right) = P_{\varepsilon^*, R}\\
    \Rightarrow &\lim_{n\to \infty}P\left( \G(m_n, S_n, F_n)  >\right. \\
    &\left.\G(m_n^*, S_n^*, F_n) + o_P(1) +\varepsilon^*\int_{B(R)}F_n(\mathrm{d}x) - \int_{B(R)^{C}} C+ \tg_{m_n, S_n}(x) - \g_{m_n, S_n}(x)F_n(\mathrm{d}x)\right) = P_{\varepsilon^*, R}\\
    \Rightarrow &\lim_{n\to \infty}P\left( \G(m_n, S_n, F_n)  >\G(m_n^*, S_n^*, F_n) + o_P(1) +A_n\right) = P_{\varepsilon^*, R}
\end{align*}
\normalsize
where $A_n = \varepsilon^*\int_{B(R)}F_n(\mathrm{d}x) - \int_{B(R)^{C}}(C+ \tg_{m_n, S_n}(x) - \g_{m_n, S_n}(x))F_n(\mathrm{d}x)$. First note that $|\tg_{m_n, S_n}(x) - \g_{m_n, S_n}(x)| \leq \left(\sum_{j=1}^{d^*}\penalt_j-|d^*\ln(a)|\right)$. Therefore, $P(A_n > 0) \to 1$ by our choice of $R$ and the law of large numbers. 
Therefore we obtain
\begin{equation*}
    \lim_{n\to \infty}P\left( \G(m_n, S_n, F_n)  >\G(m_n^*, S_n^*, F_n) + o_P(1)\right) = P_{\varepsilon^*, R}\;.
\end{equation*}
This is again a contradiction, since $(m_n, S_n)$ is supposed to be a sequence of minimizers and we find that our newly constructed sequence attains a lower value for the objective function with nonzero probability for $n$ large enough. 
\end{proof}

From Propositions \ref{prop:consistency_Wald} and \ref{prop:compact}, 
we now obtain Proposition~\ref{prop:consistency} in the paper.

\subsection*{A.2.5\;\; Fisher consistency of the cellMCD location estimator}

The previous parts were concerned with the consistency of 
the estimators for the set of population minimizers.
The population minimizer for $\Sigma$ is not quite the 
underlying parameter, since a small fraction of cells is always
given weight zero due to the penalty term in the objective. 
But for the location $\mu$ we can prove that the unique
minimizer is indeed the underlying parameter vector, so
the cellMCD functional for location is Fisher consistent.
Below we will keep $\Sigma$ fixed at its minimizer, so only
$\mu$ varies. We furthermore assume that $F$ is a strictly
unimodal elliptical distribution which allows a density 
function. Finally, we assume w.l.o.g. that the center of
symmetry of $F$ is $\mu = 0$.

The following Lemma states the relevant properties of the function $\g_{\mu,\Sigma}$\,. We will use the notation $l_w(x, \mu, \Sigma) \coloneqq \ln \left|\Sigma^{(w)}\right|+\dw\ln(2\pi) + \text{MD}^2(x,w,\mu, \Sigma)+\bpenalt\,(\bone - w)^{\top}$.
\begin{lemma}\label{lemma:gfunction}
The function $\g_{\mu,\Sigma}: \mathbb{R}^d \to \mathbb{R}: x \to\g_{\mu,\Sigma}(x)$:
\begin{enumerate}
    \item is minimal in $x = \mu$;
    \item This minimum is unique as long as $\exists\, \delta > 0$ such that $\forall x \in B(\mu, \delta)$ it holds that $\argmin_w  l_w(x, \mu, \Sigma) = \bm{1}_d$\,;
    \item only shifts when $\mu$ changes, i.e. $\g_{\mu,\Sigma}(x) = \g_{0,\Sigma}(x-\mu)$\;
    \item is point symmetric around $\mu$ and, for every $v \in S^{d-1}$, weakly monotone increasing in $||(x-\mu)'v||$.
\end{enumerate}
\end{lemma}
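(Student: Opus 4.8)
The plan is to establish the four statements in the order (3), (1), (4), (2), since the later items build on the earlier ones, and to note at the outset that none of them actually uses the distribution $F$: they are purely pointwise properties of the deterministic function $\g_{\mu,\Sigma}$, so the hypotheses on $F$ are only needed afterwards in Proposition~\ref{prop:Fisherconsistency}. The one structural fact I would exploit throughout is that in $l_w(x,\mu,\Sigma)$ only the term $\text{MD}^2(x,w,\mu,\Sigma)$ depends on $x$ and $\mu$, and only through $(x-\mu)^{(w)}$. Writing $\beta_w := \ln|\Sigma^{(w)}| + d^{(w)}\ln(2\pi) + \bpenalt(\bone-w)^{\top}$ for the $x$-free part, we have $l_w(x,\mu,\Sigma) = \beta_w + \text{MD}^2(x-\mu,w,0,\Sigma)$. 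Statement (3) is then immediate, because every $l_w$, hence their minimum $\g_{\mu,\Sigma}$, depends on $(x,\mu)$ only through $x-\mu$. This reduces everything to $\mu = 0$, and I write $\g := \g_{0,\Sigma}$ from here on.

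For (1), since $\text{MD}^2(x,w,0,\Sigma)\geq 0$ with equality at $x=0$, each $l_w(x,0,\Sigma)\geq \beta_w \geq \min_{w'}\beta_{w'} = \g(0)$, so $\g(x)\geq \g(0)$ for all $x$. For (4), point symmetry follows because $\text{MD}^2(x,w,0,\Sigma)$ is an even quadratic form in $x$, so $l_w(-x)=l_w(x)$ for every $w$ and hence $\g(-x)=\g(x)$. For the monotonicity I would restrict to a ray $x=tv$ with $v\in S^{d-1}$ and $t\geq 0$: there $\text{MD}^2(tv,w,0,\Sigma)=c_w t^2$ with $c_w := (v^{(w)})^{\top}(\Sigma^{(w)})^{-1}v^{(w)}\geq 0$, nonnegative since $\Sigma$, and hence each $\Sigma^{(w)}$, is positive definite. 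Thus $t\mapsto l_w(tv)=\beta_w + c_w t^2$ is weakly increasing in $|t|$ for every $w$, and since the pointwise minimum of a finite family of functions each weakly increasing in $|t|$ is again weakly increasing in $|t|$, the map $t\mapsto \g(tv)$ is weakly increasing. This is exactly monotonicity in $\|(x-\mu)^{\top}v\|=|t|$ along direction $v$.

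The crux is (2), and it is where the ordering pays off: I would use (4) to promote a purely local statement to a global one. Under the hypothesis there is a $\delta>0$ with $\bone_d\in\argmin_w l_w(x,0,\Sigma)$ for all $x\in B(0,\delta)$, so on this ball $\g(x)=l_{\bone_d}(x)=\beta_{\bone_d}+x^{\top}\Sigma^{-1}x$; evaluating at $x=0$ also gives $\g(0)=\beta_{\bone_d}$. Hence for every $x$ with $0<\|x\|<\delta$ we get $\g(x)=\beta_{\bone_d}+x^{\top}\Sigma^{-1}x>\beta_{\bone_d}=\g(0)$, the strict inequality coming from positive definiteness of $\Sigma^{-1}$. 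For the remaining points, given any $x$ with $\|x\|\geq\delta$ I would pick $s\in(0,\delta)$ and set $x':=s\,x/\|x\|$; radial monotonicity from (4) gives $\g(x)\geq \g(x')$, while the punctured-ball estimate gives $\g(x')>\g(0)$, so $\g(x)>\g(0)$. Therefore $x=0$, i.e. $x=\mu$ before the reduction, is the unique minimizer. The only thing to be careful about is that the monotonicity in (4) is merely weak, so strictness in (2) must be produced by the local quadratic term and then carried outward; weak monotonicity suffices precisely because it is composed with a strict inequality at the intermediate point $x'$.
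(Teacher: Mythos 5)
Your proof is correct and follows essentially the same route as the paper: decompose each $l_w$ into an $x$-free constant plus a nonnegative quadratic in the coordinates selected by $w$, verify the four properties for each fixed $l_w$, and pass them through the finite minimum over $w \in \{0,1\}^d$. The one place you go beyond the paper is item (2), where the paper simply asserts that having $\bm{1}_d$ as the argmin on the ball $B(\mu,\delta)$ yields a unique global minimizer, whereas you make the extension to points outside the ball explicit by combining the weak radial monotonicity from (4) with the strict quadratic inequality at an intermediate point $x'$ on the segment --- a worthwhile tightening of the published argument.
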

\begin{proof}
Suppose we fix $\mu$ and $w$ for a moment. Note that $l_w$\,, as a function of $x = (x_1,\ldots, x_d)$, has the following properties:
\begin{itemize}
    \item it is quadratic in those $x_j$ for which $w_j = 1$, and constant in the other $x_j$. It is thus strictly monotone increasing in $||x^{(w)}-\mu^{(w)}||$;
    \item it is a point symmetric function in $x-\mu$;
    \item it is minimal in $x_j = \mu_j$ for all $x_j$ for which $w_j = 1$. So, unless $w = 1$ for all $j$, the minimum is not unique;
    \item changing $\mu$ only shifts this function.
\end{itemize}

So, each function $l_w$ is a quadratic function with a minimum at $x = 0$. This minimum is unique only if $w_j = 1$ for all $j$, in other cases we have some dimensions in which the minimum is not unique (the function is constant there).
Now the function we are interested in, is 
$$\g_{\mu,\Sigma}: \mathbb{R}^d \to \mathbb{R}: x \to \g_{\mu,\Sigma}(x) = \min_{w \in \{0,1\}^d} l_w(x, \mu, \Sigma).$$

The first claim now immediately follows. Since each $l_w$ is minimized in $x = \mu$, this also holds for the minimum of these functions.
Now, this minimum need not be unique in principle. However, if $\exists\, \delta > 0$ such that $\forall x \in B(\mu, \delta)$ it holds that $\argmin_w l_w(x, \mu, \Sigma) = \bm{1}_d$, then we know that we have an open ball around $\mu$ so that for all $x$ in this ball, $w = \bm{1}_d$ attains the lowest value for $l$. In that case, we do have that $x=\mu$ is a unique minimizer of $g_{\mu, \Sigma}$. Intuitively, if we have a region of observations (centered around $\mu$) where no cells are flagged, we obtain a unique minimum for $g_{\mu, \Sigma}$.\\
The third property follows from the fact that for each of the $l_w$ we have that $l_w(x, \mu, \Sigma) = l_w(x-\mu, 0, \Sigma)$, hence it also holds for $g_{\mu,\Sigma}$\;.\\
Finally, $g_{\mu, \Sigma}$ is point symmetric around $\mu$ and weakly monotone increasing in $||(x-\mu)'v||$, because all $l_w$ have these same properties.
\end{proof}

Now that we have the relevant properties of $\g_{\mu,\Sigma}$, we need to prove that the expected value of this function w.r.t. a strictly unimodal elliptical density function centered at 0 is minimized at $\mu = 0$. For this, we first show this in the univariate case.

\begin{lemma}[univariate case]\label{lemma:univariate}
    Let $g_0: \mathbb{R} \rightarrow \mathbb{R}$ be a symmetric function around the origin and assume there is a $\delta > 0$ such that $g_0(|x|)$ is strictly increasing for $|x| < \delta$ and monotone increasing for
    $|x| \geqslant \delta$. Put $g_{\mu}(x) \coloneqq g_0(x-\mu)$. Let $f$ be a strictly unimodal density function symmetric around 0. When $g_{\mu}(x)f(x)$ is integrable, the integral
        $$\int g_{\mu}(x)f(x)\mathrm{d}x$$
    attains its unique minimum at $\mu = 0$.
\end{lemma}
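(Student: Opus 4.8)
The plan is to write $J(\mu):=\int g_\mu(x)f(x)\,\mathrm{d}x=\int g_0(x-\mu)f(x)\,\mathrm{d}x$ and to prove $J(\mu)>J(0)$ for every $\mu\neq 0$. I begin with two reductions. Since both $g_0$ and $f$ are even, the substitution $x\mapsto -x$ yields $J(-\mu)=\int g_0(\mu-x)f(x)\,\mathrm{d}x=\int g_0(x-\mu)f(x)\,\mathrm{d}x=J(\mu)$, so $J$ is even in $\mu$ and it suffices to handle $\mu>0$. Moreover, replacing $g_0$ by $g_0-g_0(0)$ changes $J$ only by the additive constant $g_0(0)$ (because $\int f=1$) while preserving all hypotheses, so I may assume $g_0\geqslant 0$ with $g_0(0)=0$.

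The key idea is a layer-cake representation that turns the problem into a statement about how much mass a symmetric interval captures. Because $g_0(y)$ depends only on $|y|$ and is nondecreasing in $|y|$, its superlevel sets are complements of symmetric intervals: $\{y:g_0(y)>t\}=\{y:|y|>r(t)\}$ with $r(t):=\sup\{s\geqslant 0:g_0(s)\leqslant t\}$. Writing $g_0(x-\mu)=\int_0^\infty \mathbf{1}[g_0(x-\mu)>t]\,\mathrm{d}t$ and applying Tonelli (the integrand is nonnegative) gives
\[
J(\mu)=\int_0^\infty\Big(1-\int_{\mu-r(t)}^{\mu+r(t)}f(x)\,\mathrm{d}x\Big)\,\mathrm{d}t.
\]
Thus minimizing $J$ amounts to maximizing, for each radius $r=r(t)$, the captured mass $I_r(\mu):=\int_{\mu-r}^{\mu+r}f$.

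The monotone direction is then immediate: for $\mu>0$ and $r\geqslant 0$ one has $|\mu+r|=\mu+r\geqslant|\mu-r|$, so strict unimodality (hence $f$ nonincreasing in $|x|$) gives $I_r'(\mu)=f(\mu+r)-f(\mu-r)\leqslant 0$, whence $I_r(0)-I_r(\mu)=\int_0^\mu\big(f(m-r)-f(m+r)\big)\,\mathrm{d}m\geqslant 0$. Integrating over $t$ shows $J(\mu)-J(0)=\int_0^\infty\big(I_{r(t)}(0)-I_{r(t)}(\mu)\big)\,\mathrm{d}t\geqslant 0$, so $\mu=0$ is a minimizer.

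The main obstacle is strictness, i.e.\ ruling out other minimizers, and this is where the two strictness hypotheses must be combined. I would show the double integral $\int_0^\infty\int_0^\mu\big(f(m-r(t))-f(m+r(t))\big)\,\mathrm{d}m\,\mathrm{d}t$ is strictly positive for $\mu>0$. Because $g_0$ is strictly increasing on $[0,\delta)$, the radii $r(t)$ sweep out a set of positive measure inside $(0,\delta)$ as $t$ ranges over small values, so a positive-measure set of strictly positive radii occurs. For such an $r>0$ and $m>0$ small, both $m-r$ and $m+r$ lie in the neighborhood of the mode where $f$ is positive and strictly decreasing in $|x|$; combined with the strict inequality $|m+r|>|m-r|$ this forces $f(m-r)-f(m+r)>0$ on a set of positive $(m,t)$-measure. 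Hence $J(\mu)>J(0)$, giving the unique minimum at $\mu=0$. The delicate points to nail down are the measure-theoretic bookkeeping relating positive measure in $t$ to positive measure in the radius variable, and the precise sense of ``strictly unimodal'' guaranteeing a genuine strictly decreasing region of $f$ near the origin.
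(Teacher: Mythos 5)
Your proof is correct, but it takes a genuinely different route from the paper. The paper's proof is a direct symmetrization trick: it writes $J(\mu)-J(0)$, substitutes $x \mapsto x+\mu/2$ in the first integral so both terms involve $g_0(x-\mu/2)$, folds the integral over $\mathbb{R}^+$ using the symmetry of $g_0$ and $f$, and lands on $\int_{\mathbb{R}^+}\{g_0(x+\mu/2)-g_0(x-\mu/2)\}\{f(x-\mu/2)-f(x+\mu/2)\}\,\mathrm{d}x$, a single integral of a product of two factors that share the same sign; strictness then falls out at once from the strict monotonicity of $g_0$ near the origin together with strict unimodality of $f$, with no measure-theoretic bookkeeping at all. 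Your layer-cake argument instead recognizes the statement as the one-dimensional case of Anderson's lemma: superlevel sets of $g_0$ are complements of symmetric intervals, so minimizing $J$ reduces to showing the centered interval $[-r,r]$ captures maximal $f$-mass, which your identity $I_r(0)-I_r(\mu)=\int_0^\mu\{f(m-r)-f(m+r)\}\,\mathrm{d}m$ delivers (this identity holds exactly via $I_r(\mu)=F(\mu+r)-F(\mu-r)$ with $F$ the absolutely continuous cdf, so no smoothness of $f$ is needed). The two delicate points you flag do close cleanly: for the $t$-measure issue, since $g_0$ is strictly increasing on $[0,\delta)$, any $0<s_0<s_1<\delta$ give $g_0(s_0)<g_0(s_1)$, so the $t$-interval $(g_0(s_0),g_0(s_1))$ has positive length and forces $r(t)\in[s_0,s_1]\subset(0,\delta)$ there (note it is the Lebesgue measure of this set of $t$-values, not of the range of $r$, that matters); for strict unimodality, taking $s_1$ and $m$ small puts both $m\pm r$ in a neighborhood of the mode where $f$ is positive and strictly decreasing in $|x|$, exactly the same reliance on the hypothesis as the paper's own strictness step. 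What each approach buys: the paper's computation is shorter, fully elementary and self-contained; yours is more structural and more general --- it extends naturally to symmetric quasi-convex $g_0$ in higher dimensions via symmetric convex level sets, whereas the paper instead reduces the multivariate case to the univariate lemma by hyperspherical coordinates (its Proposition A.6), so your route could in principle replace both steps at once.
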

\begin{proof}
Note that 
\begin{align*}
    &\int g_{\mu}(x)f(x)\mathrm{d}x - \int g_{0}(x)f(x)\mathrm{d}x \\
     =&\int g_{0}(x-\mu)f(x)\mathrm{d}x - \int g_{0}(x)f(x)\mathrm{d}x \\
     =&\int g_{0}(x-\mu/2)f(x+\mu/2)\mathrm{d}x - \int g_{0}(x-\mu/2)f(x-\mu/2)\mathrm{d}x \\
     =&\int g_{0}(x-\mu/2)\left\{f(x+\mu/2) - f(x-\mu/2)\right\}\mathrm{d}x\\
     =&\int_{\mathbb{R}^-} g_{0}(x-\mu/2)\left\{f(x+\mu/2) - f(x-\mu/2)\right\}\mathrm{d}x  \;+\\
     &\int_{\mathbb{R}^+} g_{0}(x-\mu/2)\left\{f(x+\mu/2) - f(x-\mu/2)\right\}\mathrm{d}x   \\
     =&\int_{\mathbb{R}^+} g_{0}(-x-\mu/2)\left\{f(-x+\mu/2) - f(-x-\mu/2)\right\}\mathrm{d}x \; + \\
     & \int_{\mathbb{R}^+} g_{0}(x-\mu/2)\left\{f(x+\mu/2) - f(x-\mu/2)\right\}\mathrm{d}x   \\
     =&\int_{\mathbb{R}^+} \left\{g_{0}(x+\mu/2) - g_{0}(x-\mu/2)\right\}\left\{f(x-\mu/2) - f(x+\mu/2)\right\}\mathrm{d}x
    \end{align*}
where we have used the symmetry of $f$ in the last equality.
    
 If $\mu \geqslant 0$, then the last line is $\geqslant 0$ because both factors are $\geqslant 0$ due to $g_0$ being monotone increasing and symmetric and $f$ being unimodal and symmetric, and $|x + \mu/2| \geqslant |x-\mu/2|$ for $x \geqslant 0$. For $\mu > 0$ we obtain $|x + \mu/2| > |x-\mu/2|$ in all $x > 0$ so almost everywhere, and by strict unimodality of $f$ the integrated inequality is strict. If $\mu \leqslant 0$ then the last line is  still $\geqslant 0$, as now both factors are $\leqslant 0$ because of the same reasons. So we find
    $$\int g_{\mu}(x)f(x)\mathrm{d}x - \int g_{0}(x)f(x)\mathrm{d}x \geqslant 0$$
for all $\mu$. Note that the equality is reached for $\mu = 0$, and this is the unique minimizer due to the strict monotonicity of $g_0$ in its central region and the strict unimodality of $f$.
\end{proof}

Now we need a multivariate version of the above, which is stated below:

\begin{proposition}[multivariate case]\label{prop:multivariatemu}
    Let the function $g_0: \mathbb{R}^d \rightarrow \mathbb{R}$ be point symmetric around the origin and assume there is a $\delta > 0$ such that for every direction $v$ on the unit sphere $S^{d-1}$ it holds that $g_0(tv)$ is strictly increasing for $0 \leqslant t < \delta$ and weakly monotone increasing for $t \geqslant \delta$. Put $g_{\mu}(x) \coloneqq g_0(x-\mu)$.
    Let $f$ be a strictly unimodal density function which is 
    elliptical around 0. When $g_{\mu}(x)f(x)$ is integrable, 
          $$\int g_{\mu}(x)f(x)\mathrm{d}x$$
    attains its unique minimum at $\mu = 0$.
\end{proposition}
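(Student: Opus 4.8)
The plan is to prove the $d$-dimensional statement by mimicking the proof of the univariate Lemma~\ref{lemma:univariate} through the same ``shift by half'' device, and then reducing the decisive comparison to a one-dimensional monotonicity. Assume $\mu \neq 0$. Substituting $x \mapsto x - \mu/2$ in $\int g_0(x)f(x)\,\mathrm{d}x$ and $x\mapsto x-\mu/2$ in $\int g_\mu(x)f(x)\,\mathrm{d}x$, and using only $g_\mu(x)=g_0(x-\mu)$, I would first write the difference of the two objectives as a single integral
\begin{equation*}
  \int g_\mu(x) f(x)\,\mathrm{d}x - \int g_0(x) f(x)\,\mathrm{d}x
  = \int g_0\!\left(x - \tfrac{\mu}{2}\right)
      \left[\, f\!\left(x + \tfrac{\mu}{2}\right)
           - f\!\left(x - \tfrac{\mu}{2}\right) \right] \mathrm{d}x ,
\end{equation*}
which is the exact multivariate analogue of the opening display in the proof of Lemma~\ref{lemma:univariate}.

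Next I would fold the integral across the hyperplane through the origin on which $f(x+\mu/2)=f(x-\mu/2)$. Since $f$ is elliptical and strictly unimodal, its value is a strictly decreasing function of the shape-Mahalanobis norm $\|\cdot\|_{\Sigma}$, so this separating hyperplane is $\{x:\;x^\top \Sigma^{-1}\mu = 0\}$; let $H^-=\{x:\;x^\top\Sigma^{-1}\mu\leqslant 0\}$, on which $f(x+\mu/2)-f(x-\mu/2)\geqslant 0$. Applying the reflection $x\mapsto -x$ to the complementary half-space and invoking the point symmetry $g_0(-y)=g_0(y)$ together with the central symmetry $f(-y)=f(y)$, the two contributions combine into
\begin{equation*}
  \int_{H^-}
   \left[\, g_0\!\left(x-\tfrac{\mu}{2}\right) - g_0\!\left(x+\tfrac{\mu}{2}\right) \right]
   \left[\, f\!\left(x+\tfrac{\mu}{2}\right) - f\!\left(x-\tfrac{\mu}{2}\right) \right]
   \mathrm{d}x ,
\end{equation*}
just as in the univariate argument, where the second bracket is nonnegative on $H^-$ (and strictly positive off a null set by strict unimodality). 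It then remains only to sign the first bracket on all of $H^-$.

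The heart of the proof, and the step I expect to be the main obstacle, is establishing $g_0(x-\mu/2)\geqslant g_0(x+\mu/2)$ throughout $H^-$. The two points $x\mp\mu/2$ differ precisely by the vector $\mu$, so in the $\Sigma$-inner product they share the same component orthogonal to $\mu$; hence the comparison collapses to a one-dimensional question along the $\mu$-direction, which is governed by the point symmetry of $g_0$ and its monotonicity along rays from the center (the properties recorded in Lemma~\ref{lemma:gfunction}), i.e.\ exactly the ingredient used in Lemma~\ref{lemma:univariate}. The delicate point is that the half-space dictated by $f$ (through $\Sigma^{-1}\mu$) must coincide with the region on which the $g_0$-increment is nonnegative; this compatibility is what makes the integrand sign-definite rather than merely nonnegative on average. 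In our setting it holds because $f$ and $g_0=g_{0,\Sigma}$ are governed by the \emph{same} shape matrix $\Sigma$, so the level sets of $g_0$ and $f$ share their axes and the $f$-separating hyperplane agrees with the $g_0$-separating hyperplane; on $H^-$ the point $x-\mu/2$ is then the one lying further out in the $\Sigma$-norm, and monotonicity gives $g_0(x-\mu/2)\geqslant g_0(x+\mu/2)$.

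With both brackets nonnegative on $H^-$ the displayed integral is $\geqslant 0$, so $\int g_\mu f \geqslant \int g_0 f$ for every $\mu$. Finally I would upgrade this to a strict inequality for $\mu\neq 0$: the $f$-bracket is strictly positive on a set of positive measure by strict unimodality, while the $g_0$-bracket is strictly positive wherever $x-\mu/2$ lies in the central ball $B(0,\delta)$ on which $g_0$ is strictly monotone, exactly as in the closing line of the proof of Lemma~\ref{lemma:univariate}. Hence the integral is strictly positive, equality is attained only at $\mu=0$, and $\mu=0$ is the unique minimizer of $\int g_\mu(x)f(x)\,\mathrm{d}x$.
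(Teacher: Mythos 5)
Your opening reduction is algebraically fine, and the route is genuinely different from the paper's: the paper does not fold across a hyperplane but passes to hyperspherical coordinates and applies the univariate Lemma~\ref{lemma:univariate} separately along each direction $v \in S^{d-1}$, with $h(t) = g_0(tv)\,|t|^{d-1}$ playing the role of the symmetric monotone function and the restriction of $f$ to the line being symmetric and strictly unimodal about some $t_0$; uniqueness then follows because the inner integral is minimized only when the component of $\mu$ along each direction vanishes. However, the step you yourself identify as the heart of your argument --- $g_0(x-\mu/2) \geqslant g_0(x+\mu/2)$ for all $x \in H^-$ --- is a genuine gap: it does not follow from the hypotheses and is false in general. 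The proposition assumes point symmetry and monotonicity of $g_0$ only along \emph{rays through the origin}, whereas your inequality compares $g_0$ at two points on a chord parallel to $\mu$, which (except when $x \parallel \mu$) does not pass through the origin; ray-wise monotonicity says nothing about such chords. Concretely, in $d=2$ take $f$ spherical ($\Sigma = \bI$), $\mu = (m,0)$ with $m>0$, and in polar coordinates $g_0(r,\theta) = r^2\,(2+\sin 2\theta)$: this is point symmetric (period $\pi$ in $\theta$) and strictly increasing along every ray, yet for $x = (-\epsilon, y) \in H^-$ with small $\epsilon > 0$ the point $x+\mu/2$ sits at an angle where $\sin 2\theta > 0$ and $x - \mu/2$ where $\sin 2\theta < 0$, giving $g_0(x-\mu/2) < g_0(x+\mu/2)$ on an open subset of $H^-$. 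So your folded integrand is not sign-definite, and the strictness argument collapses with it.

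The justification you offer --- that the level sets of $g_0$ and $f$ ``share their axes'' because both are governed by the same $\Sigma$ --- also fails in the intended application. The function $g_{0,\Sigma}$ of Lemma~\ref{lemma:gfunction} is a minimum of the quadratics $l_w$, each of which is degenerate (constant) in the flagged coordinates, so its level sets are \emph{not} $\Sigma$-ellipsoids (compare the domains-of-attraction figure in Section~A.5). Indeed, for fixed $w$ one has
\begin{equation*}
  l_w\bigl(x-\tfrac{\mu}{2},0,\Sigma\bigr) - l_w\bigl(x+\tfrac{\mu}{2},0,\Sigma\bigr)
  = -2\,\bigl(x^{(w)}\bigr)^{\top}\bigl(\Sigma^{(w)}\bigr)^{-1}\mu^{(w)},
\end{equation*}
whose sign is governed by a half-space depending on $w$ and generally different from your $H^- = \{x : x^{\top}\Sigma^{-1}\mu \leqslant 0\}$; since $g_{0,\Sigma} = \min_w l_w$, no single hyperplane makes the increment sign-definite. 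Your approach could be repaired if one added the hypothesis that $g_0$ is a monotone function of $\|x\|_{\Sigma}$, for then $g_0(x-\mu/2) \geqslant g_0(x+\mu/2)$ on $H^-$ does hold and your folding argument goes through --- but that stronger assumption is exactly what the cellMCD objective violates, which is why the paper states the proposition under ray-wise monotonicity and proves it by the polar decomposition instead.
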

\begin{proof}
Note that
$$\int g_{\mu}(x)f(x)\mathrm{d}x =\int g_{0}(x)f(x+\mu)\mathrm{d}x\;.$$
By switching to hyperspherical coordinates this multivariate integral
becomes
$$\frac{1}{2} \int_{v \in S^{d-1}}\left\{ 
     \int_{t \in \mathbb{R}} g_{0}(t v)f(t v+\mu) |t|^{d-1} \mathrm{d}t \right\}d\eta(v)$$
where $\eta$ is the uniform probability measure on the unit sphere $S^{(d-1)}$. This is a change of variables: x is written as $t v$ with $t \in \mathbb{R}$ and $v \in S^{d-1}$. The factor $|t|^{d-1}$ is the Jacobian.

Now consider the inner integral
$$\int_{t \in \mathbb{R}} g_{0}(t v)f(t v+\mu) 
  |t|^{d-1} \mathrm{d}t = \int_{t \in \mathbb{R}} 
  h(t) f(t v+\mu) \mathrm{d}t$$
where the univariate function $h(t) \coloneqq g_0(t v)|t|^{d-1}$ is symmetric around $t = 0$ and monotone increasing in $|t|$, and even strictly monotone increasing for $|t| < \delta$.

The other function in the inner integral is $f(t v+\mu)$ where $t v+\mu$ forms a straight line. Due to the properties of $f$ this function is symmetric about $t_0 = -\mu'v$ and strictly unimodal. It is in fact
a constant multiple of the conditional density on that line.
Consider the  univariate function $f^{(v)}(t) \coloneqq f(tv)$. Then $f(t v + \mu) = f^{(v)}(t - t_0)$ is a univariate function of $t$. So the entire inner integral becomes
$$ \int_{t \in \mathbb{R}} 
  h(t) f^{(v)}(t - t_0)\, \mathrm{d}t =
  \int_{t \in \mathbb{R}} 
  h(t - (-t_0)) f^{(v)}(t)\, \mathrm{d}t\;.$$
To this integral we can apply Lemma~\ref{lemma:univariate}, which tells us that the integral is minimized when $t_0 = 0$ and that this minimizer is unique.
So we know that for any direction $v \in S^{d-1}$ the inner integral is minimal
when $\mu'v = -t_0 = 0$. Therefore the entire integral is minimal when $\mu = 0$. This minimizer is unique because the only vector $\mu$ that is orthogonal to every direction $v$ on the unit sphere is the origin. 
\end{proof}

Proposition~\ref{prop:Fisherconsistency} in the paper now 
follows from Lemma~\ref{lemma:gfunction} and 
Proposition~\ref{prop:multivariatemu}. It covers typical 
model distributions $f$ such as multivariate Gaussians 
and elliptical $t$-distributions.

\clearpage
\section*{A.3\;\;About the algorithm in Section~\ref{sec:algo}}

\vspace{1cm} 

\begin{proof}[Proof of Proposition~\ref{prop:split}.]
Put $\bmu=\bzero$ without loss of generality. 
Following Petersen and Pedersen (2012), p. 47, 
we can write 
  \[ \bSigma^{-1}=\bA \bB \bA^\top\]
with 
  \[\begin{array}{ll}
  \bA=\begin{bmatrix}
  \bI & \bzero\\
  -\bSigma_{22}^{-1} \bSigma_{21} & \bI
\end{bmatrix}
\;\;\mbox{ and }
& \bB=\begin{bmatrix}
  \bC_1^{-1} & \bzero\\
  \bzero & \bSigma_{22}^{-1}
  \end{bmatrix}\;.
  \end{array}\]
Note that 
  \[\bx^\top\bA=\begin{bmatrix} \bx_1^\top-
  \bx_2^\top\bSigma_{22}^{-1}\bSigma_{21} & 
	\bx_2^\top\end{bmatrix} = 
  \begin{bmatrix}\bx_1^\top-\bhx_1^\top &
  \bx_2^\top \end{bmatrix}\]
and so
\begin{align*}
  \MD^2(\bx,\bzero,\bSigma) &=
  \bx^\top \bSigma^{-1} \bx = 
	(\bx^\top \bA) \bB(\bx^\top \bA)^\top\\
  &= (\bx_1-\bhx_1)^\top \bC_1^{-1}
	   (\bx_1-\bhx_1)+ 
		 \bx_2^\top\bSigma_{22}^{-1} \bx_2\\
  &= \MD^2(\bx_1,\bhx_1,\bC_1) + 
	   \MD^2(\bx_2,\bzero,\bSigma_{22})\;.
\end{align*}
For \eqref{eq:addlikelihood}, we verify that 
  \[ |\bSigma^{-1}|=|\bA|\,|\bB|\,|\bA|=
  1\,|\bC_1^{-1}|\,|\bSigma_{22}^{-1}|\,1\]
so \[|\bSigma|=|\bC_1|\,|\bSigma_{22}|\;.\]
Finally,
\begin{align*}
  & L(\bx,\bmu,\bSigma)-
	L(\bx_2,\bmu_2,\bSigma_{22})\\ 
	&= \MD^2(\bx,\bmu,\bSigma)-
	   \MD^2(\bx_2,\bmu_2,\bSigma_{22})+
		 d\ln(2\pi) - d^{(\bx_2)}\ln(2\pi) + 
		 \ln |\bSigma|-
		 \ln |\bSigma_{22}|\\
  &= \MD^2(\bx_1,\bhx_1,\bC_1)+
	   d^{(\bx_1)}\ln(2\pi) + \ln |\bC_1|
		=L(\bx_1,\bhx_1,\bC_1)\;.
\end{align*} \end{proof}

\newpage
\noindent{\bf Pseudocode of the cellMCD algorithm}

\vspace{0.6cm}
For the purpose of clarity, we assume throughout the pseudocode algorithms that $n, d, q_j$, and $h$ are global constants and that the input data has already been standardized robustly. The function \texttt{getObjective} simply computes and returns the cellMCD objective given the current estimates of the parameters.\\

\begin{algorithm}[hbt!]
\caption{The cellMCD algorithm}\label{alg:cellMCD}
\begin{algorithmic}
\Require A dataset $\bX \in \mathbb{R}^{n \times d}$, initial estimates $\hmu_{\mbox{\small{init}}}$ and $\hSigma_{\mbox{\small{init}}}$ of location and scatter and a matrix $W_{\mbox{\small{init}}} \in \{0,1\}^{n \times d}$ of flagged cells.
\State $\hmu \gets \hmu_{\mbox{\small{init}}}$
\State $\hSigma \gets \hSigma_{\mbox{\small{init}}}$
\State $W \gets W_{\mbox{\small{init}}}$
\State $\texttt{Objective} \gets \texttt{getObjective}(\bX, \hmu, \hSigma, W)$ \Comment{Initial value of the objective}

\While{\texttt{!Converged}}
\State $W \gets \texttt{updateW}(\bX, \hmu, \hSigma, W)$  \Comment{Part a of the C-step}
\State $(\hmu, \hSigma) \gets \texttt{updatemS}(\bX, \hmu, \hSigma, W)$  \Comment{Part b of the C-step}
\State $\texttt{Converged} \gets  (\texttt{Objective} - \texttt{getObjective}(\bX, \hmu, \hSigma, W)) < 1\text{e-}10$
\State $\texttt{Objective} \gets \texttt{getObjective}(\bX, \hmu, \hSigma, W)$
\EndWhile
\State \textbf{return} $(\hmu, \hSigma, W)$
\end{algorithmic}
\end{algorithm}

\begin{algorithm}[hbt!]
\caption{Part a of the C-step}\label{alg:Cstepa}
\begin{algorithmic}
\Function{updateW}{$\bX, \hmu, \hSigma, W$}
\State $\texttt{ordering} \gets\texttt{order}( |\bX| \bm{1}_d)$  \Comment{Order of variable updates}
\For{j $\in$ \texttt{ordering}}\Comment{Cycle through the variables in order}
\State $\Delta  \gets \bm{0}_d$
\For{i= 1:n}
\State $\texttt{obs} \gets (W_{i,\cdot} == 1)$ \Comment{Indices of preserved cells}
\State $\texttt{xhat} \gets \hmu_{j} + \hSigma_{j,\texttt{obs}} \left(\hSigma_{\texttt{obs},\texttt{obs}}\right)^{-1} (\bX_{i, \texttt{obs}} - \hmu_{\texttt{obs}}) $ \Comment{Conditional expectation}
\State $C  \gets \hSigma_{j,j} - \hSigma_{j,\texttt{obs}} \left(\hSigma_{\texttt{obs},\texttt{obs}}\right)^{-1} \hSigma_{\texttt{obs},j}$ \Comment{Conditional variance}
\State $\Delta_{i}  \gets \ln(C) + \ln(2\pi) + (\bX_{ij} - \texttt{xhat})^2/C - \penalt_j$
\EndFor
\State $\texttt{cutoff} \gets \max\{\Delta_{(h)}, 0\}$ \Comment{Flag no more than $n-h$ positive Deltas}
\For{i= 1:n}
\If{$\Delta_{i} \geq \texttt{cutoff}$}
\State $W_{i, j} \gets 0$
\Else
\State $W_{i, j} \gets 1$
\EndIf
\EndFor
\EndFor
\State \textbf{return} $W$
\EndFunction
\end{algorithmic}
\end{algorithm}

\begin{algorithm}[hbt!]
\caption{Part b of the C-step}\label{alg:Cstepb}
\begin{algorithmic}
\Function{updatemS}{$\bX, \hmu, \hSigma, W$}
\State $\bX_{\texttt{imp}}\gets \bX$ \Comment{Initialize imputed data matrix}
\State $B\gets \bm{0}^{d \times d}$ \Comment{Initialize bias correction matrix}
\For{i= 1:n}
\State $\texttt{mis} \gets (W_{i,\cdot} == 0)$ \Comment{Indices of flagged cells}
\State $\texttt{obs} \gets (W_{i,\cdot} == 1)$ \Comment{Indices of preserved cells}
\State $\bX_{\texttt{imp}, i, \texttt{mis}} \gets \hmu_{\texttt{mis}} + \hSigma_{\texttt{mis},\texttt{obs}} \left(\hSigma_{\texttt{obs},\texttt{obs}}\right)^{-1} (\bX_{i, \texttt{obs}} - \hmu_{\texttt{obs}}) $
\State $B_{\texttt{mis}, \texttt{mis}} \gets B_{\texttt{mis}, \texttt{mis}}  + \left(\hSigma_{\texttt{mis},\texttt{mis}} - \hSigma_{\texttt{mis},\texttt{obs}} \left(\hSigma_{\texttt{obs},\texttt{obs}}\right)^{-1} \hSigma_{\texttt{obs},\texttt{mis}}\right)$
\EndFor
\State $\hmu \gets \frac{1}{n} \bX_{\texttt{imp}} \bm{1}_d $
\State $\hSigma \gets \frac{1}{n} (\bX_{\texttt{imp}}  - \bm{1}_n \; \hmu')'(\bX_{\texttt{imp}}  - \bm{1}_n\; \hmu') + \frac{1}{n}B$
\State \textbf{return} $(\hmu, \hSigma)$
\EndFunction
\end{algorithmic}
\end{algorithm}

\clearpage
\begin{proof}[Proof of Proposition~\ref{prop:Cstep}.]
We first prove statement (i).
Part (a) of the C-step repeatedly updates
one column of $\BtW$, say column $j$. 
It sets $\Btw_{ij}=1$ for all $i$ with 
negative $\Delta_{ij}$. 
If that number exceeds $h$ the constraint is
satisfied, and otherwise it takes the $h$ 
smallest values of $\Delta_{ij}$.
In either case we obtain the lowest
sum of the terms of the 
objective~\eqref{eq:cellMCD} in column $j$
that satisfies the constraint, so that sum
has to be less than or equal to before.
This remains true after repeating the
procedure on other columns.

Part (b) starts by performing the standard
E-step.
Next, the M-step is carried out and the
constraint $\eig_d(\bhSigma) \geqslant a$ is
applied by truncating all eigenvalues of
$\bhSigma$ at $a$ from below.
This combination nevertheless reduces the 
objective~\eqref{eq:cellMLE} or keeps it 
the same, following section 11.3 of 
\cite{LR2020} on the Gaussian model with
a restricted covariance matrix.
This is because the E-step is unchanged, 
whereas the constraint acts on the M-step 
which is the same as if the result of the 
E-step came from complete data.
For our specific constraint this was also
shown in Proposition 1 of 
Aubry et al. (2021), see in particular 
their formulas (33) and (34).
Since the objective~\eqref{eq:cellMLE}
is reduced or stays the same, this also
follows for the total 
objective~\eqref{eq:cellMCD}.

We now prove statement (ii).
The algorithm iterates C-steps, and 
converges because the objective decreases 
in each C-step (when it remains the same 
the algorithm is done) and there is a 
finite lower bound on the 
objective~\eqref{eq:cellMCD}.
To see the latter, first consider a fixed 
matrix $\bW$. 
Then the first term satisfies
$\ln(|\bSigma^{(w_i)}|) = 
\ln(\prod_j{\eig_j(\bSigma^{(w_i)})})
=
\sum_j{\ln(\eig_j(\bSigma^{(w_i)}))}
\geqslant
||w_i||_0 \ln(\eig_d(\bSigma))
\geqslant ||w_i||_0 \ln(a)$
which is finite, and all the other
terms are bounded below by zero.
The overall lower bound is the minimum
of such lower bounds over the finite 
number of possible matrices $\bW$
that satisfy the constraint, 
so it is finite.
\end{proof}

\clearpage
\section*{A.4\;\; The initial estimator DDCW}
\label{A:DDCW}

The C-step iterations of section
\ref{sec:algo} need initial cellwise robust
estimates $\bhmu^0$ and $\bhSigma^0$ of 
location and covariance.
For this purpose we developed an initial 
estimator called DDCW, described here.
Its steps are:
\begin{enumerate}
\item Drop variables with too many missing values
      or zero median absolute deviation, and
			continue with the remaining columns.
\item Run the DetectDeviatingCells (DDC) method
      \citep{DDC2018} with the constraint
			that no more than $n - h$ cells are
			flagged in any variable.
			DDC also rescales the variables, and may
			delete some cases.
			Continue with the remaining imputed and
			rescaled cases denoted as $\bz_i$\,.
\item Project the $\bz_i$ on the axes of
      their principal components, yielding the 
			transformed data points	$\btz_i$\;.
\item Compute the wrapped location $\bhmu_w$
      and covariance matrix $\bhSigma_w$
			\citep{raymaekers2021fast} of these 
			$\btz_i$\,. 
			Next,	compute the temporary points
			$\bu_i = (u_{i1},...,u_{id})$
			given by $u_{ij} = \max(\min(
			\tilde{z}_{ij}-(\bhmu_w)_j,2),-2)$.
			Then remove all cases for which the 
			squared robust distance
			$\RD^2(i) = 
			 \bu_i' \bhSigma_w^{-1} \bu_i$
			exceeds\; 
			$\chi^2_{d,0.99} 
			 \median_h(\RD^2(h))/\chi^2_{d,0.5}$\;.
\item Project the remaining $\btz_i$ on the
      eigenvectors of $\bhSigma_w$ and
			again compute a wrapped location and
			covariance matrix.
\item Transform these estimates back to
      the original coordinate system of
			the imputed data, and undo the scaling.
			This yields the estimates 
			$\bhmu^0$ and $\bhSigma^0$\,.
\end{enumerate}
Note that DDCW can handle missing values since
the DDC method in Step 2 imputes them.
The reason for the truncation in the rejection
rule in Step 4 is that otherwise the robust
distance $\RD$ could be inflated by a single
outlying cell.
Step 4 tends to remove rows which deviate
strongly from the covariance structure. 
These are typically rows which cannot be shifted 
towards the majority of the data without 
changing a large number of cells.

\section*{A.5\;\; Choice of the tuning constant}

The $q_j$ in the objective function~\eqref{eq:cellMCD}
are given by expression~\eqref{eq:lambdaj} which contains 
the single tuning constant $p$. This tuning constant
determines how many cells are flagged, which has 
consequences for efficiency and robustness. Therefore, we
have to choose its default value carefully.

Based on the discussion around~\eqref{eq:deltaij},  
the condition for flagging a cell $x_{ij}$ is 
 \begin{align*}
\ln(C_{ij}) + \ln(2\pi) +
	   (x_{ij} - \hx_{ij})^2/C_{ij}
		 > \penalt_j
\end{align*}
where the scalars $\hx_{ij}$ and $C_{ij}$ are the estimated conditional 
mean and variance of the cell $X_{ij}$ given the observed cells 
in row $i$, i.e. those with $\tw_{i\cdot}=1$. 
Together with our choice of $q_j$ in~\eqref{eq:lambdaj}, we 
see that $x_{ij}$ is flagged if and only if 
  $$\frac{(x_{ij} - \hx_{ij})^2}{C_{ij}}\, > \chi^2_{1,p}\;,$$ 
i.e. its squared conditional residual exceeds the $p$-th 
quantile of the chi-square distribution.

There is no simple analytic expression for the 
population cellMCD covariance matrix. We can write it as 
the minimizer of the objective function, as we have 
done in~\eqref{eq:cellMCD}.
For a given weight matrix $\bW$ it satisfies (keeping 
$\bmu = \bzero$ fixed for simplicity of notation):
$$\hSigma_{jk} = \frac{1}{n}\sum_{i=1}^{n} \by_{ij} \by_{ik} + c_{jki}$$
where $$\by_{ij}= \begin{cases}
    \bx_j &\mbox{ if } \bw_{ij} = 1\\
    E[\bx_j|\bw_i, \bhSigma] &\mbox{ if } \bw_{ij} = 0\\
\end{cases}$$ and
$$c_{jki}= \begin{cases}
    0 &\mbox{ if } \bw_{ij} = 1 \mbox{ or } \bw_{ik} = 1\\
    \Cov[\bx_{ij},\bx_{ik}|\bw_i, \bhSigma] 
    &\mbox{ if } \bw_{ij} =  \bw_{ik}=0\\
\end{cases}$$
This is the maximum likelihood estimate for incomplete data, which is consistent for cells missing completely at random, but here the $w_{ij}$ are not of that type since they correspond to cells that were flagged due to being extreme in some sense.

From these formulas, the cellMCD covariance matrix can be seen as a classical covariance matrix computed on the imputed data, with an additional correction. If the imputed data is the original data, i.e. if we flag no cells, we recover the classical covariance matrix.

To illustrate how the flagging of cells depends on the choice of $p$ we look at regions where one or both cells are flagged. For this we considered a bivariate normal distribution with center $\bmu=\bzero$. The diagonal entries of its scatter matrix $\bSigma$ are $1$, and its off-diagonal entries equal $\rho=0.9$. The resulting ``domains of attraction'' are shown in the figure below, for different values of $p$. In the central region no cells are flagged. In the horizontal region the first cell is flagged, and in the vertical region the second cell is. In the `corner' regions both cells are flagged. We see that the central region expands with $p$.\\

\begin{figure}[!ht]
    \centering
\includegraphics[width = 0.22\columnwidth]{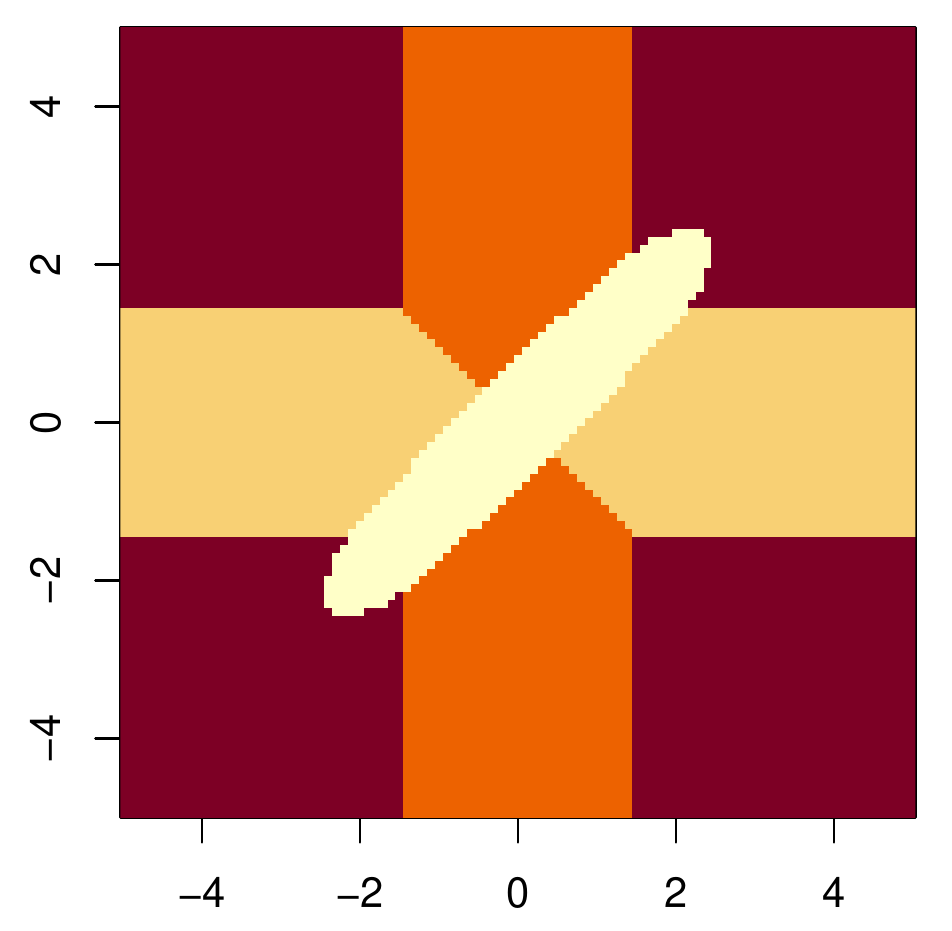}
\includegraphics[width = 0.22\columnwidth]{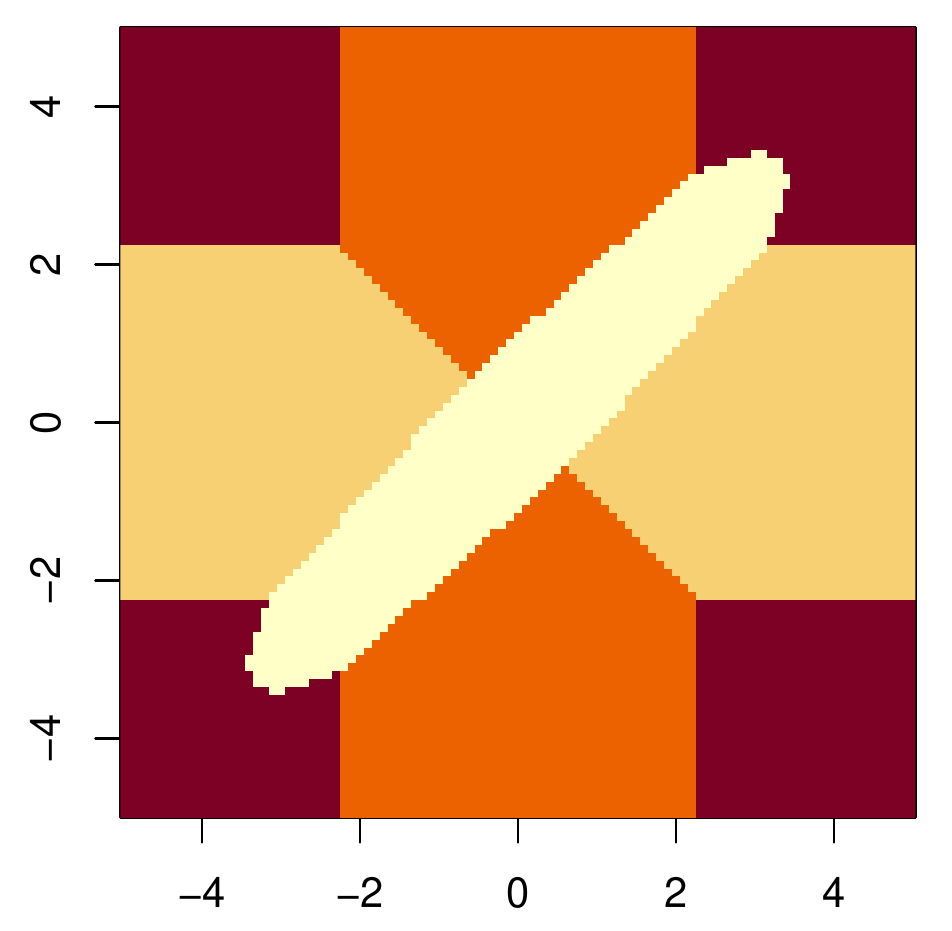}
\includegraphics[width = 0.22\columnwidth]{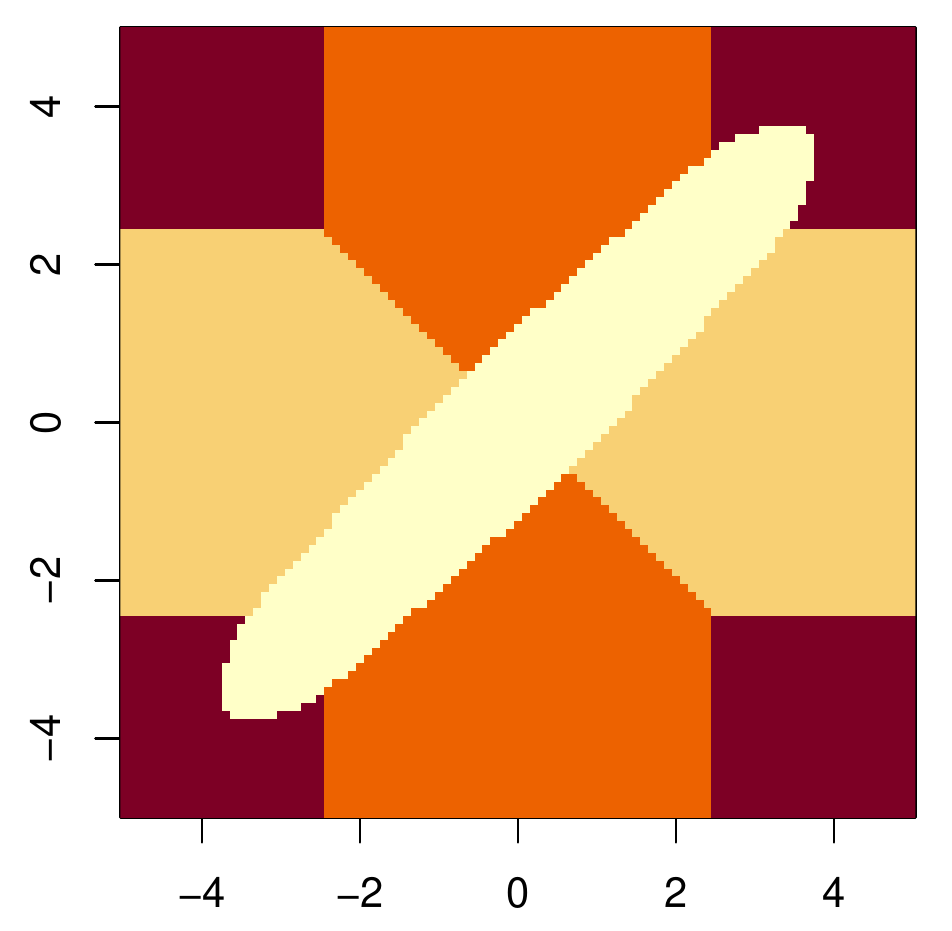}
\includegraphics[width = 0.22\columnwidth]{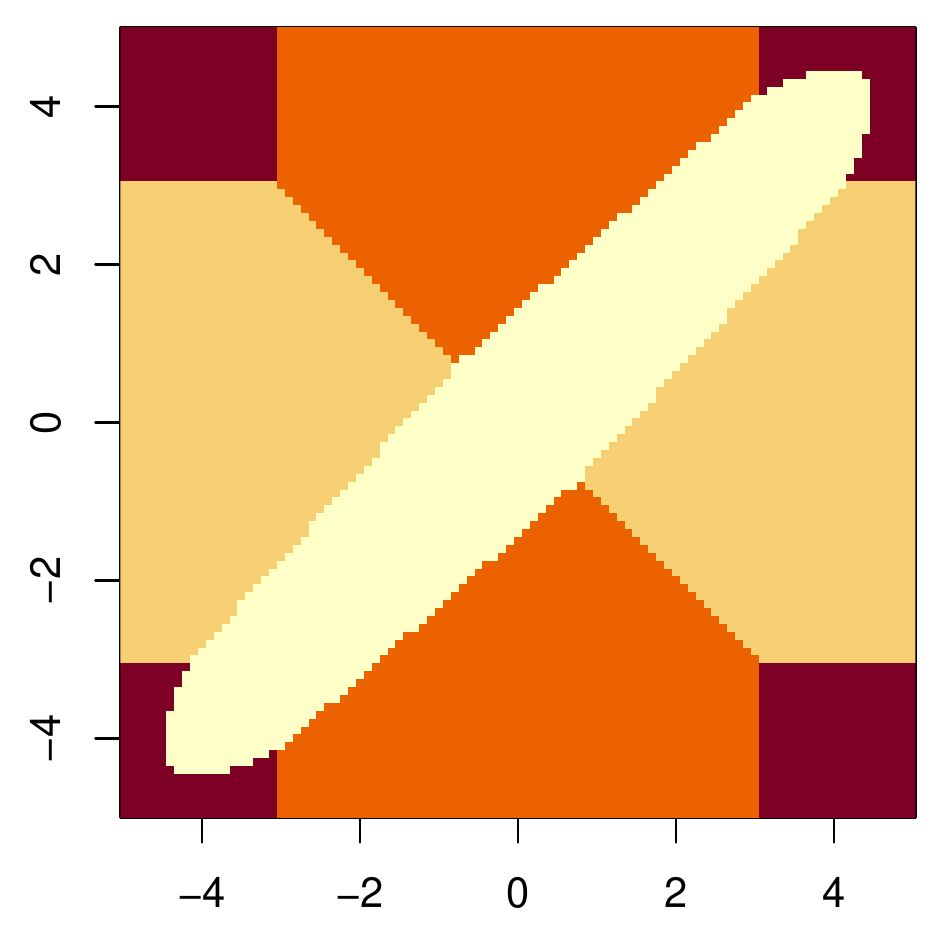}\\
\vspace{-4mm}
\caption{Domain of attraction for $\rho = 0.9$ and 
        quantiles of 0.95, 0.99, 0.995, 0.999.}
\label{fig:rho9}
\end{figure}

As long as $\hx_{ij}$ and $C_{ij}$ remain bounded (which happens under a wide variety of distributions, including heavy-tailed and contaminated ones, due to the good breakdown value), this implies that the cellMCD estimator will yield the classical maximum likelihood estimator of the covariance matrix if $p$ is taken large enough.

This might tempt us to look for a rate at which $p$ can diverge while achieving estimation consistency. 
This would be similar to letting the tuning parameter in Huber-type estimators diverge as in \cite{Sun2020}.
This can work well under specific assumptions on the contamination. 
The drawback of this exercise for us, and the reason why we chose not to pursue this direction, is that the breakdown value will be lost no matter the rate at which $p$ diverges. 
To see why the breakdown value is lost, let $p(n)$ tend to 1 at some rate depending on the sample size $n$. Then we can replicate part (d) of the proof of Proposition 2 on the breakdown value in Section A.1 of the Supplementary Material, where we can pick a fixed percentage of cells, say $\varepsilon n$ for some $0 < \varepsilon < (n-h)/n$, and set them equal to the sequence $c_n$ where $c_n$ does not diverge too fast, while $\bw_{ij} =1$. It suffices to take $c_n = \mathrm{o}\left(\sqrt{\chi^2_{1,p(n)}}\right)$. But then the method would break down.

The default choice of $p=0.99$ was guided by a tradeoff between robustness and efficiency. This is a typical approach in robust statistics, for instance when choosing the tuning constant of Huber's M-estimator or Tukey's bisquare. In Figures~\ref{fig:KLdiv_lowdim} and~\ref{fig:KLdiv_highdim} in Section~\ref{sec:simul} of the paper we saw that the cellMCD method with this $p$ is very robust to outliers, and Table 2 showed its good finite-sample efficiency. We can also look at the efficiency for varying $p$. The table below shows some approximate large-sample efficiencies as a function of $p$. They were obtained by repeatedly generating $n=10000$ data points from the multivariate normal distribution in dimension $d = 3$ with covariance matrix of type A09, and running cellMCD with cutoff $q_j$ given by different $p$-th quantiles. The variances of the entries of the resulting matrices $\hSigma$ were then compared to those of the classical covariance matrix. This rough result illustrates that the efficiency goes up with increasing $p$, and reaches a satisfactory value for $p=0.99$.\\ 

\begin{table}[h!]
\centering
\begin{tabular}{c|cccccc}
\hline
quantile $p$ & 0.95 & 0.975 & 0.99 & 0.995 & 0.999 & 0.9999\\
efficiency & 0.84 & 0.92 & 0.93 & 0.94 & 0.99 & 0.99\\
\hline
\end{tabular}
\label{tab:eff_p}
\caption{Approximate efficiency of the entries of $\bhSigma$ 
         as a function of the quantile $p$.}  
\end{table}

\section*{A.6\;\; More simulation results}
\label{A:simul}

\subsection*{A.6.1\;\;Variability of the simulation results}
\label{A:simulsdv}

A reviewer asked for the variability of the Kullback-Leibler
discrepancy in the simulation, the averages of which are shown 
in Figures~\ref{fig:KLdiv_lowdim} and~\ref{fig:KLdiv_highdim} 
in the paper. Their standard deviations are listed in 
Table~\ref{cellMCD_sds_n_corrTypeALYZ_eps10} below for ALYZ,
and in Table~\ref{cellMCD_sds_n_corrTypeA09_eps10} for A09.\\

\begin{table}[ht]
\footnotesize
\centering
\begin{tabular}{||ll||rrrrrrrrrr||}
  \hline
  &&  &  &  &  &  $\gamma$ &  &  &  &  & \\
  \hline
 d&method& 1 & 2 & 3 & 4 & 5 & 6 & 7 & 8 & 9 & 10 \\ 
  \hline \hline
10&Grank & 2.71 & 7.02 & 10.19 & 10.85 & 11.01 & 11.05 & 11.10 & 11.13 & 11.15 & 11.17 \\ 
  10&Spearman & 3.15 & 6.36 & 8.07 & 8.43 & 8.54 & 8.56 & 8.59 & 8.61 & 8.62 & 8.63 \\ 
  10&GKnpd & 3.86 & 6.78 & 7.31 & 7.24 & 7.63 & 9.12 & 9.35 & 10.35 & 10.53 & 11.07 \\ 
  10&Cov & 2.48 & 6.83 & 13.89 & 23.65 & 36.13 & 51.34 & 69.30 & 90.00 & 113.44 & 139.64 \\ 
  10&caseMCD & 2.41 & 5.53 & 12.47 & 23.06 & 33.57 & 50.53 & 76.77 & 102.79 & 134.07 & 168.88 \\ 
  10&2SGS & 2.45 & 6.71 & 11.60 & 0.20 & 0.20 & 0.19 & 0.19 & 0.19 & 0.19 & 0.19 \\ 
  10&DI & 2.47 & 4.21 & 0.73 & 0.66 & 0.60 & 0.64 & 0.63 & 0.64 & 0.63 & 0.71 \\ 
  10&cellMCD & 2.14 & 5.97 & 0.94 & 0.29 & 0.30 & 0.29 & 0.29 & 0.29 & 0.29 & 0.29 \\ 
  \hline
  20&Grank & 2.30 & 7.15 & 10.89 & 11.88 & 12.23 & 12.43 & 12.56 & 12.64 & 12.72 & 12.76 \\ 
  20&Spearman & 2.91 & 6.24 & 7.90 & 8.40 & 8.61 & 8.73 & 8.81 & 8.86 & 8.91 & 8.94 \\ 
  20&GKnpd & 3.24 & 5.98 & 5.89 & 3.87 & 7.07 & 10.71 & 13.10 & 14.50 & 14.80 & 15.55 \\ 
  20&Cov & 2.24 & 6.82 & 14.49 & 25.20 & 38.96 & 55.76 & 75.60 & 98.50 & 124.44 & 153.42 \\ 
  20&caseMCD & 2.15 & 6.94 & 14.95 & 28.32 & 48.00 & 68.38 & 99.64 & 130.81 & 152.51 & 186.62 \\ 
  20&2SGS & 2.32 & 6.87 & 8.49 & 0.53 & 0.31 & 0.20 & 0.16 & 0.14 & 0.12 & 0.12 \\ 
  20&DI & 1.79 & 1.54 & 0.29 & 0.28 & 0.25 & 0.27 & 0.27 & 0.26 & 0.26 & 0.27 \\ 
  20&cellMCD & 1.75 & 3.46 & 0.21 & 0.18 & 0.18 & 0.18 & 0.18 & 0.20 & 0.19 & 0.19 \\ 
  \hline
  40&Grank & 3.51 & 11.66 & 17.98 & 20.23 & 21.30 & 21.94 & 22.36 & 22.67 & 22.91 & 23.09 \\ 
  40&Spearman & 4.26 & 9.37 & 12.09 & 13.24 & 13.85 & 14.23 & 14.49 & 14.68 & 14.83 & 14.94 \\ 
  40&GKnpd & 4.69 & 8.59 & 8.29 & 5.31 & 5.00 & 10.67 & 15.81 & 18.57 & 19.58 & 21.56 \\ 
  40&Cov & 3.41 & 11.31 & 24.74 & 43.60 & 67.86 & 97.53 & 132.60 & 173.07 & 218.94 & 270.21 \\ 
  40&caseMCD & 3.24 & 11.07 & 24.44 & 43.20 & 68.25 & 101.38 & 143.31 & 176.56 & 242.36 & 311.15 \\ 
  40&2SGS & 3.54 & 11.22 & 7.07 & 2.49 & 1.60 & 1.13 & 0.89 & 0.73 & 0.59 & 0.50 \\ 
  40&DI & 2.20 & 0.93 & 0.50 & 0.45 & 0.43 & 0.40 & 0.40 & 0.42 & 0.39 & 0.39 \\ 
  40&cellMCD & 2.28 & 1.10 & 0.31 & 0.36 & 0.35 & 0.35 & 0.38 & 0.36 & 0.34 & 0.39 \\ 
\hline
\end{tabular}
\caption{Standard deviations of the discrepancy for $\Sigma = \Sigma_{\mbox{ALYZ}}$}
\label{cellMCD_sds_n_corrTypeALYZ_eps10}
\end{table}

\begin{table}[ht]
\footnotesize
\centering
\begin{tabular}{||ll||rrrrrrrrrr||}
  \hline
  &&  &  &  &  &  $\gamma$ &  &  &  &  & \\
  \hline
 d&method& 1 & 2 & 3 & 4 & 5 & 6 & 7 & 8 & 9 & 10 \\ 
  \hline \hline
10&Grank & 1.22 & 2.60 & 4.10 & 4.93 & 5.62 & 6.20 & 6.64 & 7.00 & 7.17 & 7.27 \\ 
 10& Spearman & 1.20 & 2.36 & 3.34 & 3.93 & 4.37 & 4.75 & 5.00 & 5.19 & 5.28 & 5.34 \\ 
  10&GKnpd & 2.58 & 1.83 & 4.81 & 7.70 & 7.76 & 8.12 & 7.44 & 6.92 & 7.69 & 6.88 \\ 
  10&Cov & 1.19 & 2.34 & 3.99 & 6.13 & 8.79 & 12.00 & 15.76 & 20.10 & 25.00 & 30.48 \\ 
  10&caseMCD & 0.77 & 2.42 & 7.19 & 12.35 & 19.00 & 24.31 & 32.86 & 43.60 & 56.26 & 71.06 \\ 
  10&2SGS & 0.62 & 0.86 & 0.30 & 0.23 & 0.18 & 0.18 & 0.17 & 0.18 & 0.18 & 0.17 \\ 
  10&DI & 0.36 & 0.34 & 0.39 & 0.42 & 0.43 & 0.42 & 0.41 & 0.42 & 0.47 & 0.43 \\ 
  10&cellMCD & 0.30 & 0.30 & 0.31 & 0.32 & 0.31 & 0.30 & 0.37 & 0.35 & 0.39 & 0.40 \\ 
 \hhline{|=|=|=|=|=|=|=|=|=|=|=|=|}
  20&Grank & 0.85 & 1.67 & 2.83 & 3.61 & 4.23 & 4.75 & 5.14 & 5.36 & 5.51 & 5.61 \\ 
  20&Spearman & 0.79 & 1.42 & 2.01 & 2.52 & 2.91 & 3.22 & 3.43 & 3.56 & 3.65 & 3.72 \\ 
  20&GKnpd & 0.72 & 1.17 & 1.57 & 1.72 & 6.78 & 9.10 & 8.58 & 8.39 & 8.94 & 8.13 \\ 
  20&Cov & 0.84 & 1.46 & 2.44 & 3.79 & 5.53 & 7.65 & 10.15 & 13.04 & 16.30 & 19.95 \\ 
  20&caseMCD & 0.48 & 1.46 & 4.00 & 7.01 & 10.57 & 15.09 & 20.43 & 26.51 & 36.33 & 44.09 \\ 
  20&2SGS & 0.50 & 1.38 & 1.03 & 0.51 & 0.20 & 0.13 & 0.11 & 0.11 & 0.10 & 0.10 \\ 
  20&DI & 0.19 & 0.18 & 0.18 & 0.15 & 0.16 & 0.18 & 0.18 & 0.18 & 0.17 & 0.20 \\ 
  20&cellMCD & 0.13 & 0.15 & 0.14 & 0.15 & 0.15 & 0.16 & 0.17 & 0.17 & 0.16 & 0.15 \\ 
  \hhline{|=|=|=|=|=|=|=|=|=|=|=|=|}
  40&Grank & 0.79 & 1.27 & 2.23 & 3.05 & 3.66 & 4.07 & 4.34 & 4.53 & 4.66 & 4.74 \\ 
  40&Spearman & 0.68 & 1.14 & 1.68 & 2.14 & 2.45 & 2.66 & 2.79 & 2.89 & 2.96 & 3.01 \\ 
  40&GKnpd & 0.53 & 0.91 & 1.34 & 1.56 & 1.31 & 8.14 & 9.18 & 10.61 & 11.60 & 10.15 \\ 
  40&Cov & 0.77 & 1.11 & 1.73 & 2.68 & 3.97 & 5.61 & 7.57 & 9.86 & 12.47 & 15.40 \\ 
  40&caseMCD & 0.51 & 0.96 & 2.10 & 5.11 & 7.19 & 10.37 & 13.51 & 17.46 & 21.48 & 25.84 \\ 
  40&2SGS & 0.63 & 1.31 & 2.47 & 1.95 & 0.77 & 0.42 & 0.32 & 0.29 & 0.26 & 0.28 \\ 
  40&DI & 0.37 & 0.28 & 0.29 & 0.24 & 0.25 & 0.24 & 0.25 & 0.27 & 0.28 & 0.26 \\ 
  40&cellMCD & 0.14 & 0.12 & 0.13 & 0.13 & 0.14 & 0.15 & 0.15 & 0.16 & 0.15 & 0.15 \\ 
\hline \hline
\end{tabular}
\caption{Standard deviations of the discrepancy for $\Sigma = \Sigma_{\mbox{A09}}$} 
\label{cellMCD_sds_n_corrTypeA09_eps10}
\end{table}

\clearpage
\normalsize

In these tables we note that the standard deviations
differ a lot by method and value of $\gamma$.
Of course, the same is true for the averages as well.
Figures~\ref{fig:SNR_lowdim} and~\ref{fig:SNR_highdim}
below plot the signal to noise ratio of the discrepancy,
that is, their average divided by their standard deviation.
The roughly horizontal nature of these curves indicates
that the variability and the average typically grow 
together.\\

\begin{figure}[!ht]
\centering
\vskip0.3cm
\begin{minipage}{0.49\linewidth}
  \centering 
    \textbf{ALYZ model, 10\% outliers, $\bm{d = 10}$}
	\includegraphics[width=0.9\textwidth]
	  {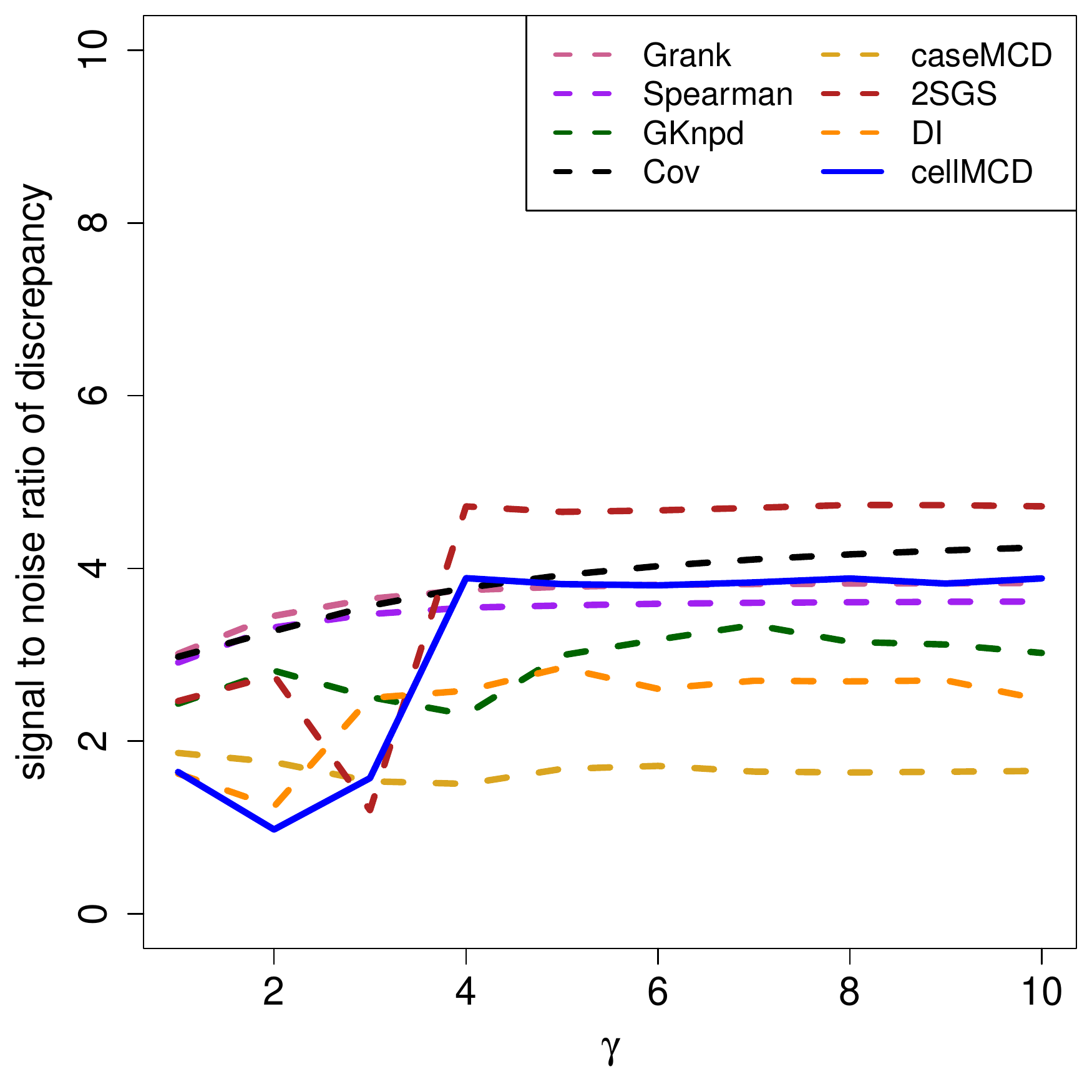} 
\end{minipage}
\begin{minipage}{0.49\linewidth}
  \centering
	  \textbf{A09 model, 10\% outliers, $\bm{d = 10}$}
  \includegraphics[width=0.9\textwidth]
	  {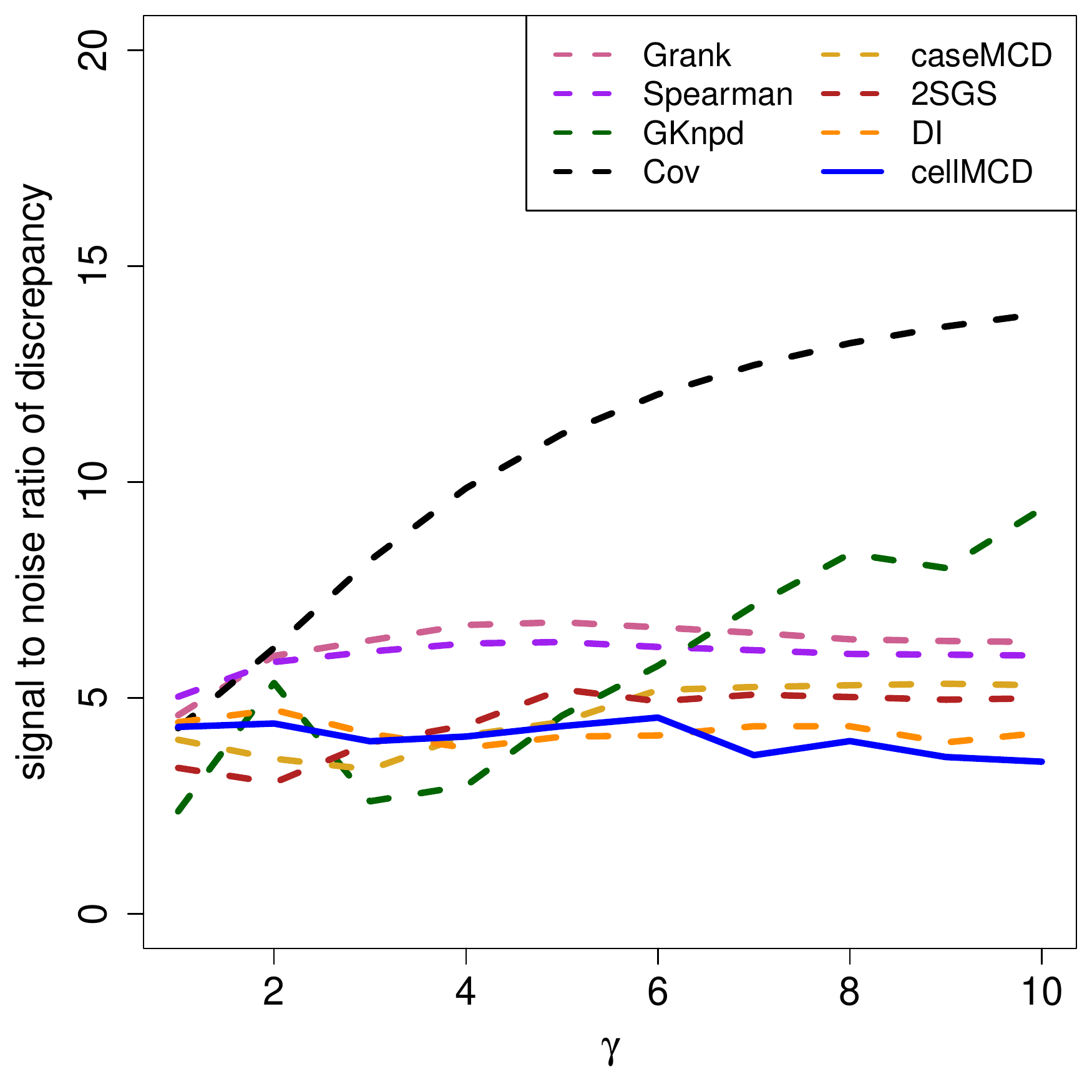} 
\end{minipage}
\vskip-0.2cm
\caption{Signal to noise ratio of the discrepancy of 
 estimated covariance matrices for $d = 10$ and 
 $n = 100$.}
\label{fig:SNR_lowdim}
\end{figure}

\begin{figure}[!ht]
\centering
\vskip0.2cm
\begin{minipage}{0.49\linewidth}
\centering 
    \textbf{ALYZ model, 10\% outliers, $\bm{d = 20}$}
	\includegraphics[width=0.9\textwidth]
	  {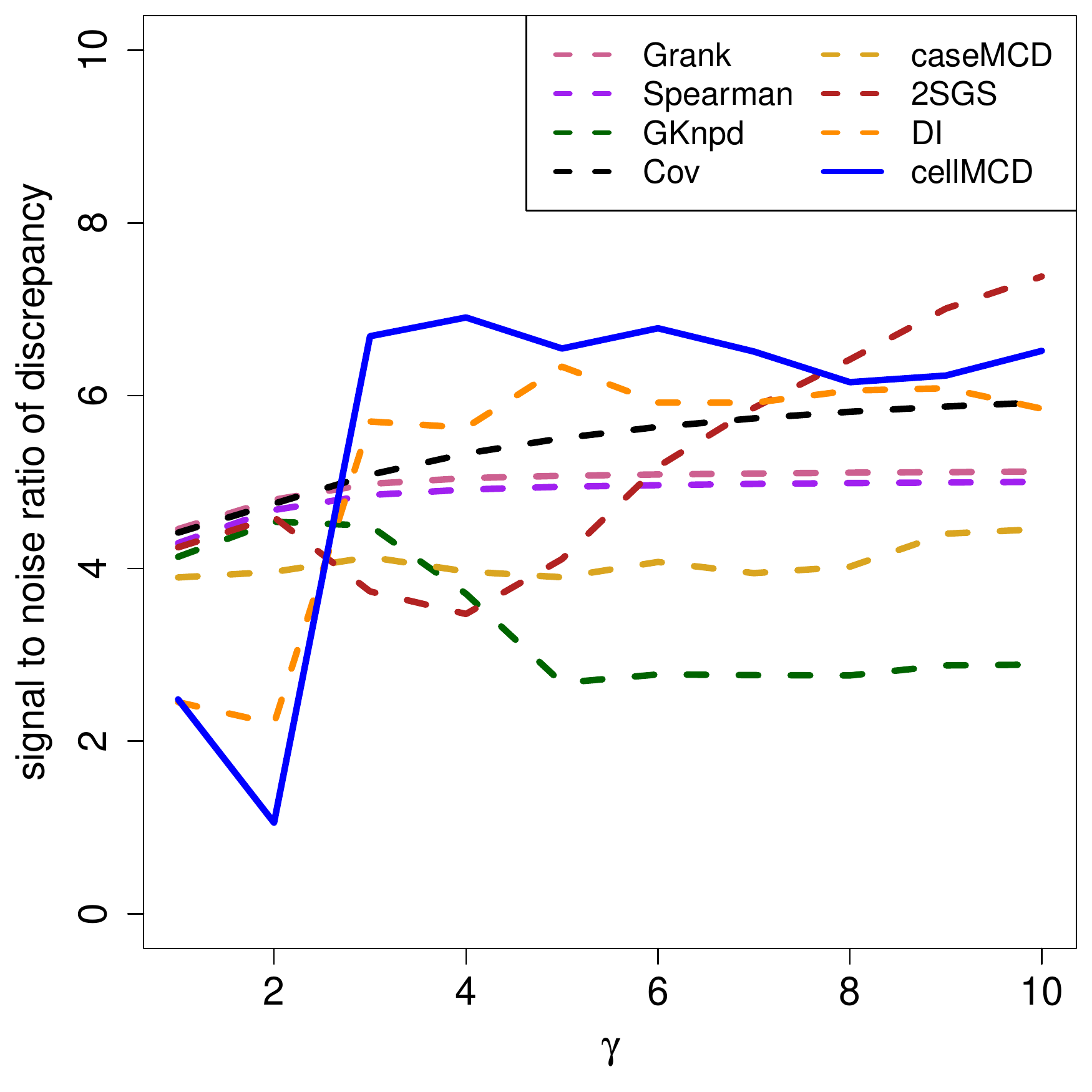} 
\end{minipage}
\begin{minipage}{0.49\linewidth}
  \centering
	  \textbf{A09 model, 10\% outliers, $\bm{d = 20}$}
  \includegraphics[width=0.9\textwidth]
	  {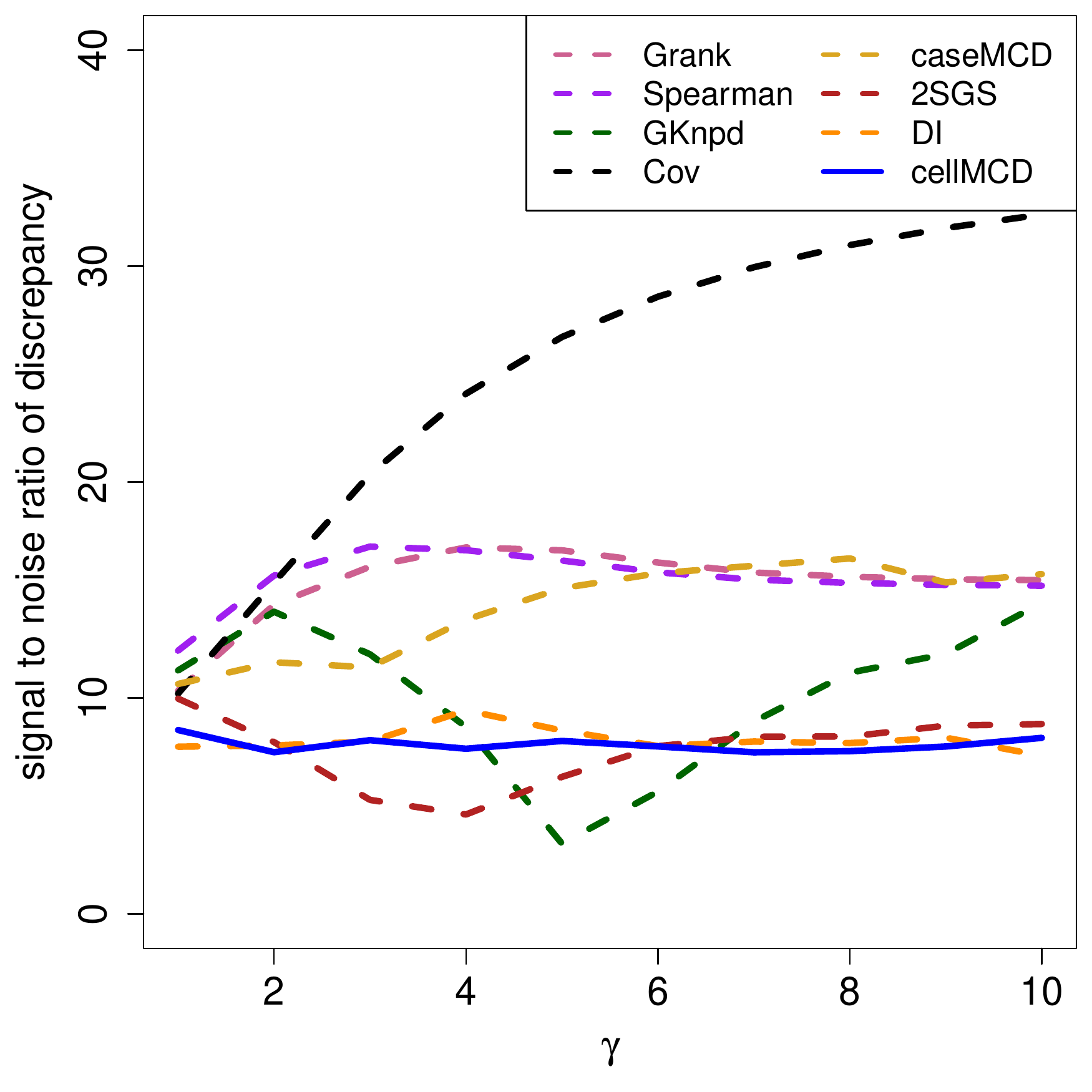} 
\end{minipage}
\vskip0.3cm
\begin{minipage}{0.49\linewidth}
\centering 
    \textbf{ALYZ model, 10\% outliers, $\bm{d = 40}$}
	\includegraphics[width=0.9\textwidth]
	  {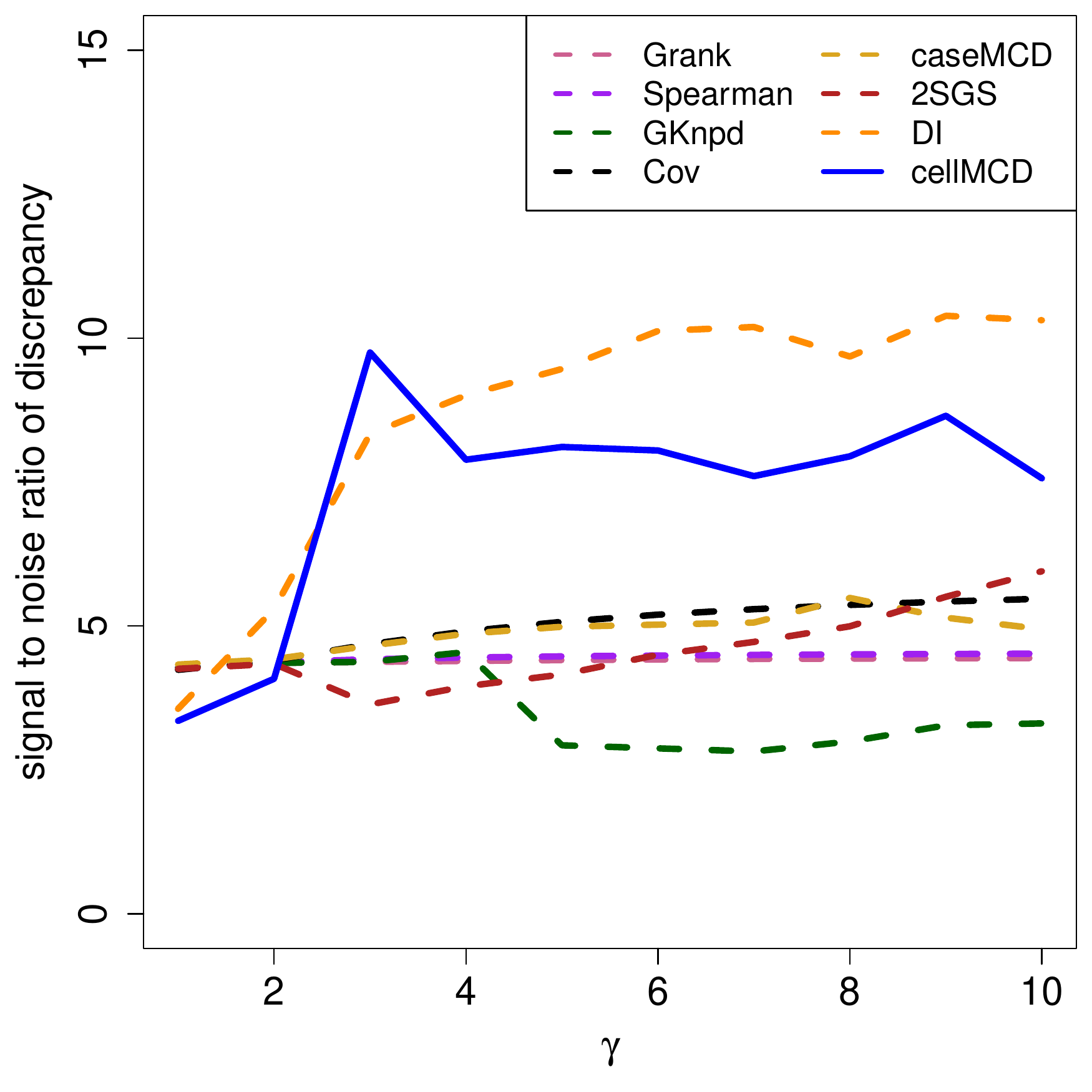} 
\end{minipage}
\begin{minipage}{0.49\linewidth}
  \centering
	  \textbf{A09 model, 10\% outliers, $\bm{d = 40}$}
  \includegraphics[width=0.9\textwidth]
	  {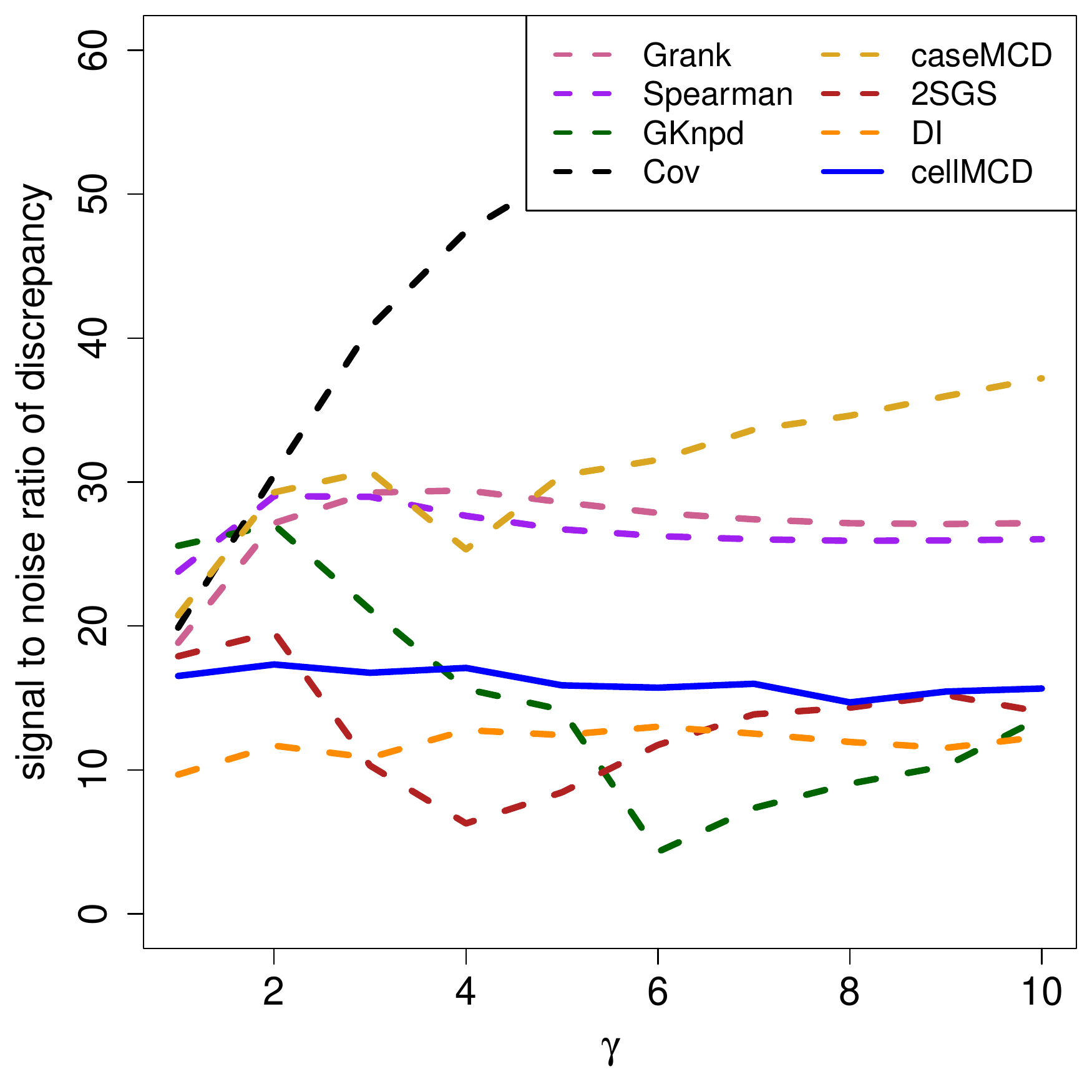} 
\end{minipage}
\vskip-0.2cm
\caption{Signal to noise ratio of the discrepancy 
  of estimated covariance
  matrices for $d = 20$ and $n = 400$ (top panels)
	and for $d = 40$ and $n = 800$ (bottom panels).}
\label{fig:SNR_highdim}
\end{figure}

\clearpage
\subsection*{A.6.2\;\;Simulation on the ordering of the variables}
\label{A:ordering}

In section~\ref{sec:algo}, part (a) of the C-step 
updates the matrix $\bW$ in \eqref{eq:cellMCD} 
while keeping $\bhmu^{(k)}$ and $\bhSigma^{(k)}$ 
unchanged. We start the new pattern $\BtW$ 
as $\BtW = \bW^{(k)}$, and then we
modify $\BtW$ column by column, by
cycling over the variables $j=1,\ldots,d$.
A referee asked to check the effect of the order 
in which the variables are updated on the result
of the algorithm, in the simulations with A09 
in section~\ref{sec:simul}.

In order to study this we ran an experiment in which
different orderings of the variables (columns) were 
tried, while keeping the remainder of the algorithm 
unchanged. We considered 5 options:
\begin{itemize}
\item \textbf{original}: the variables are updated 
  in the original ordering of the data (1 to $d$).
\item \textbf{T descending}: the variables are 
  updated in the order of descending tail weight, as 
  measured by $\sum_{i=1}^{n} |X_{ij}|$.
\item \textbf{T ascending}: the variables are 
  updated in the order of ascending tail weight, 
  measured in the same way.
\item \textbf{W descending}: the variables are 
  updated in the order of descending number of 
  unflagged cells, as measured 
  by $\sum_{i=1}^{n} W_{ij}$\,.
\item \textbf{W ascending}: the variables are 
  updated in the order of ascending number of 
  unflagged cells, measured in the same way.
\end{itemize}
For each version we carried out 100 replications,
and computed the averaged MSE of the estimated center, 
the KLdiv of the estimated covariance matrix, 
and the value of the objective function. This
yielded the tables below.
Based on these results, the choice of the ordering 
appears to have only a tiny effect.

\begin{table}[ht]
\centering
\begin{tabular}{cccc}
  \hline
 & MSE($\hmu$) & KLdiv($\hSigma$) & objective \\ 
  \hline
  original & 0.01000 & 1.297 & 1285.94 \\ 
  T descending & 0.00985 & 1.289 & 1286.28 \\ 
  T ascending & 0.00993 & 1.291 & 1286.93 \\ 
  W descending & 0.01002 & 1.298 & 1286.56 \\ 
  W ascending & 0.00990 & 1.281 & 1286.19 \\ 
   \hline
\end{tabular}
\caption{$d=10$, $n=100$, $\Sigma = \Sigma_{\mbox{A09}}$ 
  with $\varepsilon = 0$.} 
\vspace{1cm}  
\label{tab_variableordering_n100_p10_A09_eps0_gamma0}
\end{table}

\begin{table}[ht]
\centering
\begin{tabular}{cccc}
  \hline
 & MSE($\hmu$) & KLdiv($\hSigma$) & objective \\ 
  \hline
  original & 0.01046 & 1.301 & 1614.71 \\ 
  T descending & 0.01053 & 1.300 & 1614.88 \\ 
  T ascending & 0.01052 & 1.301 & 1614.79 \\ 
  W descending & 0.01063 & 1.301 & 1616.70 \\ 
  W ascending & 0.01039 & 1.292 & 1615.02 \\ 
   \hline
\end{tabular}
\caption{$d=10$, $n=100$, $\Sigma = \Sigma_{\mbox{A09}}$ 
  with $\varepsilon = 0.1$ and $\gamma = 4$.} 
\vspace{1cm}    
\label{tab_variableordering_n100_p10_A09_eps0.1_gamma4}
\end{table}

\begin{table}[ht]
\centering
\begin{tabular}{cccc}
  \hline
 & MSE($\hmu$) & KLdiv($\hSigma$) & objective\\ 
  \hline
  original & 0.01185 & 2.749 & 2034.36 \\ 
  T descending & 0.01179 & 2.707 & 2033.03 \\ 
  T ascending & 0.01182 & 2.740 & 2032.88 \\ 
  W descending & 0.01201 & 2.838 & 2036.42 \\ 
  W ascending & 0.01179 & 2.749 & 2034.73 \\ 
   \hline
\end{tabular}
\caption{$d=10$, $n=100$, $\Sigma = \Sigma_{\mbox{A09}}$ 
  with $\varepsilon = 0.2$ and $\gamma = 4$.}
\vspace{1cm}  
\label{tab_variableordering_n100_p10_A09_eps0.2_gamma4}
\end{table}

\clearpage
\subsection*{A.6.3\;\; Simulations on the number of W-steps and EM-steps}
\label{A:WandEM}

The concentration step (C-step) of the cellMCD
algorithm in section~\ref{sec:algo} consists
of two parts. Part (a) updates the matrix 
$\bW$ in \eqref{eq:cellMCD} while keeping 
$\bhmu$ and $\bhSigma$ as they are. Let us
call this a W-step. Part (b) uses the new
`missingness' pattern $\bW$ and carries out
an EM-step to update $\bhmu$ and $\bhSigma$.
So each C-step contains a single W-step and 
a single EM-step.

A referee asked what would be the effect of
increasing the number of W-steps and/or
EM-steps. We studied this by considering 5 
settings. The original setting is denoted
as 1W+1EM, so a single W-step and a single
EM-step. The other four settings are, with
similar notation, 5W+1EM, 1W+5EM, 5W+5EM,
and 10W+10EM. The convergence criterion
and everything else in the algorithm was
left unchanged.
We ran 100 replications of the algorithm
versions, for the following combinations
of choices. The dimension $d$ is either
10 (with $n=100$), 20 (with $n=400$), or
40 (with $n=800$), and the true $\Sigma$
is either A09 or ALYZ. The contamination 
fraction $\varepsilon$ is 0, 0.1, or 0.2\,.
And finally, the position of the cellwise 
outliers is given by $\gamma$ equal to
4 or 10.

As expected, increasing the number of
W-steps and/or EM-steps increases the
overall computation time. This is seen
in Table~\ref{tab:algocomptimes} which
provides the averaged computation time
over all settings in each of the three
dimensions. Note that W-steps are more
expensive than EM-steps, due to frequent
evaluations of the objective function.
The table shows that increasing the
number of steps is costly, especially
in the higher dimensions.

\begin{table}[ht]
\centering
\caption{Average computation times (in seconds) 
of algorithm versions.} 
\label{tab:algocomptimes}
\vspace{0.3cm}
\begin{tabular}{llll}
  \hline
 version & d=10 & d=20 & d=40 \\ 
  \hline
  1W + 1EM & 0.42 & 2.72 & 26.24 \\ 
  5W + 1EM & 0.83 & 10.02 & 118.78 \\ 
  1W + 5EM & 0.61 & 4.31 & 32.32 \\ 
  5W + 5EM & 1.13 & 11.74 & 125.77 \\ 
  10W + 10EM & 2.20 & 23.16 & 250.44 \\ 
   \hline
\end{tabular}
\end{table}

The main 
question is of course whether
the faster 1W+1EM version pays a price in
estimation accuracy. The next three 
tables say that it does not, as 
the differences in MSE($\hmu$), 
KLdiv($\hSigma$) and the objective function
were tiny. Also, the effect is not
systematic: more computation time does not
necessarily yield a lower MSE($\hmu$), 
KLdiv($\hSigma$), or objective. 
 
\begin{table}[ht]
\centering
\begin{tabular}{||ccc||ccc||ccc||}
\hline 
 & & & & A09 & & & ALYZ & \\
\hline 
$\varepsilon$ & $\gamma$ & method & MSE($\hmu$) & KLdiv($\hSigma$) & 
  objective & MSE($\hmu$) & KLdiv($\hSigma$) & objective\\ 
  \hline \hline
  0 & -- & 1W + 1EM & 0.01001 & 1.228 & 1289.27 & 0.00998 & 0.846 & 2286.53 \\ 
  0 & -- & 5W + 1EM & 0.00998 & 1.243 & 1289.66 & 0.01000 & 0.844 & 2286.49 \\ 
  0 & -- & 1W + 5EM & 0.00998 & 1.233 & 1289.19 & 0.01000 & 0.846 & 2286.53 \\ 
  0 & -- & 5W + 5EM & 0.00998 & 1.244 & 1289.55 & 0.01002 & 0.845 & 2286.41 \\ 
  0 & -- & 10W + 10EM & 0.00998 & 1.244 & 1289.55 & 0.01002 & 0.845 & 2286.41 \\ \hhline{|=|=|=|=|=|=|=|=|=|}
  0.1 & 4 & 1W + 1EM & 0.01050 & 1.323 & 1613.00 & 0.01139 & 1.141 & 2789.84 \\ 
  0.1 & 4 & 5W + 1EM & 0.01046 & 1.303 & 1612.90 & 0.01132 & 1.138 & 2789.59 \\ 
  0.1 & 4 & 1W + 5EM & 0.01050 & 1.300 & 1613.27 & 0.01137 & 1.136 & 2790.15 \\ 
  0.1 & 4 & 5W + 5EM & 0.01040 & 1.293 & 1612.79 & 0.01134 & 1.138 & 2789.58 \\ 
  0.1 & 4 & 10W + 10EM & 0.0104 & 1.293 & 1612.79 & 0.01134 & 1.138 & 2789.58 \\  \hline
  0.1 & 10 & 1W + 1EM & 0.01043 & 1.418 & 1714.75 & 0.01094 & 1.118 & 2828.22 \\ 
  0.1 & 10 & 5W + 1EM & 0.01043 & 1.416 & 1714.92 & 0.01092 & 1.121 & 2828.31 \\ 
  0.1 & 10 & 1W + 5EM & 0.01038 & 1.421 & 1714.67 & 0.01096 & 1.121 & 2828.22 \\ 
  0.1 & 10 & 5W + 5EM & 0.01041 & 1.418 & 1714.89 & 0.01092 & 1.121 & 2828.32 \\ 
  0.1 & 10 & 10W + 10EM & 0.01041 & 1.418 & 1714.89 & 0.01092 & 1.121 & 2828.32 \\ \hhline{|=|=|=|=|=|=|=|=|=|}
  0.2 & 4 & 1W + 1EM & 0.01180 & 2.710 & 2034.65 & 0.01568 & 3.473 & 3119.95 \\ 
  0.2 & 4 & 5W + 1EM & 0.01181 & 2.748 & 2034.04 & 0.01559 & 3.359 & 3119.06 \\ 
  0.2 & 4 & 1W + 5EM & 0.01179 & 2.745 & 2036.07 & 0.01549 & 3.455 & 3118.36 \\ 
  0.2 & 4 & 5W + 5EM & 0.01184 & 2.768 & 2035.55 & 0.01571 & 3.369 & 3119.30 \\ 
  0.2 & 4 & 10W + 10EM & 0.01184 & 2.768 & 2035.55 & 0.01571 & 3.369 & 3119.30 \\  \hline
  0.2 & 10 & 1W + 1EM & 0.01109 & 1.795 & 2109.86 & 0.01223 & 2.009 & 3324.95 \\ 
  0.2 & 10 & 5W + 1EM & 0.01099 & 1.794 & 2109.43 & 0.01224 & 2.015 & 3324.98 \\ 
  0.2 & 10 & 1W + 5EM & 0.01104 & 1.774 & 2110.05 & 0.01222 & 2.025 & 3325.00 \\ 
  0.2 & 10 & 5W + 5EM & 0.01096 & 1.780 & 2109.32 & 0.01226 & 2.025 & 3325.05 \\ 
  0.2 & 10 & 10W + 10EM & 0.01096 & 1.779 & 2109.33 & 0.01229 & 2.021 & 3325.09 \\
\hline \hline
\end{tabular}
\caption{Comparison of different algorithm versions 
on data with d=10 and n=100.} 
\label{tab:algoversions_d10}
\end{table}

\begin{table}[ht]
\centering
\begin{tabular}{||ccc||ccc||ccc||}
\hline 
 & & & & A09 & & & ALYZ & \\
\hline 
$\varepsilon$ & $\gamma$ & method & MSE($\hmu$) & KLdiv($\hSigma$) & 
  objective & MSE($\hmu$) & KLdiv($\hSigma$) & objective\\
  \hline \hline
  0 & -- & 1W + 1EM & 0.00242 & 1.151 & 9694.82 & 0.00247 & 0.830 & 19346.52 \\ 
  0 & -- & 5W + 1EM & 0.00242 & 1.154 & 9694.30 & 0.00246 & 0.831 & 19346.57 \\ 
  0 & -- & 1W + 5EM & 0.00243 & 1.153 & 9694.68 & 0.00247 & 0.830 & 19346.57 \\ 
  0 & -- & 5W + 5EM & 0.00243 & 1.155 & 9694.24 & 0.00246 & 0.829 & 19346.72 \\ 
  0 & -- & 10W + 10EM & 0.00243 & 1.155 & 9694.24 & 0.00246 & 0.829 & 19346.72 \\  \hhline{|=|=|=|=|=|=|=|=|=|}
  0.1 & 4 & 1W + 1EM & 0.00251 & 1.185 & 12451.15 & 0.00283 & 1.216 & 23240.19 \\ 
  0.1 & 4 & 5W + 1EM & 0.00252 & 1.186 & 12446.70 & 0.00283 & 1.211 & 23240.16 \\ 
  0.1 & 4 & 1W + 5EM & 0.00251 & 1.189 & 12450.67 & 0.00284 & 1.213 & 23240.50 \\ 
  0.1 & 4 & 5W + 5EM & 0.00251 & 1.187 & 12446.52 & 0.00283 & 1.215 & 23240.09 \\ 
  0.1 & 4 & 10W + 10EM & 0.00251 & 1.187 & 12446.51 & 0.00283 & 1.215 & 23240.09 \\ \hline
  0.1 & 10 & 1W + 1EM & 0.00248 & 1.256 & 13289.50 & 0.00267 & 1.214 & 23693.20 \\ 
  0.1 & 10 & 5W + 1EM & 0.00248 & 1.257 & 13288.34 & 0.00266 & 1.215 & 23693.08 \\ 
  0.1 & 10 & 1W + 5EM & 0.00248 & 1.259 & 13289.29 & 0.00267 & 1.214 & 23693.38 \\ 
  0.1 & 10 & 5W + 5EM & 0.00248 & 1.255 & 13288.59 & 0.00267 & 1.220 & 23692.84 \\ 
  0.1 & 10 & 10W + 10EM & 0.00248 & 1.255 & 13288.59 & 0.00267 & 1.219 & 23692.84 \\  \hhline{|=|=|=|=|=|=|=|=|=|}
  0.2 & 4 & 1W + 1EM & 0.00270 & 2.086 & 15592.22 & 0.00386 & 3.346 & 25257.25 \\ 
  0.2 & 4 & 5W + 1EM & 0.00270 & 2.102 & 15590.99 & 0.00385 & 3.375 & 25258.53 \\ 
  0.2 & 4 & 1W + 5EM & 0.00271 & 2.081 & 15592.68 & 0.00383 & 3.338 & 25258.51 \\ 
  0.2 & 4 & 5W + 5EM & 0.00271 & 2.092 & 15590.41 & 0.00386 & 3.354 & 25257.58 \\ 
  0.2 & 4 & 10W + 10EM & 0.00270 & 2.091 & 15590.34 & 0.00386 & 3.345 & 25257.52 \\ \hline
  0.2 & 10 & 1W + 1EM & 0.00260 & 1.593 & 16725.02 & 0.00321 & 1.952 & 27383.61 \\ 
  0.2 & 10 & 5W + 1EM & 0.00259 & 1.588 & 16722.75 & 0.00321 & 1.957 & 27383.51 \\ 
  0.2 & 10 & 1W + 5EM & 0.00259 & 1.589 & 16724.92 & 0.00321 & 1.953 & 27383.79 \\ 
  0.2 & 10 & 5W + 5EM & 0.00259 & 1.586 & 16723.12 & 0.00322 & 1.970 & 27383.19 \\ 
  0.2 & 10 & 10W + 10EM & 0.00259 & 1.586 & 16723.17 & 0.00322 & 1.969 & 27383.08 \\
   \hline
\end{tabular}
\caption{Comparison of different algorithm versions 
on data with d=20 and n = 400.} 
\label{tab:algoversions_d20}
\end{table}

\begin{table}[ht]
\centering
\begin{tabular}{||ccc||ccc||ccc||}
\hline 
 & & & & A09 & & & ALYZ & \\
\hline 
$\varepsilon$ & $\gamma$ & method & MSE($\hmu$) & KLdiv($\hSigma$) & 
  objective & MSE($\hmu$) & KLdiv($\hSigma$) & objective\\
  \hline \hline
  0 & -- & 1W + 1EM & 0.00122 & 2.184 & 37556.92 & 0.00126 & 1.937 & 79272.01 \\ 
  0 & -- & 5W + 1EM & 0.00122 & 2.188 & 37556.93 & 0.00126 & 1.931 & 79272.70 \\ 
  0 & -- & 1W + 5EM & 0.00122 & 2.186 & 37556.64 & 0.00126 & 1.939 & 79271.40 \\ 
  0 & -- & 5W + 5EM & 0.00122 & 2.188 & 37557.10 & 0.00126 & 1.931 & 79272.81 \\ 
  0 & -- & 10W + 10EM & 0.00122 & 2.188 & 37557.10 & 0.00126 & 1.931 & 79272.81 \\ \hhline{|=|=|=|=|=|=|=|=|=|}
  0.1 & 4 & 1W + 1EM & 0.00126 & 2.279 & 49674.77 & 0.00153 & 2.808 & 92440.95 \\ 
  0.1 & 4 & 5W + 1EM & 0.00126 & 2.273 & 49659.15 & 0.00153 & 2.801 & 92441.82 \\ 
  0.1 & 4 & 1W + 5EM & 0.00126 & 2.281 & 49673.60 & 0.00153 & 2.816 & 92441.39 \\ 
  0.1 & 4 & 5W + 5EM & 0.00126 & 2.275 & 49658.97 & 0.00153 & 2.809 & 92439.42 \\ 
  0.1 & 4 & 10W + 10EM & 0.00126 & 2.275 & 49658.97 & 0.00153 & 2.804 & 92440.26 \\ \hline
  0.1 & 10 & 1W + 1EM & 0.00125 & 2.325 & 51862.73 & 0.00142 & 2.943 & 95432.80 \\ 
  0.1 & 10 & 5W + 1EM & 0.00126 & 2.321 & 51857.96 & 0.00142 & 2.922 & 95434.68 \\ 
  0.1 & 10 & 1W + 5EM & 0.00125 & 2.329 & 51862.62 & 0.00141 & 2.933 & 95433.13 \\ 
  0.1 & 10 & 5W + 5EM & 0.00125 & 2.320 & 51858.69 & 0.00142 & 2.934 & 95433.04 \\ 
  0.1 & 10 & 10W + 10EM & 0.00125 & 2.320 & 51858.69 & 0.00142 & 2.936 & 95433.03 \\ \hhline{|=|=|=|=|=|=|=|=|=|}
  0.2 & 4 & 1W + 1EM & 0.00129 & 4.083 & 65008.14 & 0.00201 & 9.206 & 99283.76 \\ 
  0.2 & 4 & 5W + 1EM & 0.00130 & 4.121 & 65001.05 & 0.00201 & 9.311 & 99286.22 \\ 
  0.2 & 4 & 1W + 5EM & 0.00129 & 4.087 & 65007.01 & 0.00201 & 9.261 & 99286.68 \\ 
  0.2 & 4 & 5W + 5EM & 0.00130 & 4.119 & 64999.17 & 0.00202 & 9.297 & 99285.07 \\ 
  0.2 & 4 & 10W + 10EM & 0.00130 & 4.120 & 64999.05 & 0.00202 & 9.297 & 99285.13 \\ \hline
  0.2 & 10 & 1W + 1EM & 0.00127 & 3.515 & 67014.48 & 0.00170 & 4.313 & 108819.48 \\ 
  0.2 & 10 & 5W + 1EM & 0.00127 & 3.523 & 67009.90 & 0.00169 & 4.330 & 108819.88 \\ 
  0.2 & 10 & 1W + 5EM & 0.00126 & 3.521 & 67016.49 & 0.00170 & 4.338 & 108819.34 \\ 
  0.2 & 10 & 5W + 5EM & 0.00127 & 3.517 & 67009.13 & 0.00169 & 4.333 & 108819.46 \\ 
  0.2 & 10 & 10W + 10EM & 0.00127 & 3.517 & 67009.25 & 0.00169 & 4.321 & 108819.65 \\
   \hline
\end{tabular}
\caption{Comparison of different algorithm versions on data 
with d=40 and n = 800.} 
\label{tab:algoversions_d40}
\end{table}

\clearpage
\subsection*{A.6.4\;\;Results for other variations on the method}

Figures~\ref{fig:KLdiv_lowdim} and \ref{fig:KLdiv_highdim} 
in the paper show the Kullback-Leibler discrepancy of 
several covariance estimators in dimensions 10, 20, and 40.
Referees requested two more estimators to be considered:
\begin{itemize}
\item the initial estimator DDCW, which is fast as
    seen from its entries in Table~\ref{tab:comptimes} 
    in the paper;
\item cellMCD without the penalty term, that is, 
    setting all $q_j = 0$. We denote this as cellMCD\_q0\,.
\end{itemize}
Figure~\ref{fig:A_KLdiv_0p1} below plots both versions, 
together with the curve for cellMCD shown in 
Figures~\ref{fig:KLdiv_lowdim} and \ref{fig:KLdiv_highdim} 
in the paper.
We see that DDCW and cellMCD\_q0 do not explode in the sense 
that the effect of contaminated cells remains bounded,
which is what we want for our initial estimator DDCW. 
However, cellMCD\_q0 has a large discrepancy, because 
always taking out 25\% of the cells in each variable hurts
efficiency. DDCW does reasonably well by itself under A09,
but very poorly under ALYZ. Neither version can thus be 
considered a competitive alternative to cellMCD.

\begin{figure}[!ht]
\centering
\vskip0.3cm
\includegraphics[width=0.44\textwidth]{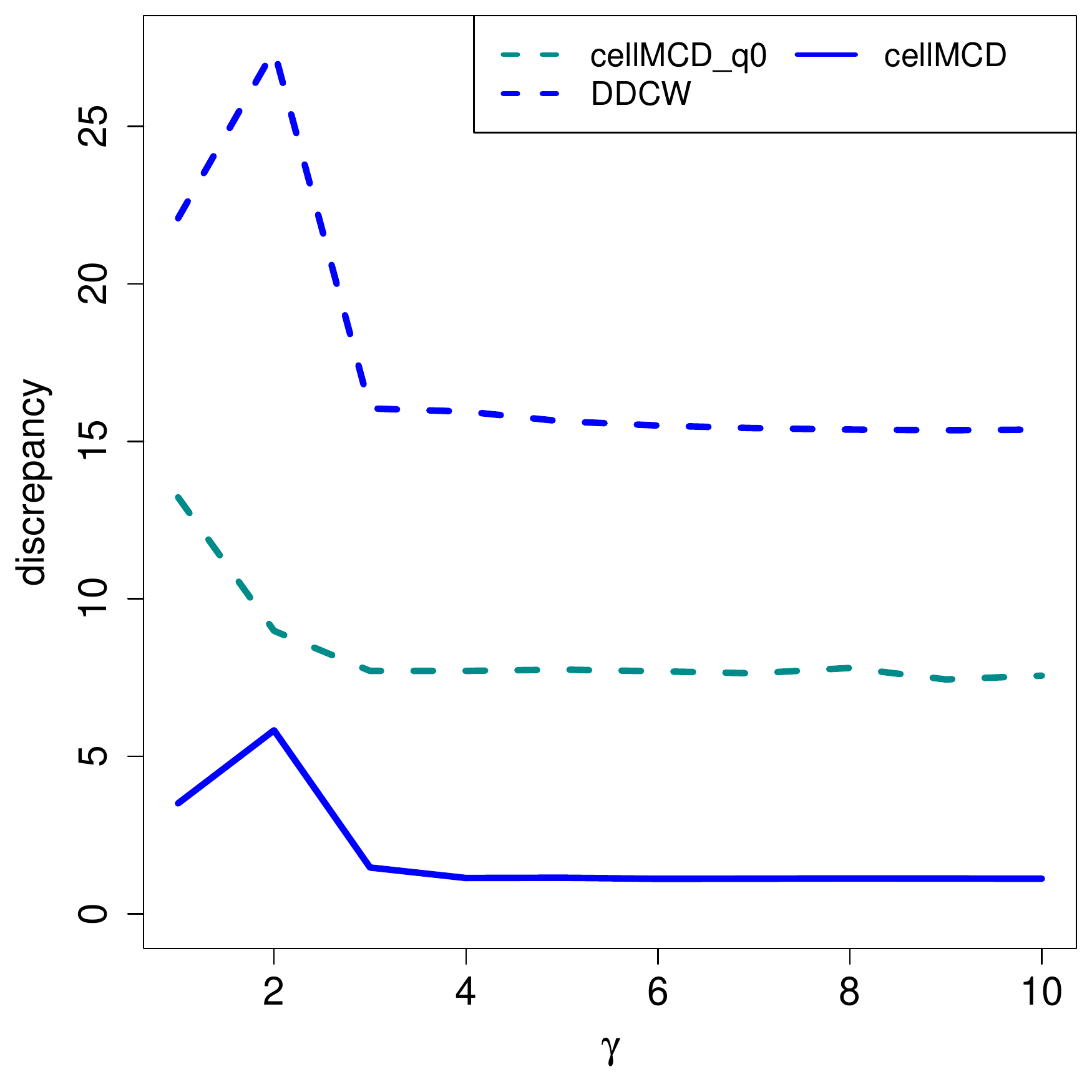}
\includegraphics[width=0.44\textwidth]{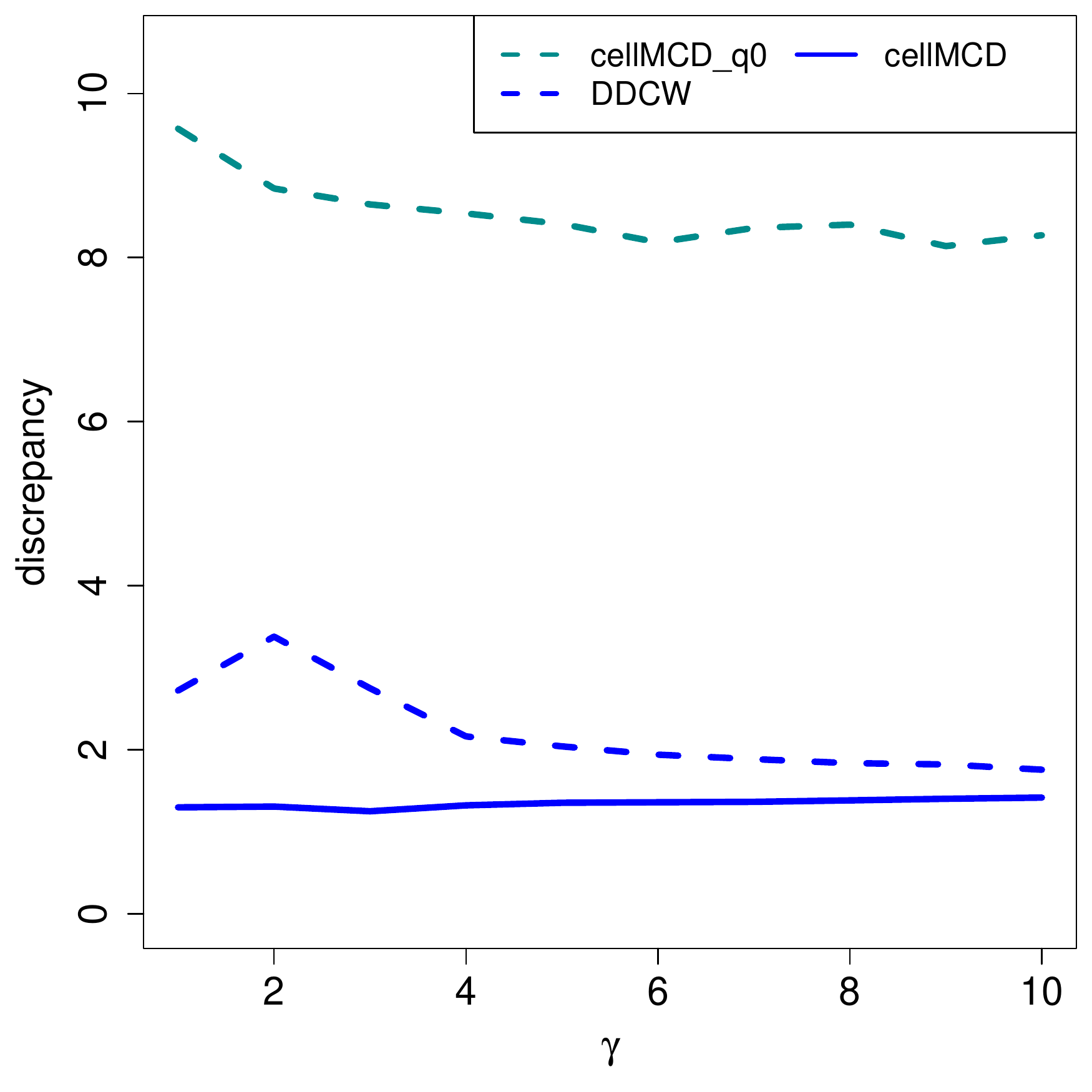}
\includegraphics[width=0.44\textwidth]{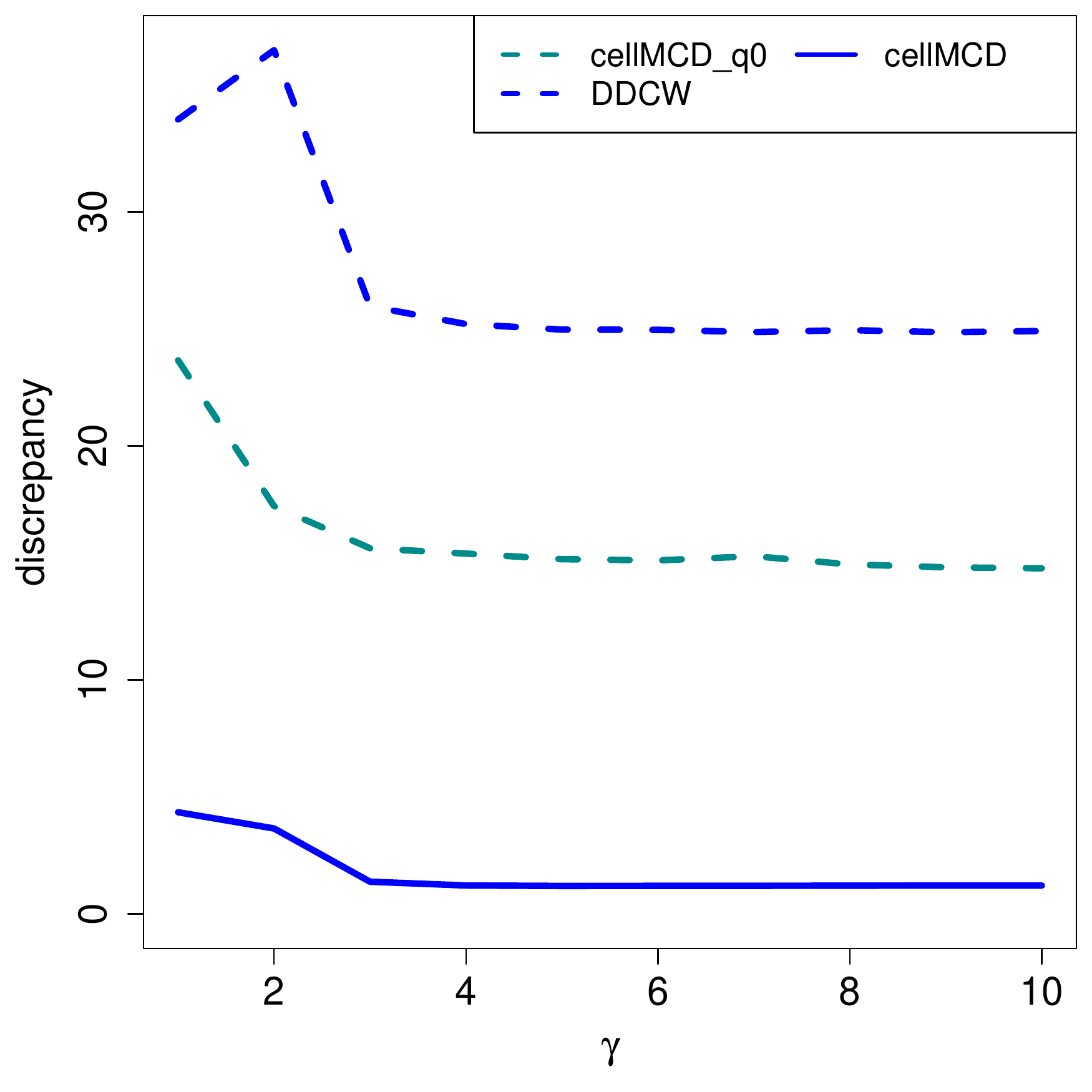}
\includegraphics[width=0.44\textwidth]{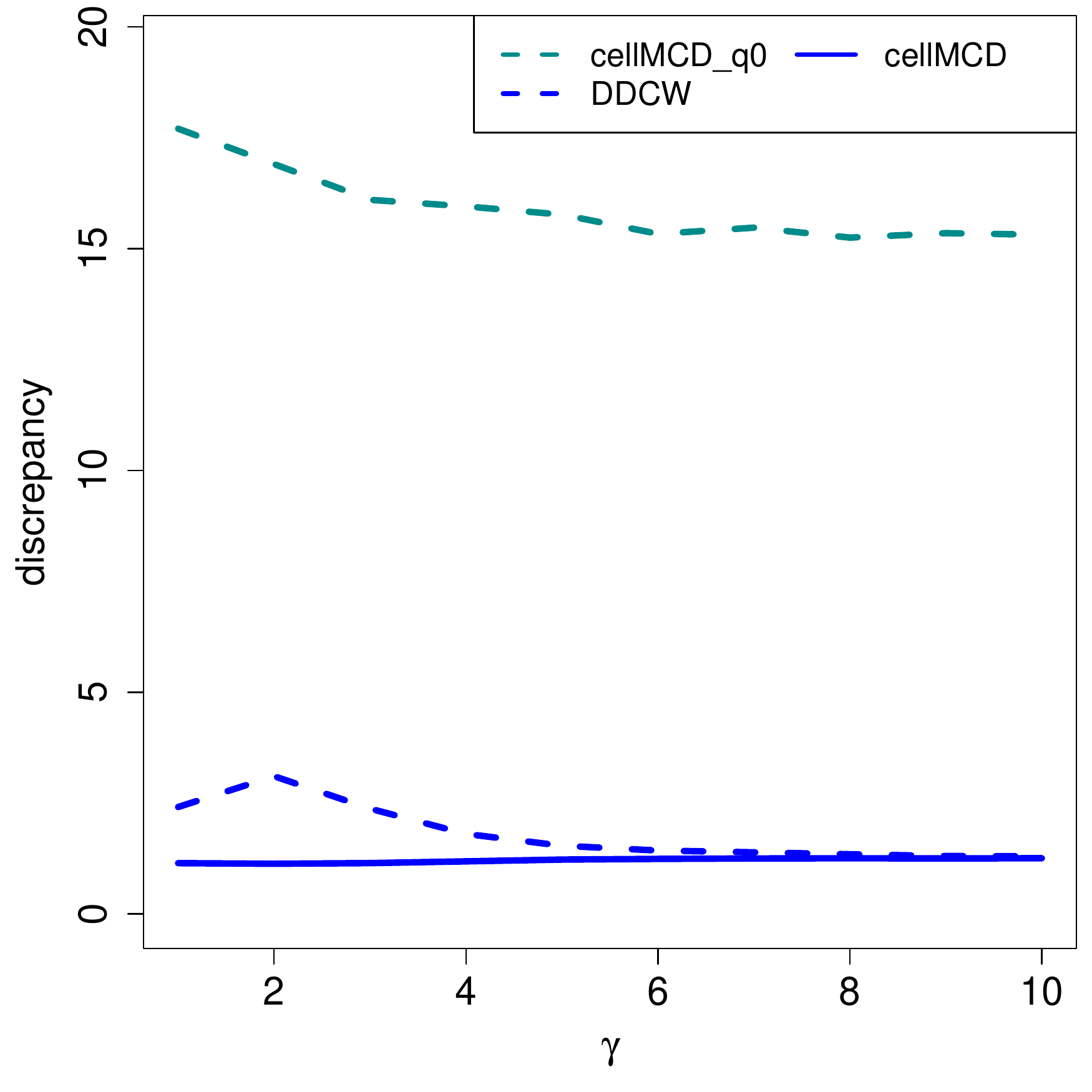}
\includegraphics[width=0.44\textwidth]{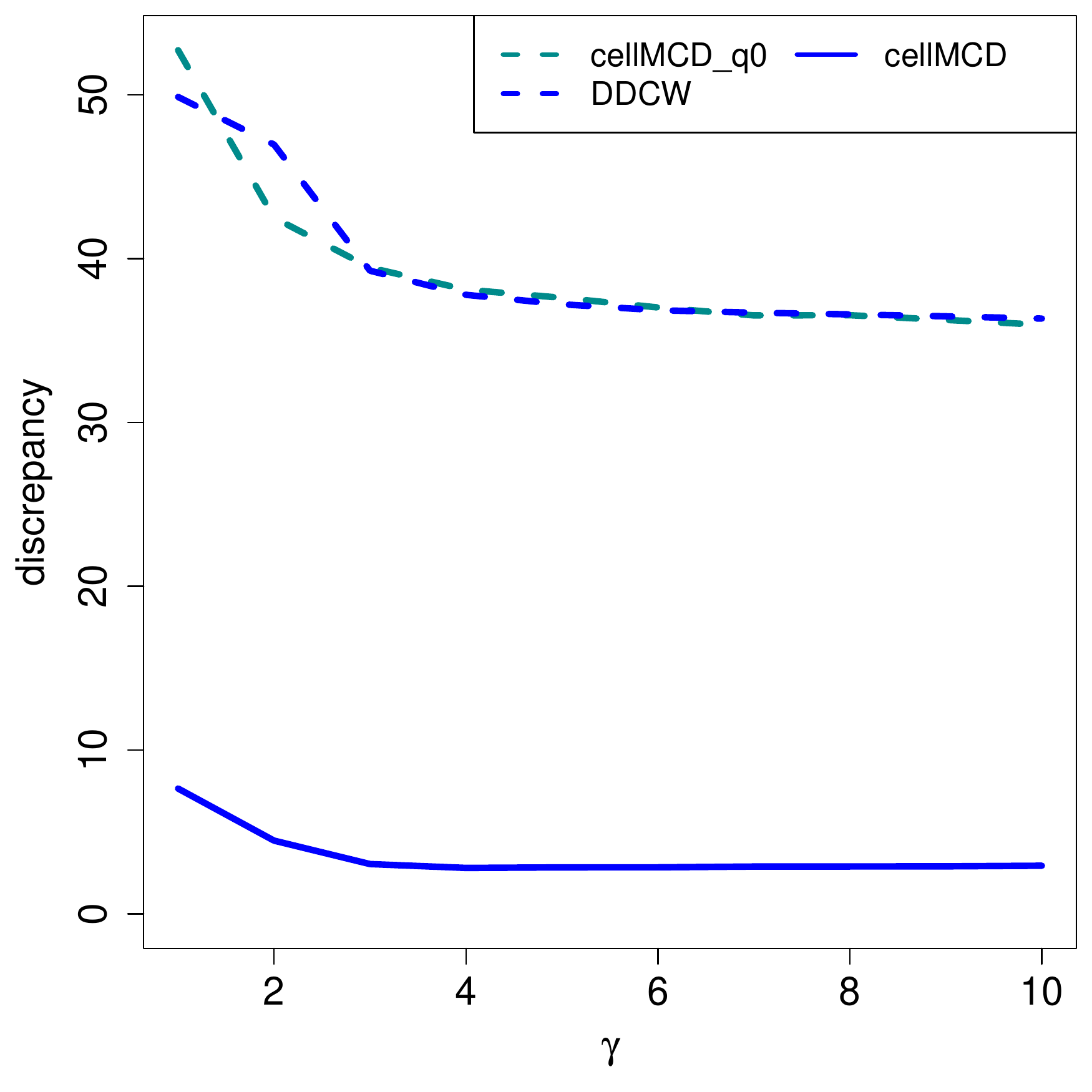}
\includegraphics[width=0.44\textwidth]{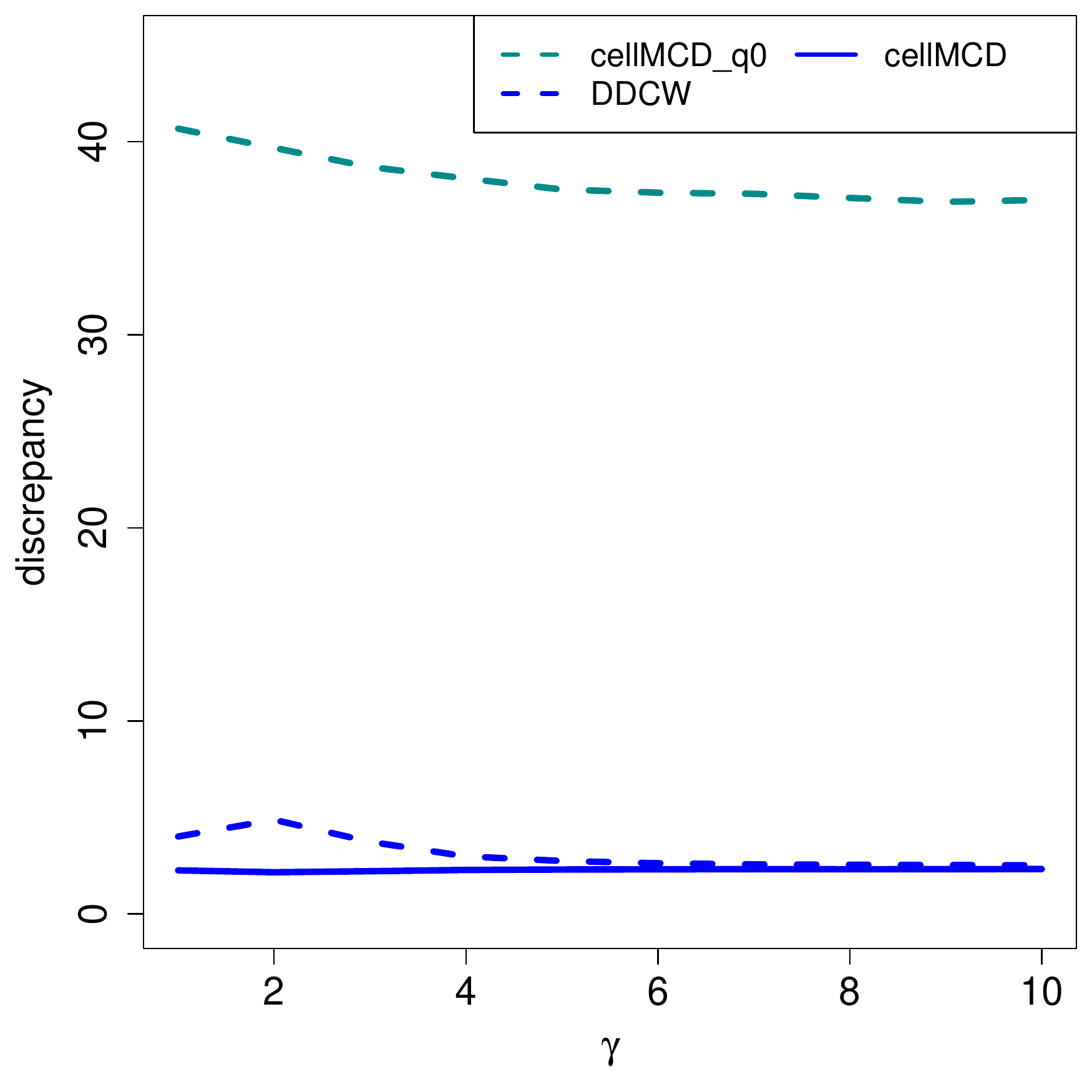}
\vskip-0.4cm
\caption{Kullback-Leibler discrepancy of estimated 
  covariance
  matrices for $n=100$ (top), $n=400$ (middle) and
  $n = 800$ (bottom), for ALYZ (left) and A09 
  (right), with $\varepsilon = 0.1$.}
\label{fig:A_KLdiv_0p1}
\end{figure}

\renewcommand{\refname}{Additional References}


\begin{thebibliography}{}

\bibitem[\protect\citeauthoryear{Aerts and Wilms}{Aerts and
  Wilms}{2017}]{aerts2017cellwise}
Aerts, S. and I.~Wilms (2017).
\newblock Cellwise robust regularized discriminant analysis.
\newblock {\em Statistical Analysis and Data Mining: The ASA Data Science
  Journal\/}~{\em 10\/}(6), 436--447.

\bibitem[\protect\citeauthoryear{Agostinelli, Leung, Yohai, and
  Zamar}{Agostinelli et~al.}{2015}]{Agostinelli2015}
Agostinelli, C., A.~Leung, V.~J. Yohai, and R.~H. Zamar (2015).
\newblock Robust estimation of multivariate location and scatter in the
  presence of cellwise and casewise contamination.
\newblock {\em Test\/}~{\em 24}, 441--461.

\bibitem[\protect\citeauthoryear{Alfons}{Alfons}{2016}]{Alfons:robustHD}
Alfons, A. (2016).
\newblock {\em robustHD: Robust methods for high-dimensional data}.
\newblock {R} package version 0.5.1, CRAN.

\bibitem[\protect\citeauthoryear{Alqallaf, Van~Aelst, Yohai, and
  Zamar}{Alqallaf et~al.}{2009}]{alqallaf2009}
Alqallaf, F., S.~Van~Aelst, V.~J. Yohai, and R.~H. Zamar (2009).
\newblock Propagation of outliers in multivariate data.
\newblock {\em The Annals of Statistics\/}~{\em 37}, 311--331.

\bibitem[\protect\citeauthoryear{Boudt, Rousseeuw, Vanduffel, and
  Verdonck}{Boudt et~al.}{2020}]{MRCD2020}
Boudt, K., P.~J. Rousseeuw, S.~Vanduffel, and T.~Verdonck (2020).
\newblock The minimum regularized covariance determinant estimator.
\newblock {\em Statistics and Computing\/}~{\em 30}, 113--128.

\bibitem[\protect\citeauthoryear{Cand{\`e}s, Li, Ma, and Wright}{Cand{\`e}s
  et~al.}{2011}]{candes2011}
Cand{\`e}s, E.~J., X.~Li, Y.~Ma, and J.~Wright (2011).
\newblock Robust principal component analysis?
\newblock {\em Journal of the ACM\/}~{\em 58\/}(3), 1--37.

\bibitem[\protect\citeauthoryear{Copt and Victoria-Feser}{Copt and
  Victoria-Feser}{2004}]{copt:missingsMCD}
Copt, S. and M.-P. Victoria-Feser (2004).
\newblock Fast algorithms for computing high breakdown covariance matrices with
  missing data.
\newblock In M.~Hubert, G.~Pison, A.~Struyf, and S.~{Van Aelst} (Eds.), {\em
  Theory and Applications of Recent Robust Methods}, Basel, pp.\  71--82.
  Birkh{\"a}user.

\bibitem[\protect\citeauthoryear{Croux and {\"O}llerer}{Croux and
  {\"O}llerer}{2016}]{croux2016}
Croux, C. and V.~{\"O}llerer (2016).
\newblock Robust and sparse estimation of the inverse covariance matrix using
  rank correlation measures.
\newblock In {\em Recent Advances in Robust Statistics: Theory and
  Applications}, pp.\  35--55. Springer.

\bibitem[\protect\citeauthoryear{Danilov, Yohai, and Zamar}{Danilov
  et~al.}{2012}]{danilov2012}
Danilov, M., V.~J. Yohai, and R.~H. Zamar (2012).
\newblock Robust estimation of multivariate location and scatter in the
  presence of missing data.
\newblock {\em Journal of the American Statistical Association\/}~{\em 107},
  1178--1186.

\bibitem[\protect\citeauthoryear{De~Ketelaere, Hubert, Raymaekers, Rousseeuw,
  and Vranckx}{De~Ketelaere et~al.}{2020}]{DeKetelaere:RT-DetMCD}
De~Ketelaere, B., M.~Hubert, J.~Raymaekers, P.~J. Rousseeuw, and I.~Vranckx
  (2020).
\newblock Real-time outlier detection for large datasets by {RT-DetMCD}.
\newblock {\em Chemometrics and Intelligent Laboratory Systems\/}~{\em 199},
  103957.

\bibitem[\protect\citeauthoryear{Dempster, Laird, and Rubin}{Dempster
  et~al.}{1977}]{dempster1977}
Dempster, A., N.~Laird, and D.~Rubin (1977).
\newblock Maximum likelihood from incomplete data via the {EM} algorithm.
\newblock {\em Journal of the Royal Statistical Society B\/}~{\em 39\/}(1),
  1--22.

\bibitem[\protect\citeauthoryear{Donoho and Huber}{Donoho and
  Huber}{1983}]{donoho1983}
Donoho, D. and P.~Huber (1983).
\newblock The notion of breakdown point.
\newblock In P.~Bickel, K.~Doksum, and J.~Hodges (Eds.), {\em A Festschrift for
  Erich Lehmann}, Belmont, pp.\  157--184. Wadsworth.

\bibitem[\protect\citeauthoryear{Filzmoser, H{\"o}ppner, Ortner, Serneels, and
  Verdonck}{Filzmoser et~al.}{2020}]{filzmoser2020cellwise}
Filzmoser, P., S.~H{\"o}ppner, I.~Ortner, S.~Serneels, and T.~Verdonck (2020).
\newblock Cellwise robust {M} regression.
\newblock {\em Computational Statistics \& Data Analysis\/}~{\em 147}, 106944.

\bibitem[\protect\citeauthoryear{Garc{\'\i}a-Escudero, Rivera-Garc{\'\i}a,
  Mayo-Iscar, and Ortega}{Garc{\'\i}a-Escudero
  et~al.}{2021}]{garcia2021cluster}
Garc{\'\i}a-Escudero, L.-A., D.~Rivera-Garc{\'\i}a, A.~Mayo-Iscar, and
  J.~Ortega (2021).
\newblock Cluster analysis with cellwise trimming and applications for the
  robust clustering of curves.
\newblock {\em Information Sciences\/}~{\em 573}, 100--124.

\bibitem[\protect\citeauthoryear{Gnanadesikan and Kettenring}{Gnanadesikan and
  Kettenring}{1972}]{GK1972}
Gnanadesikan, R. and J.~Kettenring (1972).
\newblock Robust estimates, residuals, and outlier detection with multiresponse
  data.
\newblock {\em Biometrics\/}~{\em 28}, 81--124.

\bibitem[\protect\citeauthoryear{Higham}{Higham}{2002}]{Higham2002}
Higham, N.~J. (2002).
\newblock {Computing the nearest correlation matrix -- a problem from finance}.
\newblock {\em IMA Journal of Numerical Analysis\/}~{\em 22}, 329--343.

\bibitem[\protect\citeauthoryear{Hubert, Debruyne, and Rousseeuw}{Hubert
  et~al.}{2018}]{hubertWIRE-MCD2}
Hubert, M., M.~Debruyne, and P.~J. Rousseeuw (2018).
\newblock {M}inimum {C}ovariance {D}eterminant and extensions.
\newblock {\em Wiley Interdisciplinary Reviews: Computational
  Statistics\/}~{\em 10\/}(3), e1421.

\bibitem[\protect\citeauthoryear{Hubert, Rousseeuw, and Segaert}{Hubert
  et~al.}{2015}]{Hubert:MFOD}
Hubert, M., P.~J. Rousseeuw, and P.~Segaert (2015).
\newblock Multivariate functional outlier detection.
\newblock {\em Statistical Methods \& Applications\/}~{\em 24}, 177--202.

\bibitem[\protect\citeauthoryear{Hubert, Rousseeuw, and Van~den Bossche}{Hubert
  et~al.}{2019}]{hubert2019macropca}
Hubert, M., P.~J. Rousseeuw, and W.~Van~den Bossche (2019).
\newblock Macro{PCA}: An all-in-one {PCA} method allowing for missing values as
  well as cellwise and rowwise outliers.
\newblock {\em Technometrics\/}~{\em 61\/}(4), 459--473.

\bibitem[\protect\citeauthoryear{Hubert, Rousseeuw, and Verdonck}{Hubert
  et~al.}{2012}]{hubert2012DetMCD}
Hubert, M., P.~J. Rousseeuw, and T.~Verdonck (2012).
\newblock A deterministic algorithm for robust location and scatter.
\newblock {\em Journal of Computational and Graphical Statistics\/}~{\em 21},
  618--637.

\bibitem[\protect\citeauthoryear{Katayama, Fujisawa, and Drton}{Katayama
  et~al.}{2018}]{katayama2018robust}
Katayama, S., H.~Fujisawa, and M.~Drton (2018).
\newblock Robust and sparse gaussian graphical modelling under cell-wise
  contamination.
\newblock {\em Stat\/}~{\em 7\/}(1), e181.

\bibitem[\protect\citeauthoryear{Little and Rubin}{Little and
  Rubin}{2020}]{LR2020}
Little, R. and D.~Rubin (2020).
\newblock {\em Statistical analysis with missing data (third edition)}.
\newblock John Wiley and Sons, New York.

\bibitem[\protect\citeauthoryear{Lopuha\"a and Rousseeuw}{Lopuha\"a and
  Rousseeuw}{1991}]{lopuhaa1991breakdown}
Lopuha\"a, H.~P. and P.~J. Rousseeuw (1991).
\newblock Breakdown points of affine equivariant estimators of multivariate
  location and covariance matrices.
\newblock {\em The Annals of Statistics\/}~{\em 19}, 229--248.

\bibitem[\protect\citeauthoryear{Maechler, Rousseeuw, Croux, Todorov,
  R\"uckstuhl, Salibian-Barrera, Verbeke, Koller, Conceicao, and {di
  Palma}}{Maechler et~al.}{2022}]{robustbase}
Maechler, M., P.~Rousseeuw, C.~Croux, V.~Todorov, A.~R\"uckstuhl,
  M.~Salibian-Barrera, T.~Verbeke, M.~Koller, E.~Conceicao, and M.~{di Palma}
  (2022).
\newblock {\em robustbase: Basic robust statistics}.
\newblock {R} package, CRAN.

\bibitem[\protect\citeauthoryear{Maronna and Yohai}{Maronna and
  Yohai}{2008}]{maronna2008robust}
Maronna, R.~A. and V.~J. Yohai (2008).
\newblock Robust low-rank approximation of data matrices with elementwise
  contamination.
\newblock {\em Technometrics\/}~{\em 50\/}(3), 295--304.

\bibitem[\protect\citeauthoryear{{\"O}llerer, Alfons, and Croux}{{\"O}llerer
  et~al.}{2016}]{ollerer2016shooting}
{\"O}llerer, V., A.~Alfons, and C.~Croux (2016).
\newblock The shooting {S}-estimator for robust regression.
\newblock {\em Computational Statistics\/}~{\em 31\/}(3), 829--844.

\bibitem[\protect\citeauthoryear{{\"O}llerer and Croux}{{\"O}llerer and
  Croux}{2015}]{ollerer2015}
{\"O}llerer, V. and C.~Croux (2015).
\newblock Robust high-dimensional precision matrix estimation.
\newblock In {\em Modern Nonparametric, Robust and Multivariate Methods}, eds. K. Nordhausen and S. Taskinen, pp.\
  325--350. Springer.
	
\bibitem[\protect\citeauthoryear{Pedregosa, Varoquaux, Gramfort, Michel,
  Thirion, Grisel, Blondel, Prettenhofer, Weiss, Dubourg, et~al.}{Pedregosa
  et~al.}{2011}]{pedregosa2011scikit}
Pedregosa, F., G.~Varoquaux, A.~Gramfort, V.~Michel, B.~Thirion, O.~Grisel,
  M.~Blondel, P.~Prettenhofer, R.~Weiss, V.~Dubourg, et~al. (2011).
\newblock Scikit-learn: Machine Learning in Python.
\newblock {\em The Journal of Machine Learning Research\/}~{\em 12},
  2825--2830.	

\bibitem[\protect\citeauthoryear{Raymaekers and Rousseeuw}{Raymaekers and
  Rousseeuw}{2021a}]{raymaekers2021fast}
Raymaekers, J. and P.~J. Rousseeuw (2021a).
\newblock Fast robust correlation for high-dimensional data.
\newblock {\em Technometrics\/}~{\em 63\/}(2), 184--198.

\bibitem[\protect\citeauthoryear{Raymaekers and Rousseeuw}{Raymaekers and
  Rousseeuw}{2021b}]{cellHandler}
Raymaekers, J. and P.~J. Rousseeuw (2021b).
\newblock Handling cellwise outliers by sparse regression and robust
  covariance.
\newblock {\em Journal of Data Science, Statistics, and Visualization}, Volume 1, Number 3.

\bibitem[\protect\citeauthoryear{Raymaekers and Rousseeuw}{Raymaekers and
  Rousseeuw}{2022}]{cellWise}
Raymaekers, J. and P.~J. Rousseeuw (2022).
\newblock {\em cellWise: Analyzing Data with Cellwise Outliers}.
\newblock {R} package, CRAN.

\bibitem[\protect\citeauthoryear{Raymaekers and Rousseeuw}{Raymaekers and
  Rousseeuw}{2023}]{challenges}
Raymaekers, J. and P.~J. Rousseeuw (2023).
\newblock Challenges of cellwise outliers.
\newblock {\em arXiv report 2302.02156}.

\bibitem[\protect\citeauthoryear{Rousseeuw}{Rousseeuw}{1984}]{R1984LMS}
Rousseeuw, P.~J. (1984).
\newblock Least median of squares regression.
\newblock {\em Journal of the American Statistical Association\/}~{\em 79},
  871--880.

\bibitem[\protect\citeauthoryear{Rousseeuw}{Rousseeuw}{1985}]{R1985multivariate}
Rousseeuw, P.~J. (1985).
\newblock Multivariate estimation with high breakdown point.
\newblock In W.~Grossmann, G.~Pflug, I.~Vincze, and W.~Wertz (Eds.), {\em
  Mathematical Statistics and Applications}, pp.\  283--297. Reidel.

\bibitem[\protect\citeauthoryear{Rousseeuw and Croux}{Rousseeuw and
  Croux}{1993}]{Qn1993}
Rousseeuw, P.~J. and C.~Croux (1993).
\newblock Alternatives to the median absolute deviation.
\newblock {\em Journal of the American Statistical Association\/}~{\em 88},
  1273--1283.

\bibitem[\protect\citeauthoryear{Rousseeuw and Leroy}{Rousseeuw and
  Leroy}{1987}]{RL1987}
Rousseeuw, P.~J. and A.~Leroy (1987).
\newblock {\em Robust Regression and Outlier Detection}.
\newblock New York: John Wiley.

\bibitem[\protect\citeauthoryear{Rousseeuw and {Van den Bossche}}{Rousseeuw and
  {Van den Bossche}}{2018}]{DDC2018}
Rousseeuw, P.~J. and W.~{Van den Bossche} (2018).
\newblock Detecting Deviating Data Cells.
\newblock {\em Technometrics\/}~{\em 60}, 135--145.

\bibitem[\protect\citeauthoryear{Rousseeuw and {Van Driessen}}{Rousseeuw and
  {Van Driessen}}{1999}]{fastMCD1999}
Rousseeuw, P.~J. and K.~{Van Driessen} (1999).
\newblock A Fast Algorithm for the Minimum Covariance Determinant Estimator.
\newblock {\em Technometrics\/}~{\em 41}, 212--223.

\bibitem[\protect\citeauthoryear{Rousseeuw and {van Zomeren}}{Rousseeuw and
  {van Zomeren}}{1990}]{RvZ1990}
Rousseeuw, P.~J. and B.~C. {van Zomeren} (1990).
\newblock Unmasking multivariate outliers and leverage points.
\newblock {\em Journal of the American Statistical Association\/}~{\em 85},
  633--651.

\bibitem[\protect\citeauthoryear{Schreurs, Vranckx, Hubert, Suykens, and
  Rousseeuw}{Schreurs et~al.}{2021}]{KMRCD2021}
Schreurs, J., I.~Vranckx, M.~Hubert, J.~Suykens, and P.~J. Rousseeuw (2021).
\newblock Outlier detection in non-elliptical data by kernel {MRCD}.
\newblock {\em Statistics and Computing\/}~{\em 31:66}, 1--18.

\bibitem[\protect\citeauthoryear{She, Wang, and Shen}{She
  et~al.}{2022}]{she2021progq}
She, Y., Z.~Wang, and J.~Shen (2022).
\newblock Gaining outlier resistance with progressive quantiles: Fast
  algorithms and theoretical studies.
\newblock {\em Journal of the American Statistical Association 117}, 
  1282--1295.

\bibitem[\protect\citeauthoryear{Su, Tarr, and Muller}{Su
  et~al.}{2021}]{su2021robust}
Su, P., G.~Tarr, and S.~Muller (2021).
\newblock Robust variable selection under cellwise contamination, {\it arXiv
  preprint 2110.12406}.

\bibitem[\protect\citeauthoryear{Tarr, Muller, and Weber}{Tarr
  et~al.}{2016}]{tarr2016}
Tarr, G., S.~Muller, and N.~Weber (2016).
\newblock Robust estimation of precision matrices under cellwise contamination.
\newblock {\em Computational Statistics \& Data Analysis\/}~{\em 93}, 404--420.

\bibitem[\protect\citeauthoryear{Todorov}{Todorov}{2012}]{todorov-rrcov}
Todorov, V. (2012).
\newblock {\em rrcov: Scalable Robust Estimators with High Breakdown Point}.
\newblock {R} package, CRAN.

\bibitem[\protect\citeauthoryear{{Van Aelst}, Vandervieren, and Willems}{{Van
  Aelst} et~al.}{2011}]{VanAelst2011}
{Van Aelst}, S., E.~Vandervieren, and G.~Willems (2011).
\newblock Stahel-{D}onoho estimators with cellwise weights.
\newblock {\em Journal of Statistical Computation and Simulation\/}~{\em
  81\/}(1), 1--27.

\bibitem[Van~der Vaart(2000)]{VdVaart2000}
Van~der Vaart, A. (2000).
\newblock {\em Asymptotic Statistics.} 
\newblock Cambridge University Press.

\end{thebibliography}

\begin{thebibliography}{}
\providecommand{\natexlab}[1]{#1}
\providecommand{\url}[1]{\texttt{#1}}
\expandafter\ifx\csname urlstyle\endcsname\relax
  \providecommand{\doi}[1]{doi: #1}\else
  \providecommand{\doi}{doi: \begingroup \urlstyle{rm}\Url}\fi
  
\bibitem[\protect\citeauthoryear{Aubry, {De Maio}, Marano, and Rosamilia}{Aubry
  et~al.}{2021}]{Aubry2021}
Aubry, A., A.~{De Maio}, S.~Marano, and M.~Rosamilia (2021).
\newblock Structured covariance matrix estimation with missing (complex) data
  for radar applications via expectation-maximization.
\newblock {\em IEEE Transactions on Signal Processing\/}~{\em 69}, 5920--5934.
	
\bibitem[\protect\citeauthoryear{Petersen and Pedersen}{Petersen and
  Pedersen}{2012}]{Petersen2012}
Petersen, K.~B. and M.~S. Pedersen (2012).
\newblock {\em The Matrix Cookbook}.
\newblock Technical University of Denmark.

\bibitem[\protect\citeauthoryear{Sun, Zhou, and Fan}{Sun
  et~al.}{2020}]{Sun2020}
Sun, Q., Zhou, W and Fan, J. (2020).
\newblock Adaptive {H}uber regression.
\newblock {\em Journal of the American Statistical Association 115}, 
  254--265.

\end{thebibliography}
\end{document}